\documentclass[a4paper,UKenglish,cleveref,thm-restate,numberwithinsect]{lipics-v2021}

\usepackage[T1]{fontenc}
\usepackage{graphicx}
\usepackage{xcolor}
\usepackage{mathtools,amsmath,amssymb}
\usepackage{complexity}
\usepackage{physics}
\usepackage{xspace}
\usepackage{csquotes}
\usepackage{bm}
\usepackage{cite}

\crefname{claim}{Claim}{Claims}
\Crefname{claim}{Claim}{Claims}
\crefname{numberedclaim}{Claim}{Claims}
\Crefname{numberedclaim}{Claim}{Claims}
\crefname{observation}{Observation}{Observations}
\Crefname{observation}{Observation}{Observations}

\captionsetup[subfigure]{labelformat=simple}

\hyphenation{aug-men-ta-tion}

\newcommand{\normal}{\mathrm{norm}}
\newcommand{\conn}{\mathrm{conn}}
\newcommand{\compconn}{\mathrm{comp}}
\newcommand{\calE}{\mathcal{E}}
\newcommand{\calF}{\mathcal{F}}
\newcommand{\calO}{\mathcal{O}}
\newcommand{\calV}{\mathcal{V}}
\newcommand{\N}{\mathbb{N}}
\newcommand{\dotcup}{\mathbin{\mathaccent\cdot\cup}}
\newcommand{\lab}[1]{\textnormal{\texttt{#1}}}

\newcommand{\LSvar}[1]{L_{\mathrm{var}}(#1)}
\newcommand{\LSemb}[2]{L_{\mathrm{emb}}(#1,#2)}

\DeclareMathOperator{\skel}{skel}
\DeclareMathOperator{\pert}{pert}

\newcommand{\Cornuejols}{Cornu{\'e}jols\xspace}
\newcommand{\Lovasz}{Lov{\'{a}}sz\xspace}
\newcommand{\Sebo}{Seb{\H{o}}\xspace}

\newcommand{\probname}[1]{\textsc{#1}\xspace}
\newcommand{\GeneralizedFactor}{\probname{Generalized Factor}}
\newcommand{\MaximumFlow}{\probname{Maximum Flow}}


\newtheorem{question}[theorem]{Question}

\hideLIPIcs

\title{Recognition Complexity of Subgraphs of \texorpdfstring{$k$}{k}-Connected Planar Cubic Graphs}
\titlerunning{Recognizing \texorpdfstring{$\bm{k}$}{k}-Connected Subgraphs of Planar Cubic Graphs}

\author{Miriam~Goetze}{Institute of Theoretical Informatics, Karlsruhe Institute of Technology, Karlsruhe, Germany}
{miriam.goetze@kit.edu}
{https://orcid.org/0000-0001-8746-522X}{funded by the Deutsche Forschungsgemeinschaft (DFG, German Research Foundation) -- 520723789}

\author{Paul~Jungeblut}{Institute of Theoretical Informatics, Karlsruhe Institute of Technology, Karlsruhe, Germany}
{paul.jungeblut@kit.edu}
{https://orcid.org/0000-0001-8241-2102}{}

\author{Torsten~Ueckerdt}{Institute of Theoretical Informatics, Karlsruhe Institute of Technology, Karlsruhe, Germany}
{torsten.ueckerdt@kit.edu}
{https://orcid.org/0000-0002-0645-9715}{}

\authorrunning{M. Goetze, P. Jungeblut, and T. Ueckerdt}
\Copyright{Miriam Goetze, Paul Jungeblut, and Torsten Ueckerdt}

\ccsdesc[100]{Theory of computation $\rightarrow$ Design and analysis of algorithms $\rightarrow$ Graph algorithms analysis}

\relatedversion{Preliminary versions of this article appeared at the 30th Annual European Symposium on Algorithms (ESA 2022) and the 40th European Workshop on Computational Geometry (EuroCG 2024).} 
\relatedversiondetails[cite=Goetze2022-CubicSubgraphsESA]{Extended Abstract ESA 2022}{https://doi.org/10.4230/LIPIcs.ESA.2022.62} 
\relatedversiondetails[cite=Goetze2024-CubicSubgraphsEuroCG]{Abstract EuroCG 2024}{https://eurocg2024.math.uoi.gr/data/uploads/paper_40.pdf} 

\nolinenumbers

\keywords{planar cubic graphs, \texorpdfstring{$k$}{k}-connectedness, generalized factors, recognition problem, \NP-hardness}

\begin{document}

\maketitle

\begin{abstract}
    We study the recognition complexity of subgraphs of $k$-con\-nect\-ed planar cubic graphs for~$k \in  \{1, 2, 3\}$.
    We present polynomial-time algorithms to recognize subgraphs of $1$- and $2$-connected planar cubic graphs, both in the variable and fixed embedding setting.
    The main tools involve the \textsc{Generalized (Anti)factor}-problem for the fixed embedding case, and SPQR-trees for the variable embedding case.
    
    Secondly, we prove \NP-hardness of recognizing subgraphs of $3$-connected planar cubic graphs in the variable embedding setting.
\end{abstract}

\section{Introduction}

Whether or not the \textsc{3-Edge-Colorability}-problem is solvable in polynomial time for planar graphs is one of the most fundamental open problems in algorithmic graph theory:

\begin{question}\label{quest:3-edge-colorability}
    Can we decide in polynomial time, whether the edges of a given planar graph can be colored in three colors such that any two adjacent edges receive distinct colors?
\end{question}

In other words, can we decide for a planar graph~$G$ in polynomial time whether $\chi'(G) \leq 3$, where~$\chi'(G)$ denotes the chromatic index of~$G$?
Clearly, it is enough to consider connected planar graphs~$G$ of maximum degree $\Delta(G) = 3$.
If~$G$ is connected, planar and $3$-regular, then by the Four-Color-Theorem~\cite{Appel1977_4Color1,Appel1977_4Color2} and the work of Tait~\cite{Tait1880_Bridgeless} we know that~$G$ is $3$-edge-colorable if and only if~$G$ is $2$-connected.
As we can check $2$-connectivity of subcubic graphs\footnote{A graph~$G$ is \emph{subcubic} if its maximum degree~$\Delta(G)$ is at most~$3$, and for such graphs $2$-vertex-connectivity and $2$-edge-connectivity are equivalent.} in linear time~\cite{Tarjan1974_Bridges}, we hence can decide in polynomial time whether a given $3$-regular planar graph is $3$-edge-colorable.

In particular, subgraphs of bridgeless $3$-regular planar graphs are $3$-edge-colorable.
However, this does not answer \cref{quest:3-edge-colorability} yet (as sometimes wrongly claimed, e.g., in~\cite{Cole2008_EdgeColoring}), because it is for example not clear which planar graphs of maximum degree~$3$ are subgraphs of $2$-connected $3$-regular planar graphs, and whether these can be recognized efficiently.

In this paper we consider the corresponding decision problem:
Given a graph~$G$, is there a $2$-connected $3$-regular planar graph~$H$, such that $G \subseteq H$?
In other words, can~$G$ be augmented, by adding edges and (possibly) vertices, to a supergraph~$H$ of~$G$ that is planar, $3$-regular, and $2$-connected?
For brevity, we call a planar, $3$-regular supergraph~$H$ a \emph{$3$-augmentation} of~$G$.
Motivated by \cref{quest:3-edge-colorability}, we are interested in $2$-connected $3$-augmentations of~$G$.

In fact, we shall consider the decision problems whether a given subcubic planar graph~$G$ admits a $k$-connected $3$-augmentation for each $k \in \{0,1,2,3\}$.
Equivalently, we study the recognition of subgraphs of $k$-connected cubic planar graphs. 
We consider several variants where the input graph~$G$ is given with a fixed embedding~$\calE$ and the desired $3$-augmentation~$H$ must extend~$\calE$, and/or where the input graph~$G$ is already $k'$-connected for some $k' \in \{0,1,2\}$.
Note that if~$G$ is $3$-connected, then~$H = G$ is the only connected~$3$-augmentation. 

\subparagraph*{Our Results.}
We resolve the complexity of finding a $k$-connected $3$-augmentation for a given subcubic planar graph~$G$ (with or without a given embedding), except when~$k=3$ and the embedding of~$G$ is given.
See also \cref{fig:overview} for an overview.

\begin{theorem}
    \label{thm:main_theorem}
    Let~$G$ be a planar graph with maximum degree~$\Delta(G) \leq 3$, let $n$ be the number of vertices of $G$, and let $\calE$ be an embedding of $G$.
    \begin{enumerate}
        \item\label{itm:main_theorem_1con_fixed}
        We can compute, in time~$\calO(n^2)$, a connected $3$-augmentation~$H$ extending~$\calE$, or conclude that none exists.
        \item \label{itm:main_theorem_1con_variable} We can compute, in time~$\calO(n)$, a connected $3$-augmentation~$H$ or conclude that none exists.
        \item\label{itm:main_theorem_2con_fixed}
        We can compute, in time~$\calO(n^4)$, a $2$-connected $3$-augmentation~$H$ extending~$\calE$, or conclude that none exists.
        If~$G$ is connected,~$\calO(n^2)$ time suffices.
        \item\label{itm:main_theorem_2con_variable} We can compute, in time~$\calO(n^2)$, a $2$-connected $3$-augmentation~$H$ or conclude that none exists.
        \item\label{itm:main_theorem_3con_variable}
        It is \NP-complete to decide whether~$G$ admits a $3$-connected $3$-augmentation, even if~$G$ is connected.
    \end{enumerate}
\end{theorem}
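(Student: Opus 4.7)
The plan is to treat the five items with two algorithmic frameworks and one hardness reduction, following the roadmap in the abstract.

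For the fixed-embedding items \ref{itm:main_theorem_1con_fixed} and \ref{itm:main_theorem_2con_fixed}, the key observation is that every new edge (and every subdivision vertex) of the augmentation lies inside some face of~$\calE$, so the problem decouples per face. Within one face $f$, the boundary vertices contribute ``free half-edges'', one per missing unit of degree $3-\deg_G(v)$, arranged cyclically along $\partial f$; the augmentation must pair these up by a non-crossing set of chords, possibly routed through new internal vertices that themselves must end up with degree~$3$. I would encode the completion inside each face as a \GeneralizedFactor-instance on an auxiliary graph built from the face, giving item~\ref{itm:main_theorem_1con_fixed} in $O(n^2)$. For item~\ref{itm:main_theorem_2con_fixed}, additional local forbidden patterns rule out bridges and cut-vertices; these are captured by \probname{Generalized Antifactor} via lists of admissible degrees. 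When $G$ is connected, one antifactor instance per face already certifies global $2$-connectivity, yielding $O(n^2)$. When $G$ is disconnected, one has to additionally guess how distinct components sharing a face are stitched together so that the result is $2$-connected, contributing the extra factor to reach $O(n^4)$.

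For the variable-embedding items \ref{itm:main_theorem_1con_variable} and \ref{itm:main_theorem_2con_variable}, I would exploit the freedom to choose the embedding. Item \ref{itm:main_theorem_1con_variable} reduces to a short structural characterization: with no embedding constraint and the ability to insert new vertices, a planar subcubic graph has a connected $3$-augmentation unless it violates one of a small list of obstructions (parity of total deficiency, pathological tiny components), each checkable in $O(n)$. For item~\ref{itm:main_theorem_2con_variable}, I would build the SPQR-tree of each $2$-edge-connected block of~$G$; each triconnected skeleton has only $O(1)$ spherical embeddings, so the fixed-embedding algorithm of item~\ref{itm:main_theorem_2con_fixed} can be invoked locally, and a bottom-up dynamic program over the tree combines partial solutions while tracking how each virtual edge's pole pair gets completed. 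Since the SPQR-tree has total size $O(n)$ and each node is handled in $O(n)$ time, this yields $O(n^2)$.

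For item~\ref{itm:main_theorem_3con_variable}, membership in \NP\ is immediate, since a $3$-connected $3$-augmentation is a polynomial-size certificate. For hardness I would reduce from a planar variant of $3$-SAT such as \probname{Planar Monotone 3-SAT}, designing planar subcubic gadgets whose $3$-connected $3$-augmentations encode truth values and satisfy clauses. Variable gadgets should admit exactly two canonical $3$-augmentations (corresponding to true and false), ``wire'' gadgets propagate them along edges of the planar incidence layout, and each clause gadget should be augmentable iff at least one incident wire carries the satisfying value; imposing $3$-connectivity is what blocks degenerate completions that would otherwise bypass the intended semantics. The main obstacles I expect are twofold: (i) in item~\ref{itm:main_theorem_2con_fixed} for disconnected input, the \probname{Generalized Antifactor} formulation only enforces $2$-connectivity within one face at a time, so one has to design the DP over face--component incidences carefully and argue that $O(n^2)$ guesses suffice to combine them, giving the $O(n^4)$ bound; and (ii) in item~\ref{itm:main_theorem_3con_variable}, constructing rigid $3$-regular gadgets that remain unambiguous under a variable planar embedding while still faithfully encoding a Boolean formula.
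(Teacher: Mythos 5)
Your plan for items~\ref{itm:main_theorem_2con_variable} and~\ref{itm:main_theorem_3con_variable} matches the paper's route (SPQR-tree dynamic program invoking the fixed-embedding algorithm on skeletons; reduction from \textsc{Planar-Monotone-3SAT} with variable, splitter and clause gadgets), but the fixed-embedding items contain genuine gaps. The central one is the claim that the problem ``decouples per face.'' It does not: a degree-$2$ vertex of~$G$ is incident to two faces and must send its single new edge into exactly one of them, and this choice is what couples the faces. The paper's reduction is accordingly a \emph{single global} \GeneralizedFactor instance on a bipartite graph between the deficient vertices and objects representing faces (or blocks/components within connecting faces), where the constraint $\deg(v)=1$ for each deficient vertex encodes precisely this choice. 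Relatedly, your picture of ``pairing up free half-edges by non-crossing chords'' imports a parity condition that is irrelevant here: since new vertices may be added, any number $\ell\geq 1$ of half-edges in a face can be absorbed by a single hub vertex followed by a wheel-extension, and the paper explicitly notes that the parity condition of the edges-only augmentation problem of Hartmann et al.\ is neither necessary nor sufficient in this setting. Your explanation of the $\calO(n^4)$ versus $\calO(n^2)$ split in item~\ref{itm:main_theorem_2con_fixed} is also not the right mechanism: there is no extra ``guessing'' of how components are stitched; both cases are one \GeneralizedFactor instance, and the difference is only which algorithm applies --- singleton blocks (arising for disconnected input) force degree sets $\{2,\ldots,\deg\}$ with two consecutive forbidden values, so one must fall back from \Sebo's $\calO(nm)$ algorithm to \Cornuejols' $\calO(n^4)$ algorithm for gaps of length~$1$.

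For item~\ref{itm:main_theorem_1con_variable}, the structural characterization you would need is simply that no connected component of~$G$ is $3$-regular (then a $K_4$-with-subdivided-edge gadget absorbs all deficiencies); ``parity of total deficiency'' is not an obstruction, again because new vertices may be added. Finally, in item~\ref{itm:main_theorem_3con_variable} the paper avoids your obstacle (ii) by building the gadgets on a $(\leq 2)$-subdivision of a $3$-connected planar graph with pendant degree-$1$ vertices attached to the subdivision vertices, so the embedding is essentially rigid by Whitney's theorem and the only freedom is the choice, per subdivision vertex, of which incident face receives its pendant edge; $3$-connectivity of the augmentation is then equivalent to no face being chosen exactly once or twice, which is what encodes the clause semantics.
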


\noindent
Note that Statements~\ref{itm:main_theorem_1con_fixed} and~\ref{itm:main_theorem_2con_fixed} concern the fixed embedding setting, while Statements~\ref{itm:main_theorem_1con_variable},\ref{itm:main_theorem_2con_variable} and~\ref{itm:main_theorem_3con_variable} concern the variable embedding setting.

\definecolor{lightgreen}{rgb}{0.565, 0.933, 0.565}
\definecolor{gold}{rgb}{1, 0.843, 0}

\begin{figure}[htb]
    \centering
    \includegraphics[page=2]{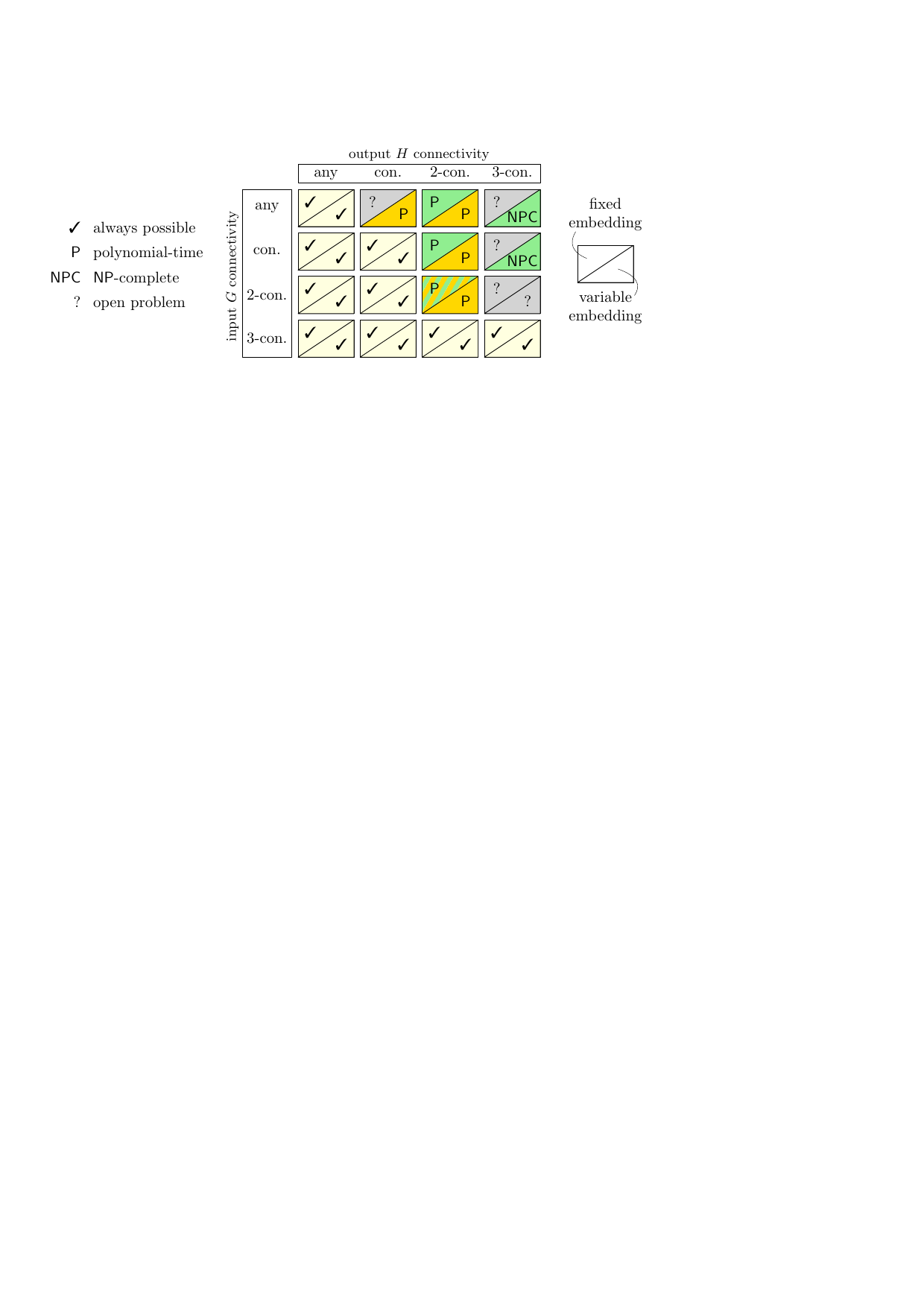}
    \caption{
        Complexity of finding $k$-connected $3$-augmentations (output connectivity $k \in \{0,1,2,3\}$) of $k'$-connected subcubic planar graphs (input connectivity $k' \in \{0,1,2\}$).
    }
    \label{fig:overview}
\end{figure}

Statement~\ref{itm:main_theorem_2con_variable} can be considered the main result of the paper.
Still, we emphasize that this does not answer \cref{quest:3-edge-colorability} yet.
In fact, admitting a $2$-connected $3$-augmentation is a sufficient condition for $3$-edge-colorability; but it is in general not necessary.
For example, $K_{2,3}$ admits a proper $3$-edge-coloring but no $2$-connected $3$-augmentation.
\Cref{quest:3-edge-colorability} remains open and we discuss it and its connection to $3$-augmentations in more detail in \cref{sec:discussion}.

In order to decide whether a given graph~$G$ admits a $3$-augmentation, we may of course assume that~$G$ itself is planar and of maximum degree at most~$3$.
Observe that it is always possible to find a (not necessarily connected) $3$-augmentation of~$G$, for example by adding the small gadget~$K_4^{(1)}$ consisting of~$K_4$ with one subdivided edge to each vertex that has not degree~$3$ yet, as illustrated in \cref{fig:examples_1-conn_aug}.

\begin{observation}
    \label{obs:connected_3-augmentation}
    Every subcubic planar graph~$G$ has a $3$-augmentation~$H$ extending its embedding. 
    If~$G$ is connected, then so is~$H$.
\end{observation}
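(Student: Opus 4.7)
The plan is to give an explicit construction by attaching a small planar gadget at every vertex of insufficient degree. Let $K_4^{(1)}$ denote the graph obtained from $K_4$ by subdividing exactly one edge with a new vertex~$w$; then $w$ has degree~$2$ while the other four vertices have degree~$3$. I fix once and for all a planar embedding of $K_4^{(1)}$ in which $w$ lies on the outer face.

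Given the embedded graph $(G,\calE)$, I iterate over the vertices $v \in V(G)$ with $d := \deg_G(v) < 3$. For each such $v$, I attach $3-d$ fresh copies of $K_4^{(1)}$ by adding, for each copy, a new edge from $v$ to the copy's degree-$2$ vertex~$w$. A single attachment raises $\deg(v)$ by~$1$ and simultaneously raises $\deg(w)$ from~$2$ to~$3$, while the four remaining vertices of the gadget keep their degree~$3$. After processing all deficient~$v$, every vertex of the resulting graph $H$ has degree exactly~$3$, so $H$ is cubic.

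For the embedding, observe that every vertex of $(G,\calE)$ lies on the boundary of some face~$f$. For each deficient $v$ I insert the $3-d$ new attaching edges into the rotation at $v$ (in arbitrary positions, or in the only available position if $d = 0$), draw them as pairwise disjoint arcs from $v$ into~$f$, and then draw each attached copy of $K_4^{(1)}$ inside a small, pairwise disjoint disk in $f$ near the far endpoint of its attaching edge, reusing the fixed planar embedding of $K_4^{(1)}$. This yields a planar embedding $\calE'$ of $H$ that extends~$\calE$.

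Finally, each added gadget is joined to some $v \in V(G)$, so it lies in the same connected component of $H$ as~$v$; consequently the components of $H$ are in bijection with those of $G$ (enlarged by gadgets) and, in particular, $H$ is connected whenever $G$ is. The only point of care is verifying that all the $3-d$ gadgets and their attaching edges fit disjointly into faces incident to~$v$, which is immediate because each gadget can be drawn arbitrarily small inside any given face.
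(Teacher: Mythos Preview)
Your proof is correct and follows exactly the approach sketched in the paper: attach a copy of $K_4^{(1)}$ (a $K_4$ with one subdivided edge) via its degree-$2$ vertex to every vertex of $G$ that is not yet of degree~$3$, placing each gadget into an incident face of the given embedding. The paper states this in one sentence just before the observation; you have merely spelled out the details more carefully.
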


However, this becomes non-trivial if we require the $3$-augmentation~$H$ to be $k$-connected for some $k \in \{1,2,3\}$.
See \cref{fig:examples} for some problematic cases.

\begin{figure}[t]
\centering
\begin{subfigure}[t]{0.3\textwidth}
    \centering
    \includegraphics[page=2]{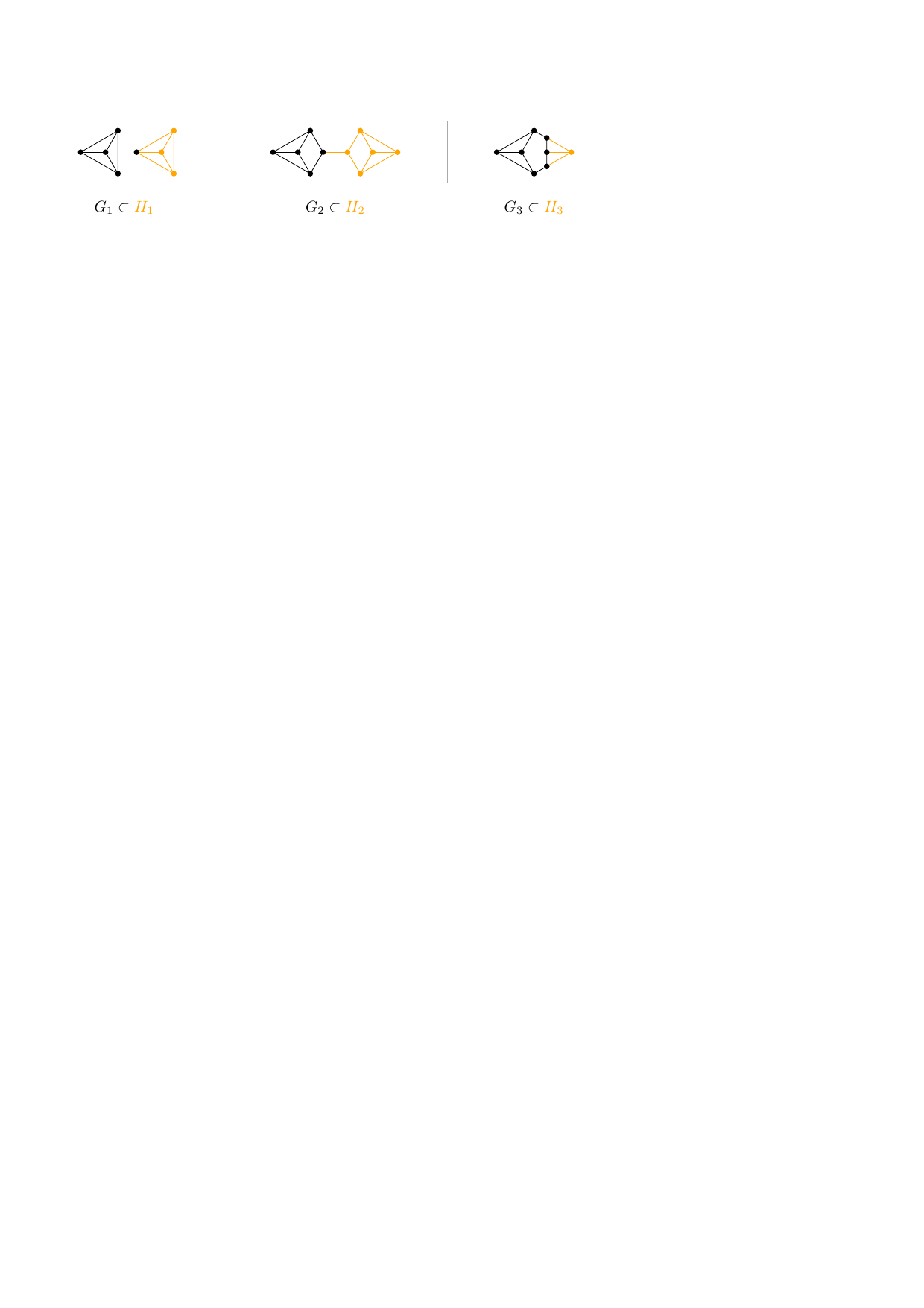}
    \caption{$G_1  \subseteq \textcolor{orange}{H_1}$}
    \label{fig:examples_0-conn_aug}
\end{subfigure}
\hspace{4mm}
\begin{subfigure}[t]{0.3\textwidth}
    \centering
    \includegraphics[page=3]{figures_journal/examples}
    \caption{$G_2  \subseteq \textcolor{orange}{H_2}$}
    \label{fig:examples_1-conn_aug}
\end{subfigure}
\hspace{4mm}
\begin{subfigure}[t]{0.3\textwidth}
    \centering
    \includegraphics[page=4]{figures_journal/examples}
    \caption{$G_3 \subseteq \textcolor{orange}{H_3}$}
    \label{fig:examples_2-conn_aug}
\end{subfigure}
    \centering
    \caption{
        For $k=1,2,3$, the planar subcubic graph~$G_k$ (in black) admits a $(k-1)$-con\-nect\-ed $3$-aug\-men\-ta\-tion~$H_k$ (new vertices and edges in orange), but no $k$-con\-nect\-ed $3$-augmentation.
    }
    \label{fig:examples}
\end{figure}

\subparagraph*{Previous Results.}
Hartmann, Rollin and Rutter~\cite{Hartmann2015_RegularAugmentation} studied, for each~$k, r \in \N$, whether a planar graph~$G$ can be augmented by adding edges (but no vertices!), to a $k$-connected $r$-regular planar graph~$H$.
In particular, for~$r = 3$, they show that the problem is \NP-complete in the variable embedding setting for all $k \in \{0,1,2,3\}$, as well as in the fixed embedding setting when~$k = 3$.
For the remaining cases of fixed embedding and $k \in \{0,1,2\}$ they present a polynomial-time algorithm.

\begin{remark}
    In fact, several of their concepts and techniques \cite{Hartmann2015_RegularAugmentation} are very similar to ours.
    In case of $G$ having a fixed embedding $\calE$, any $3$-augmentation $H$ (with or without new vertices) extending $\calE$ induces an assignment of each new edge $e$ that is incident to an ``old'' vertex of~$G$ to the face of $\calE$ that contains $e$.
    New edges at vertices of $G$ are called \emph{free valencies} \cite{Hartmann2015_RegularAugmentation}.
    
    The authors \cite{Hartmann2015_RegularAugmentation} present conditions of this assignment that are necessary and sufficient for a connected or $2$-connected $3$-augmentation \emph{without new vertices}. 
    (These conditions also allow for a polynomial-time algorithm to find such an assignment.)
    Their \emph{matching condition} and \emph{planarity condition} become obsolete in our setting.
    However, their \emph{connectivity condition} and \emph{biconnectivity condition} demand roughly twice as many free valencies to be assigned to a face with several connected components.
    (Intuitively, these components must be strung together in~\cite{Hartmann2015_RegularAugmentation}, while we can do a star-like connection.)
    Most crucially, their \emph{parity condition}, which requires the number of free valencies assigned to each face to be even, is no longer necessary nor sufficient in our setting.
    It is for example violated in every example in \cref{fig:examples}.
    
    Moreover, one might be tempted to find a connected or $2$-connected $3$-augmentation in our setting (with new vertices allowed), with fixed embedding, by preprocessing the input by inserting extra vertices, so as to always fulfil the parity condition (and then handle the connectivity or biconnectivity condition somehow).
    A reasonable attempt would be to subdivide some edges in the input graph.
    However, this might turn a No-instance into a Yes-instance, as shown for example in \cref{fig:problematic-example}.

    \begin{figure}[ht]
        \centering
        \begin{subfigure}[t]{0.4\textwidth}
            \centering
            \includegraphics[page=2]{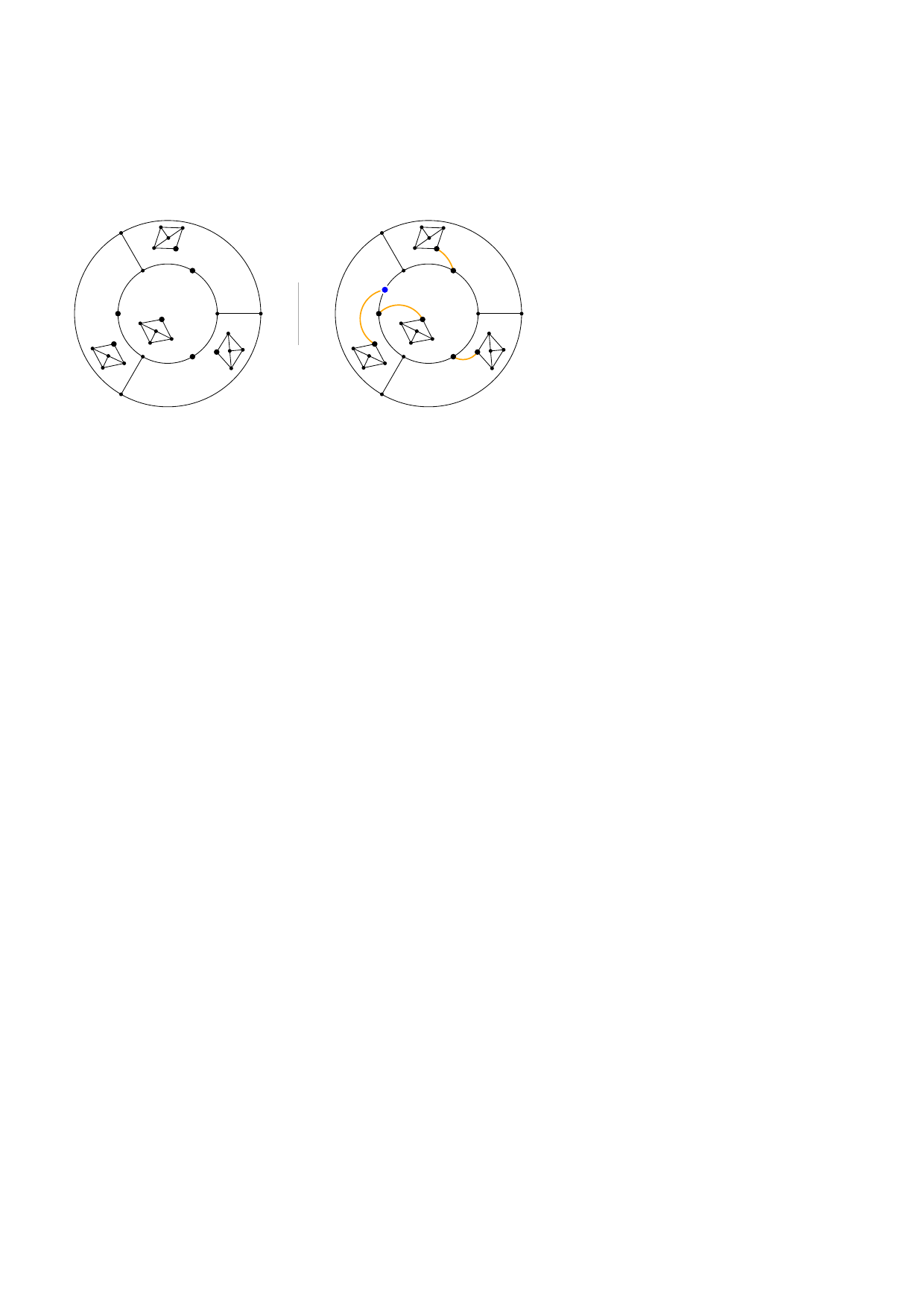}
            \caption{}
            \label{fig:problematic-example_left}
        \end{subfigure}
        \hspace{4mm}
        \begin{subfigure}[t]{0.4\textwidth}
            \centering
            \includegraphics[page=3]{figures_journal/problematic-example}
            \caption{}
            \label{fig:problematic-example_right}
        \end{subfigure}
        \caption{\subref{fig:problematic-example_left} A graph $G$ with no connected $3$-augmentation extending its embedding.
        \subref{fig:problematic-example_right} After adding an extra degree-$2$ vertex (blue) to $G$, there is a connected $3$-augmentation.}
        \label{fig:problematic-example}
    \end{figure}

    Finally, for $k \in \{0,1,2\}$, finding a $k$-connected $3$-augmentation in the variable embedding setting is in \P~\cite{Goetze2022-CubicSubgraphsESA}, while the version without new vertices is \NP-complete~\cite{Hartmann2015_RegularAugmentation}.
    So to summarize, there is probably no direct reduction between the problem of augmenting by only adding edges and the problem of augmenting by adding vertices and edges.\hfill \qedsymbol
\end{remark}

Let us mention a few more examples from the rich and diverse area of augmentation problems.
Eswaran and Tarjan~\cite{Eswaran1976_AugmentationProblems} pioneered the systematic investigation of augmentation problems.
They presented algorithms to find in $\mathcal{O}(\abs{V(G)}+\abs{E(G)})$ a smallest number of edges whose addition to a given (not necessarily planar) graph $G$ results in a $2$-connected respectively $2$-edge-connected graph, while the weighted versions of either problem are \NP-complete.
If we additionally require the result to be planar, already both unweighted problems are \NP-complete, the same holds for the fixed embedding setting ~\cite{Kant1991_PlanarAugmentation,Rutter2012_Wolff_2EdgePlanarAugmentation}.
Other problems of augmenting to a planar graph consider augmenting to a grid graph~\cite{Bhatt1987_GridSubgraph}, or triangulating while minimizing the maximum degree~\cite{Kant1997_TriangulatingMaxDegree,Fraysseix1994_Augmentation}, avoiding separating triangles~\cite{Biedl1997_4ConnectedTriangulation}, creating a Hamiltonian cycle~\cite{DiGiacomo2010_HamiltonianAugmentation}, or resulting in a chordal graph~\cite{Kratochvil2012_Planar3Tree}, just to name a few.

\section{Preliminaries}
\label{sec:preliminaries}

All graphs considered here are finite, undirected, and contain no loops but possibly multi-edges.
We denote the degree of a vertex~$v$ in a graph~$G$ by~$\deg_G(v)$, the minimum degree in~$G$ by~$\delta(G)$, and the maximum degree by~$\Delta(G)$.
A graph~$G$ is \emph{$d$-regular}, for some non-negative integer~$d$, if we have $\delta(G) = \Delta(G) = d$.

A graph~$G$ is \emph{$k$-connected} if $G - S$ is connected for every set~$S \subseteq V(G)$ of at most~$k-1$ vertices in~$G$.
Similarly,~$G$ is \emph{$k$-edge-connected} if $G - S$ is connected for every set~$S \subseteq E(G)$ of at most~$k-1$ edges in~$G$.
We denote by~$\theta(G)$ the largest~$k$ for which~$G$ is $k$-edge-connected.
If~$G$ has maximum degree at most~$3$, then $G$ is $k$-connected if and only if~$\theta(G) \geq k$.
A \emph{bridge} in a graph~$G$ is an edge~$e$ whose removal increases the number of connected components, i.e., $G-e$ has strictly more components than~$G$.
Equivalently, $e$ is a bridge if~$e$ is not contained in any cycle of~$G$.
A \emph{bridgeless} graph is one that contains no bridge.
Note that a bridgeless graph may be disconnected.
Yet, for connected graphs of maximum degree~$3$, being bridgeless and being $2$-connected is equivalent.


A \emph{planar embedding}~$\calE$ of a (planar) graph $G$ is (in a sense that we need not make precise here) an equivalence class of crossing-free drawings of~$G$ in the plane.
In particular, a planar embedding determines the set~$F$ of all \emph{faces}, the distinguished \emph{outer face}, the clockwise ordering of incident edges around each vertex and the \emph{boundary} of each face as a set of \emph{facial walks}, each being a clockwise ordering of vertices and edges (with repetitions allowed).
The edges and vertices incident to the outer face are called \emph{outer edges} and \emph{outer vertices}, while all others are \emph{inner edges} and \emph{inner vertices}.
For every embedding~$\calE$ of~$G$ we define the \emph{flipped embedding~$\calE'$} to be the embedding obtained from~$\calE$ by reversing the clockwise order of incident edges at each vertex.
This operation changes neither the set of faces nor the outer face.
Whitney's Theorem~\cite{Whitney1933_UniqueEmbedding} states that a $3$-connected planar graph~$G$ has a unique embedding (up to the choice of the outer face and flipping).

\subparagraph*{Generalized Factors.}
Let~$H$ be a graph with a set $B(v) \subseteq \{0, \ldots, \deg_H(v)\}$ assigned to each vertex~$v \in V(H)$.
Following \Lovasz, a spanning subgraph~$G \subseteq H$ is called a \emph{$B$-factor} of~$H$ if and only if $\deg_G(v) \in B(v)$ for every vertex~$v \in V(H)$~\cite{Lovasz1972_Factorization}.
Deciding whether a graph~$H$ admits a $B$-factor is known as the \GeneralizedFactor problem.
In general, the \GeneralizedFactor problem is \NP-complete~\cite{Lovasz1972_Factorization}.
Still, for certain well-behaved sets~$B(\cdot)$, the problem becomes polynomial-time solvable.
A set~$B(v)$ is said to have a \emph{gap of length~$\ell \geq 1$} if there is an integer $i \in B(v)$ such that $i + 1, \ldots, i + \ell \notin B(v)$, and $i + \ell + 1 \in B(v)$.

\begin{theorem}[{\Cornuejols~\cite[Section~$3$]{Cornuejols1988_GeneralFactors}}]
\label{thm:generalized-factor-cornuejols}
Let~$H$ be a graph with a set $B(v) \subseteq \{0, \ldots, \deg_H(v)\}$ assigned to each vertex $v \in V(H)$.
If all gaps of each~$B(v)$ have length~$1$, then a $B$-factor can be computed in time~$\calO\bigl(\abs{V(H)}^4\bigr)$.
\end{theorem}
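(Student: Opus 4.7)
My plan is to reduce the $B$-factor problem to finding a perfect matching in an auxiliary graph $H^*$, following the classical gadget approach of Tutte and \Lovasz extended by \Cornuejols to handle length-one gaps.

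First I would subdivide each edge $uv \in E(H)$ by a \emph{middle vertex} $m_{uv}$, and then replace each vertex $v \in V(H)$ with a \emph{degree gadget} $\Gamma_v$ that has $\deg_H(v)$ distinguished \emph{contact vertices}, one per edge of $H$ incident to $v$. Each contact vertex in $\Gamma_v$ keeps its edge to the corresponding middle vertex. The gadget is engineered so that in any perfect matching of $H^*$, the number of contact vertices of $\Gamma_v$ that are matched to their middle vertex (as opposed to being matched internally inside $\Gamma_v$) lies in $B(v)$, and conversely each value in $B(v)$ is attainable. Granted this, $H$ admits a $B$-factor if and only if $H^*$ has a perfect matching, and one reads off the $B$-factor $G \subseteq H$ by taking $uv \in E(G)$ exactly when $m_{uv}$ is matched to the contact vertex in $\Gamma_v$ (equivalently in $\Gamma_u$).

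Constructing $\Gamma_v$ with the correct realizability set is the heart of the proof and the sole place where the hypothesis on gap lengths enters. When $B(v)$ is an interval $[a,b]$, a standard Tutte-style gadget built from two sides of a complete bipartite graph suffices. To forbid a single value $i+1$ sitting between two consecutive allowed values $i, i+2 \in B(v)$, one attaches a small \emph{parity-forcing subgraph}, for instance a triangle or short odd path glued to one or two contact vertices, whose local matchings force the external count to skip exactly the value $i+1$. The main obstacle is to verify that the composite gadget realizes \emph{exactly} $B(v)$, neither more nor less; longer gaps provably cannot be expressed by any polynomial-size matching gadget, which is precisely why \Cornuejols' theorem stops at gap length one.

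Finally, I would run Edmonds' blossom algorithm on $H^*$ to find a perfect matching or correctly report that none exists. Since each gadget contributes only $\mathcal{O}(\deg_H(v))$ vertices, we have $|V(H^*)| = \mathcal{O}(|V(H)| + |E(H)|)$, and a careful analysis of the blossom algorithm together with the bound on the number of augmentations required yields the claimed $\mathcal{O}(|V(H)|^4)$ running time. The $B$-factor is extracted directly from the matching by the recipe above.
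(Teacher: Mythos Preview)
The paper does not prove this theorem at all; it merely cites it from \Cornuejols' paper and uses it as a black box. So there is no ``paper's own proof'' to compare against.

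As for your sketch, the high-level strategy---reduce to perfect matching via per-vertex gadgets with $\deg_H(v)$ contact vertices---is indeed \Cornuejols' approach, and the correspondence between $B$-factors of~$H$ and perfect matchings of~$H^*$ is set up correctly. Two points, however, are not solid. First, the gadget you describe for enforcing a single forbidden value~$i{+}1$ between allowed values~$i$ and~$i{+}2$ is too vague (``a triangle or short odd path glued to one or two contact vertices'') to verify that it realizes exactly~$B(v)$; \Cornuejols' construction is specific and its correctness is the crux of the argument, so this cannot be left at the level of a suggestion. Your aside that longer gaps ``provably cannot be expressed by any polynomial-size matching gadget'' is also not the right explanation; the obstruction is that \GeneralizedFactor with a gap of length~$2$ is already \NP-complete (\Lovasz).

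Second, the running-time claim does not follow from what you wrote. With gadgets of size $\mathcal{O}(\deg_H(v))$ you get $\lvert V(H^*)\rvert \in \mathcal{O}(\lvert E(H)\rvert)$, which for dense~$H$ is $\Theta(\lvert V(H)\rvert^2)$. Running Edmonds' algorithm on~$H^*$ then costs at least $\Theta(\lvert V(H^*)\rvert^{3})$ in the standard analysis, i.e.\ $\Theta(\lvert V(H)\rvert^{6})$, not $\mathcal{O}(\lvert V(H)\rvert^{4})$. \Cornuejols obtains the $\mathcal{O}(\lvert V(H)\rvert^{4})$ bound by a dedicated augmenting-walk argument on the original graph rather than by a na\"ive matching call on~$H^*$; your final paragraph would need to either reproduce that analysis or invoke it explicitly.
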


We say that there are \emph{no two consecutive forbidden degrees} for a vertex~$v \in V(H)$ if for all~$i, i+1 \in \{0, \dots, \deg_H(v)\}$ we have $i \in B(v)$ or $i+1 \in B(v)$.
Under this slightly stronger condition, an algorithm by \Sebo yields a better runtime for the \GeneralizedFactor problem.

\begin{theorem}[{\Sebo~\cite[Section~$3$]{Sebo1993_Antifactors}}]
    \label{thm:generalized-factor-sebo}
    Let~$H$ be a graph with a set $B(v) \subseteq \{0, \ldots, \deg_H(v)\}$ assigned to each vertex $v \in V(H)$.
    If no two consecutive degrees are forbidden for any vertex,
    then we can compute a $B$-factor in time $\mathcal{O}(|V(H)| \cdot |E(H)|)$, or conclude that no such exists.
\end{theorem}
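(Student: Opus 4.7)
The plan is to reduce the $B$-factor problem on $H$ to finding a perfect matching in an auxiliary graph $H'$ of size linear in $|V(H)| + |E(H)|$, and then to invoke a fast perfect-matching algorithm for general graphs. Reductions of this flavour go back to Tutte's classical reduction of $f$-factors to matchings and are the natural strategy whenever the allowed-degree sets are structured enough to admit small local gadgets.

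For each vertex $v \in V(H)$, I would build a \emph{vertex gadget} $\Gamma_v$ on $\calO(\deg_H(v))$ vertices, containing one distinguished \emph{port} $p_e$ for every edge $e$ incident to $v$ in~$H$. Each edge $uv \in E(H)$ is replaced by a short connector linking the port of $\Gamma_u$ corresponding to $uv$ with the port of $\Gamma_v$ corresponding to $uv$. I want $\Gamma_v$ to have the property that it admits a perfect matching in which exactly a subset $S$ of its ports is left unmatched internally (so that those ports are saturated by their connectors) if and only if $|S| \in B(v)$. A perfect matching of the entire $H'$ then corresponds bijectively to a $B$-factor of $H$. The condition that no two consecutive degrees are forbidden is precisely what makes such a $\Gamma_v$ realisable in linear size: one can decompose $B(v)$ into runs of consecutive allowed values separated by single-element gaps, and each run can be encoded by a small path-and-triangle construction whose parity can be flipped locally at each gap.

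Summing over all $v$, the auxiliary graph $H'$ has $\calO(|V(H)| + |E(H)|)$ vertices and edges. Plugging in an $\calO(n \cdot m)$ general-graph perfect-matching algorithm (for example Gabow's) then yields the claimed $\calO(|V(H)|\cdot|E(H)|)$ running time; if $H'$ has no perfect matching, we conclude that $H$ admits no $B$-factor. The main obstacle, and the essential improvement of \Sebo's construction over the earlier $\calO(n^4)$ algorithm of \Cornuejols from \cref{thm:generalized-factor-cornuejols}, lies in designing $\Gamma_v$ so that it is simultaneously faithful to $B(v)$ and of strictly linear size, so that no additional polynomial blow-up is incurred in either the reduction or the subsequent matching computation. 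Once the gadget is constructed, correctness of the reduction and the runtime bound both follow routinely from standard facts about general matching.
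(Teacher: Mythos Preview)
The paper does not prove this statement at all: \cref{thm:generalized-factor-sebo} is quoted as a black-box result from \Sebo's paper~\cite{Sebo1993_Antifactors}, with no proof or proof sketch given. There is therefore nothing in the paper to compare your proposal against.

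That said, your sketch is a reasonable outline of the standard approach to results of this type (and indeed of \Sebo's actual argument): reduce to perfect matching via linear-size vertex gadgets and then apply an $\calO(nm)$ matching algorithm. As written it is only a plan, not a proof---you have not actually exhibited the gadget $\Gamma_v$ or verified that the ``no two consecutive forbidden degrees'' hypothesis is exactly what is needed to keep its size linear---but that is appropriate here, since the paper treats the theorem as a citation rather than something to be reproved.
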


\subparagraph{SPQR-Tree.}
The \emph{SPQR-tree} is a tree-like data structure that compactly encodes all planar embeddings of a $2$-connected planar graph.
It was introduced by Di Battista and Tamassia~\cite{DiBattista1996_SPQR} and can be computed in linear time~\cite{Gutwenger2001_SPQRLinear}.
Its precise definition includes quite a number of technical terms, of which we define the crucial ones below.
This makes our exposition self-contained, while also ensuring the established terminology for experienced readers.
We give an illustrating example in \cref{fig:SPQR-example}.

The SPQR-tree of a $2$-connected planar graph $G$ is a rooted tree $T$, where each vertex~$\mu$ of~$T$ is associated to a multigraph~$\skel(\mu)$ that is called the \emph{skeleton of~$\mu$}.
This multigraph $\skel(\mu)$ must be of one of four types determining whether~$\mu$ is an S-, a P-, a Q- or an R-vertex:
\begin{itemize}
    \item S-vertex: $\skel(\mu)$ is a simple cycle.
    \item P-vertex: $\skel(\mu)$ consists of two vertices and at least three parallel edges.
    \item Q-vertex: $\skel(\mu)$ consists of two vertices with two parallel edges.
    \item R-vertex: $\skel(\mu)$ is $3$-connected.
\end{itemize}
Some of the edges of the skeletons can be marked as \emph{virtual} edges.
An edge~$e = \mu\nu$ of the SPQR-tree~$T$ corresponds to two virtual edges, exactly one in~$\skel(\mu)$ and one in~$\skel(\nu)$.
Conversely, each virtual edge corresponds to exactly one tree edge of~$T$ in this way.
We refer again to \cref{fig:SPQR-example} for an example.

Under above conditions, the defining property of the SPQR-tree~$T$ is that~$G$ can be obtained by \emph{gluing} along the virtual edges:
For each tree edge~$e = \mu\nu$, the skeletons~$\skel(\mu)$ and~$\skel(\nu)$ are identified at the corresponding endpoints of the two virtual edges associated to~$e$, afterwards the virtual edges are removed.

We additionally require that no two S-vertices and no two P-vertices are adjacent in~$T$, as otherwise the skeletons of two such vertices can be merged into the skeleton of a new vertex of the same type.
Further, exactly one of the two parallel edges in a Q-vertex is a virtual edge while S-, P- and R-vertices contain only virtual edges.
Under these conditions the SPQR-tree of~$G$ is unique.
There is exactly one Q-vertex per edge in~$G$ and these form the leaves of  the SPQR-tree.
The inner S-, P- and R-vertices correspond more or less\footnote{
    In fact they correspond to so-called \emph{split pairs}.
    However, we omit their formal discussion, as it is not needed here.
} to the separation pairs (that is, pairs of vertices forming a cut set) of~$G$~\cite{DiBattista1996_SPQR}.

Assume that an arbitrary vertex~$\rho$ of~$T$ is fixed as the root.
For some vertex~$\mu$ in~$T$ let~$\pi$ be its parent.
Further, let~$u,v$ be the endpoints of the virtual edge in~$\skel(\mu)$ associated with the tree edge~$\mu\pi$ in~$T$.
Then the graph obtained by gluing~$\skel(\mu)$ with all skeletons in its subtree and without the virtual edge~$uv$ is called the \emph{pertinent graph} of~$\mu$ and denoted by~$\pert(\mu)$.
Note that~$\pert(\mu)$ is always connected.

\begin{figure}[tb]
    \centering
    \begin{subfigure}[t]{0.2\textwidth}
        \centering
        \includegraphics[page=2]{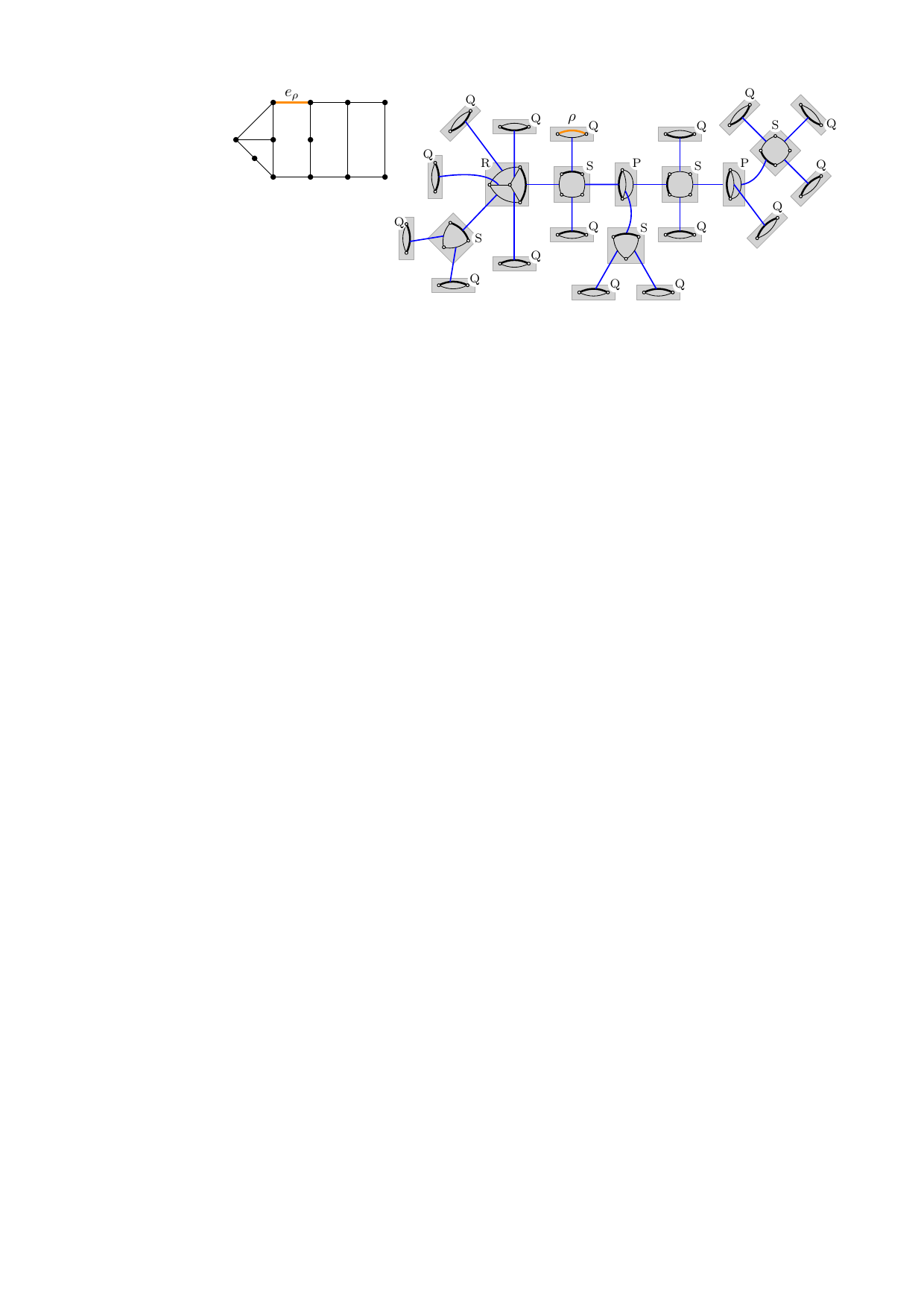}
        \caption{}
        \label{fig:SPQR-example-graph}
    \end{subfigure}
    \hspace{10mm}
    \begin{subfigure}[t]{0.7\textwidth}
        \centering
        \includegraphics[page=3, scale=0.9]{figures_journal/SPQR-example.pdf}
        \caption{}
        \label{fig:SPQR-example-SPQR-tree}
    \end{subfigure}
    \caption{
        A graph with an edge~$e_\rho$ \subref{fig:SPQR-example-graph} and its SPQR-tree rooted at the Q-vertex~$\rho$ corresponding to~$e_\rho$ \subref{fig:SPQR-example-SPQR-tree}.
        Each tree node~$\mu$ shows the skeleton~$\skel(\mu)$ in which the virtual edge to its parent is shown thicker.
        The (blue) tree edges indicate the associated pairs of virtual edges.
    }
    \label{fig:SPQR-example}
\end{figure}

\subparagraph{SPQR-Tree and Planar Embeddings.}
If the SPQR-tree~$T$ is rooted at a Q-vertex~$\rho$ corresponding to an edge~$e_\rho$ of~$G$, then~$T$ represents all planar embeddings of~$G$ in which~$e_\rho$ is an outer edge~\cite{DiBattista1996_SPQR}.
When~$G$ is constructed by gluing corresponding virtual edges, one has the following choices on the planar embedding:
\begin{itemize}
    \item Whenever the corresponding virtual edges of an S-, P- or R-vertex~$\mu$ and its parent are glued together, this leaves two choices for the planar embedding:
    Having decided for an embedding~$\calE_\mu$ of $\pert(\mu)$ already, we can insert~$\calE_\mu$ or the flipped embedding~$\calE'_\mu$.
    \item The parallel virtual edges of a P-vertex~$\mu$ associated to virtual edges of children can be permuted arbitrarily.
    Every permutation leads to a different planar embedding of~$\skel(\mu)$.
    \item Gluing at the virtual edge of a Q-vertex~$\mu$ replaces the virtual edge~$uv$ by the ``real'' edge~$uv$ in~$G$.
    This has no effect on the embedding.
\end{itemize}
Let~$\calE$ be a planar embedding of~$G$ having~$e_\rho$ as an outer edge.
Further, let~$\mu$ be an inner vertex of the SPQR-tree and~$u_\mu, v_\mu$ be the endpoints of the virtual edge in~$\skel(\mu)$ corresponding to the parent edge of~$\mu$ in~$T$.
Lastly, let~$\calE_{\mu}$ be the restriction of~$\calE$ to~$\pert(\mu)$ and let~$f_{\mu}^o$ be the outer face of~$\calE_{\mu}$.
As~$e_\rho$ is an outer edge of~$\calE$, it follows that~$u_\mu$ and~$v_\mu$ are outer vertices in~$\calE_\mu$.
The $u_\mu v_\mu$-path in~$\pert(\mu)$ having~$f_{\mu}^o$ to its left (right) is the \emph{left (right) outer path} of~$\calE_{\mu}$.
Lastly, we define the \emph{left (right) outer face} of~$\calE_{\mu}$ inside~$\calE$ to be the face of~$\calE$ left (right) of the left (right) outer path of~$\calE_{\mu}$.

\section{2-Connected 3-Augmentations for a Fixed Embedding}
\label{sec:2con_3aug_fixed_embedding}

We consider the $3$-augmentation problem for arbitrary input graphs~$G$ and $2$-con\-nect\-ed output graphs~$H$, corresponding to the third column of the table in \cref{fig:overview}.
For the fixed embedding setting here, we present a quartic-time algorithm which yields the following.

\begin{theorem}
    \label{thm:2con_fixed}
    Let~$G$ be a planar $n$-vertex graph with $\Delta(G) \leq 3$ and an embedding~$\calE$.
    We can compute, in time~$\calO(n^4)$, a $2$-connected $3$-augmentation~$H$ of~$G$ extending~$\calE$, or conclude that none exists.
    If~$G$ is connected, then time $\calO(n^2)$ suffices.
\end{theorem}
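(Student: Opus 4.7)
The plan is to reduce the problem to the \GeneralizedFactor problem and then apply the algorithms of \Cornuejols and \Sebo (\cref{thm:generalized-factor-cornuejols,thm:generalized-factor-sebo}). In the fixed embedding setting, every $3$-augmentation $H \supseteq G$ extending~$\calE$ arises from two pieces of data: (i) for each vertex $v$ with $d := \deg_G(v) < 3$, an assignment of its $3-d$ \emph{free valencies} to faces of~$\calE$ incident to~$v$, and (ii) inside each face~$f$ that receives $k_f$ free valencies in total, a planar $3$-regular \emph{plug} (consisting of new vertices and edges) using these valencies as its interface. The first key step is to prove a local characterization of feasibility: writing~$c_f$ for the number of connected components of~$\partial f$, I claim that a plug in~$f$ yielding a $2$-connected overall graph exists if and only if $k_f \in B(f)$, where $B(f) = \{0,2,3,4,\ldots\}$ if $c_f = 1$ and $B(f) = \{2c_f, 2c_f+1, \ldots\}$ if $c_f \geq 2$. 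Intuitively, each component of $\partial f$ must be attached to the plug by at least two edges to avoid creating a bridge, while $K_4^{(1)}$-type gadgets absorb parity mismatches in the plug construction.

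I then build an auxiliary bipartite multigraph~$A$ whose two parts are the vertices of~$G$ having degree less than~$3$ (or, refined, their individual free valencies) and the faces of~$\calE$, with edges encoding incidences at appropriate multiplicities. Setting $B(v) = \{3 - \deg_G(v)\}$ on the vertex side and $B(f)$ as above on the face side, a $B$-factor of~$A$ corresponds exactly to a valency assignment that admits a $2$-connected $3$-augmentation extending~$\calE$. Each $B(v)$ is a singleton and each $B(f)$ has at most one gap, of length~$1$ (between $0$ and $2$, when $c_f = 1$; no gap at all when $c_f \geq 2$). Since $|V(A)| = \calO(n)$ and $|E(A)| = \calO(n)$ by Euler's formula, \cref{thm:generalized-factor-cornuejols} computes a $B$-factor (or certifies that none exists) in time $\calO(n^4)$. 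If~$G$ is connected then every face has $c_f = 1$, so the only forbidden degree on the face side is~$1$; modelling the vertex side by a separate node per free valency (each with $B = \{1\}$) ensures that no two consecutive degrees are ever forbidden in~$A$, so \cref{thm:generalized-factor-sebo} applies and yields a $B$-factor in time $\calO(|V(A)| \cdot |E(A)|) = \calO(n^2)$. In either case, the desired augmentation~$H$ is assembled from the $B$-factor by explicitly inserting the corresponding plug into each face, in linear time.

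The hardest step is the local characterization of~$B(f)$: namely, that \emph{every} $k_f \in B(f)$ admits a $2$-connected plug in~$f$, and conversely that $k_f \notin B(f)$ inevitably forces a bridge or cut vertex in the resulting~$H$. This requires a careful case analysis over the parity of~$k_f$, over degenerate facial walks that revisit a vertex or an edge, and over isolated vertices of~$G$ (which appear as single-component face boundaries with three free valencies concentrated at one point). Once this characterization is established, correctness of the reduction and both complexity bounds follow directly from \cref{thm:generalized-factor-cornuejols,thm:generalized-factor-sebo}.
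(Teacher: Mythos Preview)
Your reduction to \GeneralizedFactor is the right overall strategy---it is also what the paper does---but your local characterization of the allowable counts~$B(f)$ is wrong, and with it the reduction is unsound.

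The issue is that $B(f)$ cannot depend only on the number~$c_f$ of connected components of the facial boundary. Two concrete failures:
\begin{itemize}
    \item \emph{Bridges.} Take two copies of~$K_4$, subdivide one edge in each, and join the two subdivision vertices by an edge~$e$. The resulting graph~$G$ is connected, planar, and $3$-regular; every face has~$c_f = 1$; and there are no free valencies at all. Your auxiliary graph~$A$ has an empty vertex side, the empty subgraph is trivially a $B$-factor, and you output~$H = G$. But~$e$ is a bridge, so~$G$ admits no $2$-connected $3$-augmentation. More generally, whenever a face is incident to both sides of a bridge of~$G$, assigning~$k_f = 0$ to that face is fatal even though~$c_f = 1$.
    \item \emph{Distribution among components.} When~$c_f \geq 2$, your constraint~$k_f \geq 2c_f$ bounds only the \emph{total} number of valencies assigned to~$f$; nothing in the $B$-factor prevents all of them from landing on a single boundary component. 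For instance, embed a triangle inside a hexagon. Sending all three triangle-valencies to the inner triangular face, four hexagon-valencies to the annular face, and the remaining two to the outer face gives~$k_f \in B(f)$ for every~$f$, yet no plug in the annulus can reach the triangle, and the result is disconnected.
\end{itemize}

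What you actually need is to track \emph{which block} of the boundary subgraph~$G_f$ each valency lands in, not merely the per-face total. The paper achieves this by first preprocessing so that~$\delta(G) \geq 2$ (\cref{lem:preprocessing}), and then, for every ``connecting'' face~$f$ (one that sees several components of~$G$ or both sides of a bridge), putting one node into~$\calF$ for each \emph{block} of~$G_f$ rather than one node for~$f$ itself. The degree lower bound on such a node is~$2$ for a singleton block, $1$ for a leaf block, and~$0$ for an inner block of the block-cut-forest of~$G_f$. This is precisely what forces a new edge across every bridge of~$G$ and at least two new edges into every isolated boundary piece. With this refinement all gaps still have length at most~$1$, so \Cornuejols' algorithm applies; and when~$G$ is connected there are no singleton blocks, so \Sebo's faster algorithm applies---yielding the stated running times.
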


This corresponds to Statement~\ref{itm:main_theorem_2con_fixed} of \cref{thm:main_theorem}.
We start with a reduction to graphs~$G$ with~$\delta(G) \geq 2$.

\begin{lemma}
    \label{lem:preprocessing}
    Let~$G$ be a planar graph with embedding~$\calE$.
    There is a planar su\-per\-graph~$G' \supseteq G$ with~$\delta(G') \geq 2$ whose embedding~$\calE'$ extends~$\calE$, such that $G$ has a $2$-connected $3$-augmentation extending~$\calE$ if and only if~$G'$ has one extending~$\calE'$.
\end{lemma}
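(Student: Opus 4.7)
The plan is to pad every vertex $v \in V(G)$ of degree at most~$1$ with a small triangle gadget: add two new vertices $a_v, b_v$ and three new edges $va_v, vb_v, a_vb_v$, embedded as a triangle inside a face $f_v$ of $\calE$ incident to~$v$ (for $\deg_G(v) = 1$ take $f_v$ to be the unique face containing the pendant edge at~$v$; for $\deg_G(v) = 0$ take any face of $\calE$ containing~$v$). Call the result $(G', \calE')$. By construction $\calE'$ extends $\calE$, each new vertex has degree~$2$, each padded $v$ has degree $\deg_G(v) + 2 \geq 2$, and $\Delta(G') \leq 3$, so $\delta(G') \geq 2$ as required. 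The $\Leftarrow$ direction of the equivalence is then immediate: any $2$-connected $3$-augmentation $H'$ of $G'$ extending $\calE'$ already contains~$G$ and its embedding restricts to $\calE$, so $H'$ itself is a $2$-connected $3$-augmentation of $G$ extending $\calE$.

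For the $\Rightarrow$ direction, let $H \supseteq G$ be a $2$-connected $3$-augmentation extending $\calE$, and treat every padded vertex~$v$ independently. Since $\deg_H(v) = 3$ and $\deg_G(v) \leq 1$, there are at least two edges $e_1 = vw_1$, $e_2 = vw_2$ at $v$ lying in $H \setminus G$. Because the rotation at $v$ in $H$ has only three entries, any two of them occupy cyclically adjacent positions, and we may name $e_1, e_2$ so that their cyclic positions at $v$ match those of $va_v, vb_v$ in $\calE'$ (relative to the pendant edge $vu$, if present); when $\deg_G(v) = 0$ we additionally choose which two of the three $H$-edges at $v$ to subdivide. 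We then subdivide $e_1$ with a new vertex $a_v$, subdivide $e_2$ with a new vertex $b_v$, and insert the chord $a_vb_v$ into the wedge between $e_1$ and $e_2$ at $v$ that contains no other $H$-edge, which is precisely the wedge that $\calE'$ assigns to the interior face of the triangle $va_vb_v$. Carrying out this local replacement at every padded $v$ produces a graph $H'$ that contains $G'$, stays planar, and whose embedding restricts to $\calE'$. Each $a_v, b_v$ acquires degree $2 + 1 = 3$ and all other degrees are unchanged, so $H'$ is $3$-regular; and $H'$ is still $2$-connected, since both subdividing edges and adding chords preserve $2$-vertex-connectivity.

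The only delicate step is verifying that the local embedding replacement at each padded $v$ can be made compatible with the prescribed $\calE'$. This works because the rotation at any degree-$3$ vertex of $H$ has only three cyclic arrangements of its edges, leaving enough freedom to identify the chosen $H$-edges with $va_v, vb_v$ in the cyclic order that $\calE'$ dictates and to place the chord on the correct side; different padded vertices are handled independently, so their local choices do not interfere.
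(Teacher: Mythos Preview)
Your proof is correct and takes essentially the same approach as the paper: the gadget $G'$ you build (attaching a triangle at each low-degree vertex) is exactly the paper's replacement rule, and your forward construction (subdivide two $H$-edges at $v$ and add the chord $a_vb_v$) is the same operation the paper describes as ``replace $v$ by its $K_3$ and reconnect to $N(v)$''. Your backward direction is in fact cleaner than the paper's---since $G\subseteq G'\subseteq H'$ and $\calE'$ extends $\calE$, the augmentation $H'$ already serves for $G$, so the paper's contraction-plus-multi-edge-repair step is unnecessary.
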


\begin{proof}
    Consider the following two replacement rules, also shown in \cref{fig:preprocessing_deg0}--\subref{fig:preprocessing_deg1}:
    Each isolated vertex is replaced by a copy of~$K_3$, and each vertex~$v$ of degree~$1$ is replaced by a copy of~$K_3$ with one vertex connected to the other neighbor of~$v$.
    Let~$G'$ be the obtained graph such that its planar embedding~$\calE'$ extends~$\calE$.

    Let~$H$ be a $2$-connected $3$-augmentation of~$G$.
    We obtain a $2$-connected $3$-augmentation of~$G'$ as follows:
    For each vertex~$v$ of degree~$0$ (or~$1$) in~$G$, let~$N(v)$ be its three (two) new neighbors in~$H$.
    In~$H$, replace $v$ by its corresponding copy of~$K_3$.
    Connect its three (two) degree-$2$-vertices with one vertex of~$N(v)$ such that the embedding remains planar.

    The other direction works similar:
    In a $2$-connected $3$-augmentation of~$G'$, contract each copy of~$K_3$ that was introduced for a vertex~$v$ of~$G$ into a single vertex.
    If this creates multi-edges, replace each duplicated edge by the gadget shown in \cref{fig:preprocessing_parallel} to obtain a simple graph.
\end{proof}

\begin{figure}[htb]
    \centering
    \begin{subfigure}[t]{0.3\textwidth}
        \centering
        \includegraphics[page=2]{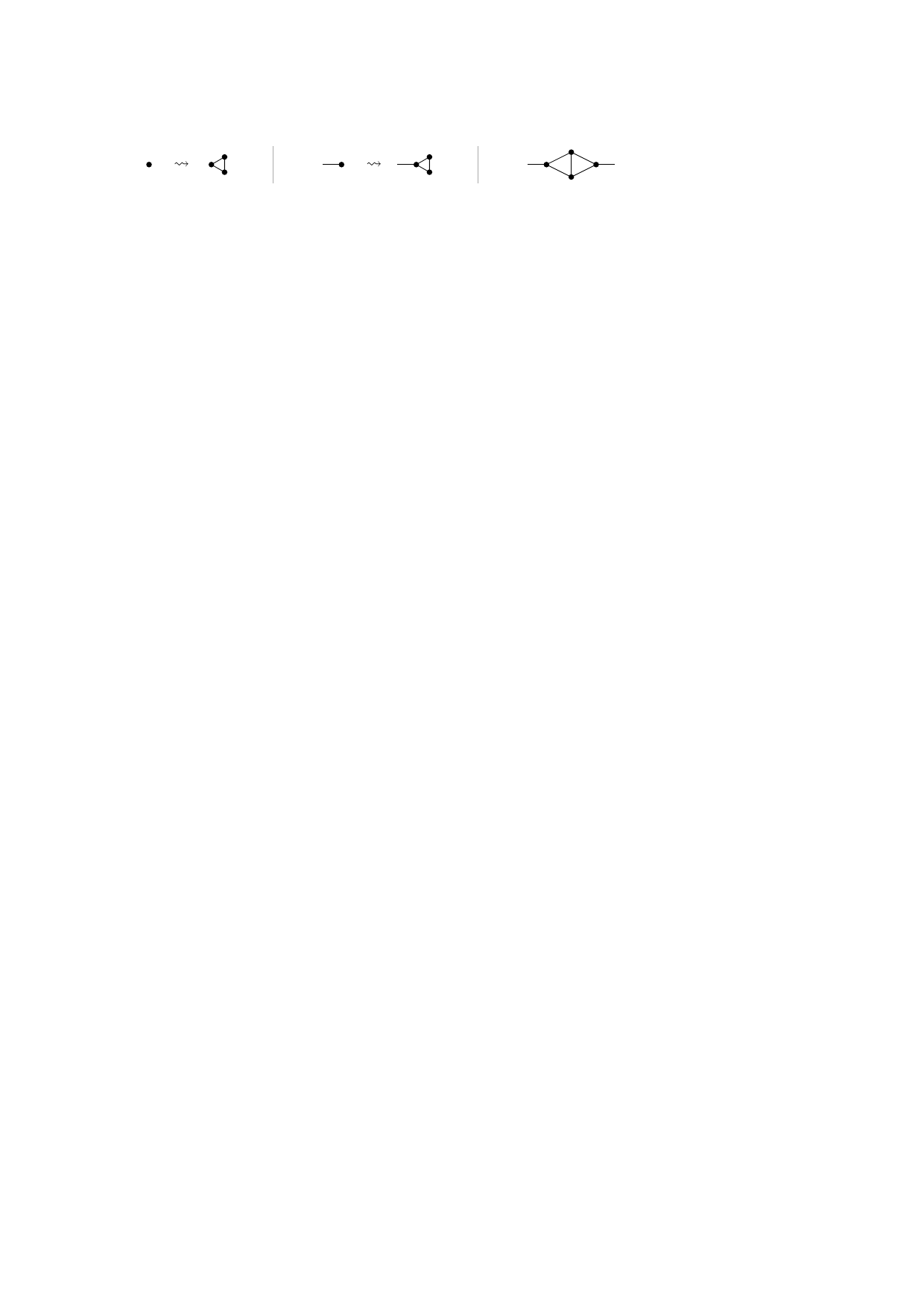}
        \caption{}
        \label{fig:preprocessing_deg0}
    \end{subfigure}
    \hfill
    \begin{subfigure}[t]{0.3\textwidth}
        \centering
        \includegraphics[page=3]{figures_journal/preprocessing.pdf}
        \caption{}
        \label{fig:preprocessing_deg1}
    \end{subfigure}
    \hfill
    \begin{subfigure}[t]{0.3\textwidth}
        \centering
        \includegraphics[page=4]{figures_journal/preprocessing.pdf}
        \caption{}
        \label{fig:preprocessing_parallel}
    \end{subfigure}
    \caption{
        \subref{fig:preprocessing_deg0}--\subref{fig:preprocessing_deg1} Replacement rules.
        \subref{fig:preprocessing_parallel} Gadget to avoid parallel edges.}
    \label{fig:preprocessing}
\end{figure}

\begin{lemma}
    \label{lem:2con_fixed_mindeg2}
    Let~$G$ be a planar $n$-vertex graph with an embedding~$\calE$, $\delta(G) \geq 2$, and $\Delta(G) \leq 3$.
    Then we can compute, in time~$\calO(n^4)$, a $2$-connected $3$-augmentation~$H$ of~$G$ extending~$\calE$, or conclude that none exists.
    If~$G$ is connected, then time $\calO(n^2)$ suffices.
\end{lemma}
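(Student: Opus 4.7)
The plan is to reduce the problem to \GeneralizedFactor on an auxiliary multigraph and then invoke \cref{thm:generalized-factor-cornuejols}, strengthened to \cref{thm:generalized-factor-sebo} in the connected case. Since $\delta(G) \geq 2$ and $\Delta(G) \leq 3$, each vertex of $G$ is either already of degree $3$ or of degree $2$ and then is missing exactly one half-edge; following the terminology of Hartmann, Rollin and Rutter, I call such a missing half-edge a \emph{free valency}. Any $3$-augmentation $H$ of $G$ extending $\calE$ emanates each free valency into one of the (at most) two faces of $\calE$ incident to the two angular corners at its owner vertex, and conversely, once this face-assignment is fixed, all that remains is to fill each face $f$ with a planar subgraph that consumes the assigned free valencies, uses only new internal vertices of degree $3$, and does not destroy $2$-connectedness of the final graph.

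I would encode the search for such an assignment as a $B$-factor problem in the multigraph $H^*$ whose vertex set is the disjoint union of the degree-$2$ vertices of $G$ and the faces of $\calE$, and where each degree-$2$ vertex $v$ is joined by two edges to the two faces incident to its two corners (or by two parallel edges to the same face when these corners lie in the same face). Setting $B(v) = \{1\}$ at every degree-$2$ vertex forces exactly one of $v$'s two possible face assignments to be picked. For each face $f$, the set $B(f)$ would be defined as the set of integers $k$ for which $k$ prescribed free valencies on $\partial f$ can be completed inside $f$ by a planar gadget without creating any cut vertex of $H$ on $\partial f$ or inside $f$; then $B$-factors of $H^*$ correspond bijectively to face-assignments extendable to $2$-connected $3$-augmentations of $G$ extending $\calE$.

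The technical core is to show that all gaps of $B(f)$ have length exactly $1$ (with no two consecutive forbidden values in the connected case), together with an explicit construction of the face-filling gadgets. For a face $f$ whose boundary consists of a single connected walk, $k = 0$ is realized trivially, $k = 1$ is forbidden because the unique new edge at the free valency would be a bridge of $H$, and every $k \geq 2$ is realized by pairing up free valencies via paths of triangles and, when $k$ is odd, absorbing the parity with one copy of the subdivided-$K_4$ gadget $K_4^{(1)}$; hence $B(f) = \{0, 2, 3, 4, \ldots\}$ has a single gap of length $1$ and admits no two consecutive forbidden degrees. When $\partial f$ has $c \geq 2$ connected components, the face must additionally contribute to $2$-connecting these components, which forbids certain small values of $k$ depending on how many pairs of components still need internally disjoint paths through $f$; a case analysis on $c$ and on the distribution of free valencies among the boundary components would confirm that every gap of $B(f)$ still has length exactly $1$.

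Since $H^*$ has $\calO(n)$ vertices and $\calO(n)$ edges, \cref{thm:generalized-factor-cornuejols} computes a $B$-factor or reports none in time $\calO(n^4)$, and realizing $H$ from a $B$-factor using the explicit face gadgets above costs only an additional $\calO(n)$. When $G$ is connected, every face has $c = 1$ and so $B(f)$ admits no two consecutive forbidden degrees, whereupon \cref{thm:generalized-factor-sebo} yields the improved running time $\calO(\abs{V(H^*)} \cdot \abs{E(H^*)}) = \calO(n^2)$. I expect the main obstacle to be the case $c \geq 2$: simultaneously proving the gap-length-$1$ property for $B(f)$ and exhibiting concrete face-filling gadgets that avoid new cut vertices at those $G$-vertices where several boundary components of $f$ must be strung together inside $f$.
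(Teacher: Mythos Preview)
Your reduction has a genuine gap: modeling each face $f$ by a single vertex with a degree set $B(f)$ is too coarse to capture $2$-connectedness. Whether the interior of a connecting face can be filled depends on \emph{which} free valencies are routed into it, not merely how many. Take a connected $G$ with a bridge $e$, and let $f$ be the unique face incident to both sides of $e$. The boundary of $f$ is a single closed walk (so $c=1$ in your notation), but to kill the bridge we need at least one free valency from \emph{each} side of $e$ to go into $f$. Assigning two valencies to $f$ that both sit on the same side of the bridge yields no $2$-connected augmentation, whereas two valencies on opposite sides may succeed. Your set $B(f)$ cannot distinguish these two situations, so the claimed bijection between $B$-factors of $H^*$ and extendable face-assignments fails already here. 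The same obstruction persists for $c\geq 2$: what matters is that every leaf block of the boundary subgraph $G_f$ receives at least one valency and every isolated (singleton) block at least two, and this is not a condition on the total count alone.

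The paper's proof fixes exactly this by refining the face side of the bipartition. A normal face is still represented by a single vertex with $B=\{0,2,3,\ldots\}$ as you have it, but each connecting face $f$ is represented by the \emph{blocks} of its boundary subgraph $G_f$, classified as leaf blocks (with $B=\{1,2,\ldots\}$), inner blocks (with $B=\{0,1,\ldots\}$), and singleton blocks (with $B=\{2,3,\ldots\}$); a degree-$2$ vertex on $\partial f$ is joined to the block(s) of $G_f$ containing it rather than to $f$ itself. This finer encoding makes the $B$-factor condition equivalent to the existence of a $2$-connected $3$-augmentation, keeps all gaps of length~$1$, and---crucially for the $\calO(n^2)$ bound---singleton blocks disappear when $G$ is connected, so no vertex has two consecutive forbidden degrees and \cref{thm:generalized-factor-sebo} applies. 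Your assertion that ``when $G$ is connected, every face has $c=1$'' is correct but does not rescue the argument: bridges are still present whenever $G$ is not already $2$-connected, so connecting faces still exist and still require the block-level encoding.
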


\begin{proof}
    The proof is by a linear-time reduction to an equivalent instance~$A$ of the \GeneralizedFactor problem, such that~$A$ fulfills the necessary condition to apply an $\calO(n^4)$-time algorithm by \Cornuejols~(\cref{thm:generalized-factor-cornuejols}), or even an $\calO(n^2)$-time algorithm by \Sebo~(\cref{thm:generalized-factor-sebo}).

    We construct the $2$-connected $3$-augmentation~$H$ of~$G$ by adding new edges and vertices into the faces of~$\calE$.
    Therefore, the obtained embedding of~$H$ extends~$\calE$.
    
    Some faces of~$\calE$ stand out, as these \emph{must} contain new edges (and possibly vertices) to reach $2$-connectedness.
    We call these the \emph{connecting faces} and denote the set of connecting faces by~$F_{\conn}$.
    Obviously, all faces incident to at least two connected components are connecting faces.
    Further, for each bridge~$e$ of~$G$, the unique face~$f$ incident to both sides of~$e$ is a connecting face because the only way to add new connections between the components separated by~$e$ is through~$f$.
    Recall that a $3$-regular graph is $2$-connected if and only if it is connected and bridgeless, so these are the only two types of connecting faces.
    All other faces are considered to be \emph{normal} faces, denoted by~$F_{\normal}$.

    For a connecting face~$f \in F_{\conn}$, let~$G_f$ be the subgraph of~$G$ on the vertices and edges incident to~$f$, let $B_f$ be the set of its blocks (i.e., maximal $2$-connected components or bridges), and let~$T_f$ be its block-cut-forest.
    We partition~$B_f$ into~$S_f \cup I_f \cup L_f$, where we call the elements of~$S_f$ the \emph{singleton blocks}, the elements of~$I_f$ the \emph{inner blocks}, and the elements of~$L_f$ the \emph{leaf blocks}:
    \begin{align*}
        S_f &\coloneqq \{b \in B_f \mid \text{$b$ forms a trivial (i.e., single-vertex) tree in~$T_f$}\} \\
        I_f &\coloneqq \{b \in B_f \mid \text{$b$ is an inner vertex of a non-trivial tree in~$T_f$} \} \\
        L_f &\coloneqq \{b \in B_f \mid \text{$b$ is a leaf in a non-trivial tree in~$T_f$} \}
    \end{align*}
    In fact, we distinguish these types of blocks, since in order to obtain a $2$-connected $3$-augmentation, every singleton block must be incident to at least two new edges, and every leaf block has to be incident to at least one new edge.
    
    The \GeneralizedFactor instance~$A$ is a bipartite graph with bipartition classes~$\calV$ and~$\calF$.
    Here, $\calV \coloneqq \{v \in V(G) \mid \deg_G(v) = 2\}$ contains all vertices of~$G$ having degree lower than~$3$.
    Vertices in~$\calF$ represent the faces of~$\calE$.
    Edges of a $B$-factor of~$A$ will determine the faces of~$\calE$ containing the new edges.
    In particular, $\calF$ contains one vertex corresponding to each normal face in~$F_{\normal}$.
    Additional vertices in~$\calF$ are needed to handle the connecting faces.
    For each connecting face~$f \in F_{\conn}$, we add all blocks in~$B_f$ as vertices to~$\calF$.
    (If there are two faces~$f,g$ in~$\calE$ such that~$B_f$ and~$B_g$ contain blocks corresponding to the same subgraph of~$G$, then~$\calF$ contains two such vertices: one corresponding to the block in~$B_f$, and another to the block in~$B_g$. This is for example the case when~$G$ corresponds to a cycle.)

    In~$A$, each~$x \in \calF$ is incident to exactly the following~$v \in \calV$:
    If~$x$ represents a normal face~$f \in F_{\normal}$, then~$x$ is connected to every~$v \in \calV$ that is incident to~$f$ in~$\calE$.
    Otherwise, if~$x$ represents a block~$b \in B_f$ for some connecting face~$f \in F_{\conn}$, then~$x$ is connected to every~$v \in \calV$ that is contained in~$b$.
    See \cref{fig:generalized_factor_instance} for an example.

    \begin{figure}[tb]
        \centering
        \begin{subfigure}[tb]{0.25\textwidth}
            \centering
            \includegraphics[page=2]{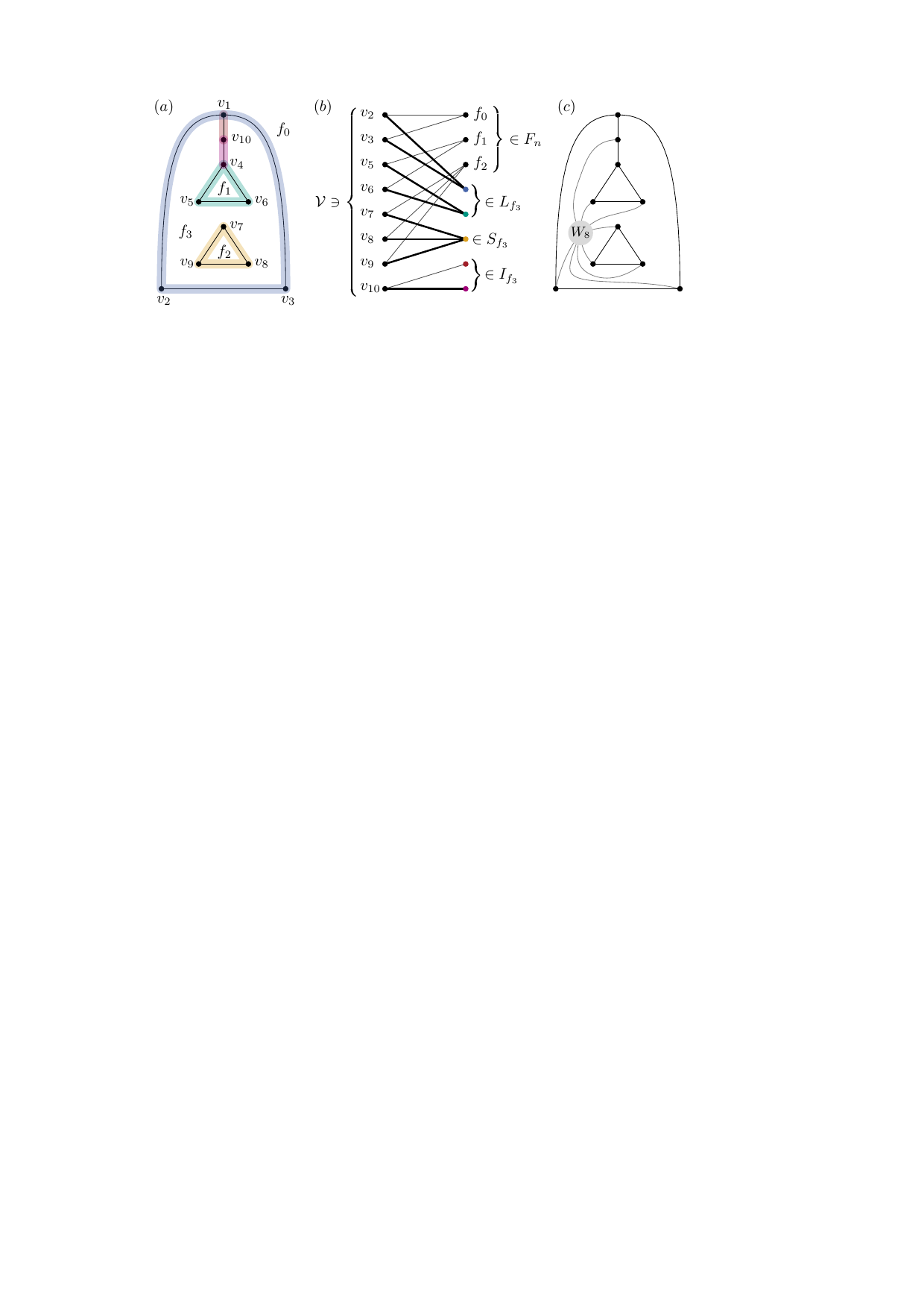}
            \caption{}
            \label{fig:generalized_factor_instance_graph}
        \end{subfigure}
        \hspace{4mm}
        \begin{subfigure}[tb]{0.3\textwidth}
            \centering
            \includegraphics[page=3]{figures_journal/generalized-factor-instance.pdf}
            \caption{}
            \label{fig:generalized_factor_instance_aux_graph}
        \end{subfigure}
        \hfill
        \begin{subfigure}[tb]{0.3\textwidth}
            \centering
            \includegraphics[page=4]{figures_journal/generalized-factor-instance.pdf}
            \caption{}
            \label{fig:generalized_factor_instance_augmentation}
        \end{subfigure}
        
        \caption{
            \subref{fig:generalized_factor_instance_graph} A planar subcubic graph~$G$.
            \subref{fig:generalized_factor_instance_aux_graph} Its corresponding \GeneralizedFactor instance.
            Thick edges denote a possible solution.
            \subref{fig:generalized_factor_instance_augmentation} A $2$-connected $3$-augmentation of~$G$ (with an indicated wheel-extension).
        }
        \label{fig:generalized_factor_instance}
    \end{figure}

    Lastly, we need to assign a set~$B(x) \subseteq \{0, 1, \ldots, \deg_A(x)\}$ of possible degrees to each vertex~$x \in \calV \cup \calF$:
    \[
        B(x) \coloneqq
        \begin{cases}
            \{1\}, & \text{if $x \in \calV$} \\
            \{0, 2, 3, \ldots, \deg_A(x)\}, & \text{if $x \in F_{\normal}$} \\
            \{0, 1, 2, \ldots, \deg_A(x)\}, & \text{if $x \in I_f$ for some connecting face~$f \in F_{\conn}$} \\
            \{1, 2, 3, \ldots, \deg_A(x)\}, & \text{if $x \in L_f$ for some connecting face~$f \in F_{\conn}$} \\
            \{2, 3, 4, \ldots, \deg_A(x)\}, & \text{if $x \in S_f$ for some connecting face~$f \in F_{\conn}$} \\
        \end{cases}
    \]

    By the following claim, the above reduction is linear.

    \begin{claim}
        \label{claim:size_of_A_is_linear}
        The order and size of~$A$ is linear in~$n$.
        Moreover, $A$ can be computed in linear time.
    \end{claim}

    \begin{claimproof}
        If $v \in V(G)$ is a vertex incident to a face~$f$ of~$\calE$, then it lies in at most three blocks of~$G_f$, since its degree in~$G_f$ is at most~$3$.
        Further, every vertex is incident to at most three different faces of~$\calE$. 
        Thus, there are at most nine blocks in~$B_f$ containing~$v$, which shows that~$\abs{B_f}$ is linear in~$n$. 
        As the number of faces of a planar embedding is linear in~$n$, so is~$\abs{A}$. 
        A vertex~$v \in \calV$ is incident to (at most) two faces of~$\calE$, and therefore it is contained in at most six distinct blocks in~$B_f$ for some faces~$f$. 
        Hence, we see that each vertex~$x \in \calV$ is adjacent to at most six vertices in~$B_f \subseteq \calF$ and at most two vertices in~$F_{\normal} \subseteq \calF$. 
        Thus, the bipartite graph~$A$ contains at most $8n$~edges. 
        Note that, in particular,~$A$ can be computed in linear time.
    \end{claimproof}

    The next two claims establish that~$A$ admits a $B$-factor if and only if~$G$ admits a $2$-connected $3$-augmentation~$H$ extending~$\calE$.

    \begin{claim}
        \label{claim:Bfactor_to_3aug}
        If~$A$ admits a $B$-factor, then~$G$ has a $2$-connected $3$-augmentation~$H$ extending~$\calE$.
    \end{claim}
    
    \begin{claimproof}
        Let~$A'$ be a $B$-factor of~$A$, i.e., a subgraph~$A'$ such that $\deg_{A'}(x) \in B(x)$ for every~$x \in V(A)$.
        We shall construct a connected and bridgeless supergraph~$H'$ of~$G$.
        (A connected graph with maximum degree~$3$ is~$2$-connected if and only if it is bridgeless.)
        We construct $H'$ as follows:
        For each edge~$vx \in E(A')$ with~$v \in \calV$ and $x \in \calF$, we add a new half-edge from~$v$ into a face~$f$ of~$\calE$.
        If~$x$ represents a face, then~$f = x$.
        Otherwise, let~$f$ be the face such that $x$ represents a block in~$B_f$.
    
        Now, for each face~$f$ of~$\calE$, all half-edges ending inside~$f$ are connected to a new vertex~$v_f$.
        Obviously, $H'$ is planar.
        To see that~$H'$ is connected, consider a connecting face~$f$ of~$\calE$.
        We have~$\deg_{A'}(b) \geq 1$ for every~$b \in S_f \cup L_f$, so each such~$b$ is connected to~$v_f$ by at least one edge.
        Lastly, to prove that~$H'$ is bridgeless, we consider three cases:
        \begin{itemize}
            \item A non-bridge of~$G$ is a non-bridge in~$H'$ as~$H' \supseteq G$.
            \item A bridge~$e$ of~$G$ has a unique face~$f$ of~$\calE$ incident to both its sides.
            The leaf blocks in~$L_f$ are subgraphs of the blocks separated by~$e$.
            As we have $\deg_{A'}(b) \geq 1$ for all~$b \in L_f$, there is at least one edge from each leaf block to~$v_f$ in~$H'$.
            Thus,~$e$ is a non-bridge in~$H'$.
            \item No edge incident to a new vertex~$v_f$ (for some face~$f$ of~$\calE$) is a bridge, because~$v_f$ has at least two edges to every incident component of~$H'$:
            If~$f$ is a normal face, then $\deg_{H'}(v_f) \geq 2$ because~$\deg_{A'}(f) \neq 1$.
            Now assume that~$f$ is a connecting face, and consider a component~$C$ of~$G_f$.
            If~$C$ consists of a single block~$b$ (which would be in~$S_f$), then~$v_f$ is connected to at least two vertices of~$b$, because we have~$\deg_{A'}(b) \geq 2$.
            Otherwise, if~$C$ consists of multiple blocks, then its block-cut-tree has at least two leaves.
            In this case,~$v_f$ is connected to at least one vertex per leaf block~$b \in L_f$ because~$\deg_{A'}(b) \geq 1$.
        \end{itemize}
        Since~$\deg_{A'}(v) = 1$ for each~$v \in \calV$, we see that all vertices in~$V(G)$ have degree~$3$ in $H'$.
        We apply a wheel-extension (\cref{obs:wheel_extension}) at each new vertex~$v_f$ of degree larger than~$3$, and replace each vertex~$v_f$ of degree~$2$ by the gadget represented in \cref{fig:preprocessing_parallel} (this simulates replacing it with a single edge connecting its neighbors, but without the risk of creating a multi-edge). 
        We obtain a $3$-regular graph~$H$ that is planar, connected, and bridgeless.
        Finally, $H$ is $2$-connected, because it is connected, bridgeless and of maximum degree~$3$.
    \end{claimproof}
    
    \begin{claim}
        \label{claim:3aug_to_Bfactor}
        If~$G$ has a $2$-connected $3$-augmentation~$H$ extending~$\calE$, then~$A$ has a~$B$-factor.
    \end{claim}

    \begin{claimproof}
        Since~$H$ extends~$\calE$, its new vertices and edges must have been added solely into the faces of~$\calE$.

        We have to construct a~$B$-factor~$A'$ of~$A$.
        To this end, we consider the vertices~$v \in \calV$, i.e., vertices of~$G$ with~$\deg_G(v) = 2$.
        Each such~$v$ has~$\deg_H(v) = 3$, so there is exactly one new edge~$e$ incident to~$v$.
        Let~$f$ be the face of~$\calE$ that~$e$ is inside.
        If~$f$ is a normal face, we add the edge~$vf$ to~$A'$.
        Now assume that~$f$ is a connecting face.
        As~$\deg_G(v) = 2$, vertex~$v$ can be in at most two blocks of~$G_f$.
        If~$v$ is in a singleton block, then this is the only block in~$B_f$ containing~$v$.
        If~$v$ lies in exactly two leaf blocks~$b_1, b_2 \in B_f$, then both must be bridges, whose other endpoints have degree~$1$; a contradiction to~$\delta(G) \geq 2$.
        Thus, there is at most one block in~$L_f \cup S_f$ containing~$v$.
        If there exists a block in~$L_f \cup S_f$ containing~$v$, let~$b$ be this (unique) block.
        Otherwise, choose an arbitrary~$b \in B_f$ containing~$v$.
        We add the edge~$vb$ to~$A'$.

        \medskip
        \noindent
        We prove that~$\deg_{A'}(x) \in B(x)$ for all~$x \in \calV \cup \calF$.
        For each vertex~$v \in \calV$, we added exactly one edge to~$A'$, therefore we have~$\deg_{A'}(v) = 1$ as required.

        For a normal face~$f \in F_{\normal}$, it holds that $\deg_{a'}(f)$ is either~$0$ or at least~$2$ because~$H$ is $2$-connected and therefore there are either no or at least two new edges inside~$f$.

        Now, consider a connecting face~$f \in F_{\conn}$.
        Each~$b \in S_f$ is a singleton block of~$G_f$.
        Since~$H$ is $2$-connected, there are (at least) two paths leaving different vertices~$v_1, v_2 \in b$ via new edges through~$f$.
        Therefore, $A'$ contains the edges~$v_1b$ and~$v_2b$, i.e., $\deg_{A'}(b) \geq 2$ as required.
        Similarly, each~$b \in L_f$ is a leaf-block.
        Since~$H$ is $2$-connected, there is (at least) one path leaving~$v \in b$ via a new edge through~$f$.
        Therefore, we have~$vb \in E(A')$ and thus~$\deg_{A'}(b) \geq 1$ as required.
    \end{claimproof}

    \medskip
    \noindent
    It remains to argue that we can compute a $B$-factor of~$A$ efficiently.
    By inspecting the set~$B(x)$ for each~$x \in \calV \cup \calF$, we can see that none of them contains a gap of size~$2$ or greater.
    Therefore, we are in a special case of the \GeneralizedFactor problem that can be solved, in~$\calO(n^4)$ time, by \Cornuejols' algorithm (see \cref{thm:generalized-factor-cornuejols}).

    A closer inspection yields that only for~$x \in S_f$ the sets~$B(x)$ contain two forbidden degrees.
    (Note that~$\deg_A(v) \leq 2$ for all~$v \in \calV$:
    If there is a face~$f$ such that~$v$ is contained in two blocks of~$G_f$, then both edges incident to~$v$ are bridges; thus~$v$ is incident to no other face. Otherwise, this follows from $\deg_G(v) \leq 2$, i.e., $v$ being incident to at most two faces.)
    Therefore, if~$S_f = \emptyset$ for every connecting face~$f \in F_{\conn}$, then we can even apply the algorithm by \Sebo, which takes only~$\calO(n^2)$ time (see \cref{thm:generalized-factor-sebo}).
    In particular, this is the case if~$G$ is connected.
\end{proof}

\begin{remark}
    The attentive reader might be tempted to think that we can modify the \textsc{Generalized Factor} instance in the proof above so that it satisfies the conditions of \Sebo, even when $S_f \neq \emptyset$.
    One such attempt resides in splitting each vertex~$x$ representing a singleton block into two vertices~$x_1, x_2$ with associated sets~$B(x_1), B(x_2)$ which only exclude the value~$0$.
    Indeed, the sets $B(x_1), B(x_2)$ fulfill the condition of \Sebo, but now some of the vertices in~$\calV$ might not, see Figure~\ref{fig:generalized_factor_splitting_vertex}.
     \begin{figure}[tb]
        \centering
        \begin{subfigure}{0.4\textwidth}
            \centering
            \includegraphics[page=2]{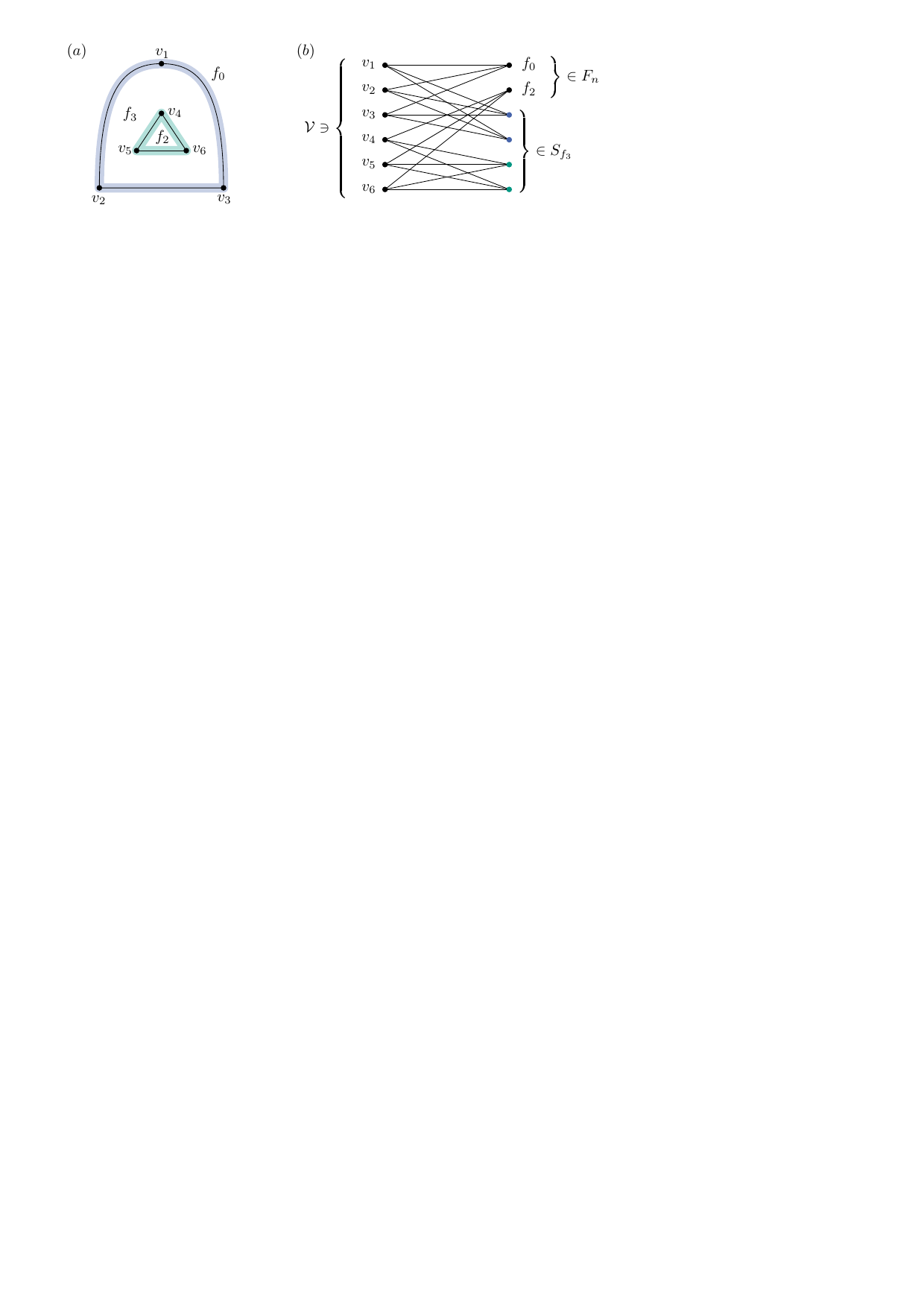}
            \caption{}
            \label{fig:generalized_factor_splitting_vertex_graph}
        \end{subfigure}
        \hspace{4mm}
        \begin{subfigure}{0.4\textwidth}
            \centering
            \includegraphics[page=3]{figures_journal/generalized-factor-instance_splitting-vertex.pdf}
            \caption{}
        \label{fig:generalized_factor_splitting_vertex_aux_graph}
        \end{subfigure}
        \caption{
            \subref{fig:generalized_factor_splitting_vertex_graph} A planar subcubic graph~$G$.
            \subref{fig:generalized_factor_splitting_vertex_aux_graph} Its modified \GeneralizedFactor instance where each singleton block appears twice.
        }
        \label{fig:generalized_factor_splitting_vertex}
    \end{figure}
    In the obtained \textsc{Generalized Factor} instance, all vertices in~$\calV$ have degree~$3$, yet $B(x) = \{1\}$ for every $x \in \calV$, i.e both~$2$ and $3$ are forbidden values. 
    Thus, the conditions of \Sebo (see \cref{thm:generalized-factor-sebo}) are not satisfied.
\end{remark}

\section{Connected and \texorpdfstring{$\bm{2}$}{2}-Connected \texorpdfstring{$\bm{3}$}{3}-Augmentations for Variable Embeddings}
\label{sec:1con_2con_3aug_variable_embedding}

Here we consider the $3$-augmentation problem for an arbitrary input graph~$G$ and a connected or $2$-connected output graph~$H$ in the variable embedding setting, corresponding to the second and third column of the table in \cref{fig:overview} respectively. 
We present a linear-time algorithm for connected output and a quadratic-time algorithm for $2$-connected output, which yields Statement~\ref{itm:main_theorem_1con_variable} and~\ref{itm:main_theorem_2con_variable} of \cref{thm:main_theorem}. 

In fact, we are mainly concerned with finding $2$-connected $3$-augmentations, i.e., Statement~\ref{itm:main_theorem_2con_variable}.
Our task boils down to finding a suitable planar embedding of~$G$ such that for each vertex~$v$ of~$G$ and each missing edge at~$v$, we can assign an incident face at~$v$ that should contain the new edge.
Let us note that this might only work for some planar embeddings of~$G$.
See \cref{fig:example_wrong_embedding} for a negative example.

\begin{figure}[tb]
    \centering
    \begin{subfigure}[t]{0.4\textwidth}
        \centering
        \includegraphics[page=1]{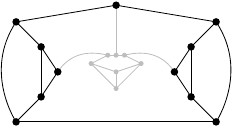}
        \caption{A planar graph~$G_1$ with a $2$-connected $3$-augmentation~$H_1$.}
        \label{fig:example_3_augmentation}
    \end{subfigure}
    \hspace{4mm}
    \begin{subfigure}[t]{0.4\textwidth}
        \centering
        \includegraphics[page=2]{figures_journal/small-illustrations.pdf}
        \caption{A different planar embedding of~$G_1$ that does not allow for a $2$-connected $3$\nobreakdash-augmentation.}
        \label{fig:example_wrong_embedding}
    \end{subfigure}
    \caption{
        Example instances for the $2$-connected $3$-augmentation problem.
    }
    \label{fig:small-illustrations}
\end{figure}

We show Statement~\ref{itm:main_theorem_2con_variable} of \cref{thm:main_theorem} in three steps.
First, we show that for $G$ to admit a $2$-connected $3$-augmentation we may restrain ourselves to all inclusion-maximal $2$-connected components, called a \emph{blocks}, of~$G$, and checking whether each admits a $2$-connected $3$-augmentation.
As all blocks can be found in linear time~\cite{Tarjan1972_Blocks}, we may assume henceforth that the input graph~$G$ is $2$-connected.
Second, we generalize \cref{thm:2con_fixed} to multi-graphs, i.e., we present a polynomial-time algorithm that given a planar multi-graph~$G$ with a fixed planar embedding~$\calE$ tests whether~$G$ admits a $2$-connected $3$-augmentation $H \supseteq G$ with a planar embedding whose restriction to~$G$ equals~$\calE$.
Finally, for a $2$-connected graph~$G$ with variable embedding, we use an SPQR-tree of~$G$ to efficiently go through the possible planar embeddings of~$G$ with a dynamic program and to identify one such embedding that allows for a $2$-connected $3$-augmentation, or conclude that no such exists.

The next proposition settles the $3$-augmentation problem for connected output, and yields a reduction to $2$-connected input for the decision problem with $2$-connected output.

\begin{proposition}\label{prop:wlog-connected}
    For a disconnected subcubic graph~$G$ with connected components $G_1,\ldots,G_{\ell}$, $\ell \geq 2$, we have that
    \begin{enumerate}[(i)]
        \item $G$ has a connected $3$-augmentation if and only if no~$G_i$ is $3$-regular
        \item $G$ has a $2$-connected $3$-augmentation if and only if each~$G_i$ has a $2$-connected $3$-augmentation and no~$G_i$ is $3$-regular.
    \end{enumerate}
\end{proposition}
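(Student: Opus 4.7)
\emph{Proof plan.}
The plan is to handle the two directions of (i) and (ii) separately, deferring the main work to the last paragraph. For both forward directions, I would first establish the common easy half: a $3$-regular component $G_i$ admits no additional incident edges, so $V(G_i)$ must form the vertex set of a union of connected components of any $3$-augmentation $H$; consequently $H$ is disconnected whenever $G$ is. This settles (i)$\Rightarrow$ and the necessity of ``no $G_i$ is $3$-regular'' in (ii)$\Rightarrow$.

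For (i)$\Leftarrow$, I would choose in each $G_i$ a vertex $v_i$ of degree at most $2$ (which exists since $G_i$ is not $3$-regular) and enlarge $G$ to a connected planar subcubic supergraph $G^* \supseteq G$ by adding new vertices $w_1, \ldots, w_\ell$, the path $w_1 w_2 \cdots w_\ell$, and one edge $v_i w_i$ per component. Since $G^*$ is connected and subcubic, Observation~\ref{obs:connected_3-augmentation} yields a connected $3$-augmentation of $G^*$, which is automatically one of $G$ because $G \subseteq G^*$.

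For (ii)$\Leftarrow$, given a $2$-connected $3$-augmentation $H_i$ of each $G_i$, I would pick an edge $e_i = x_i y_i \in E(H_i) \setminus E(G_i)$ on the outer face of some planar embedding of $H_i$ (the set $E(H_i) \setminus E(G_i)$ is non-empty since $H_i$ is $3$-regular while $G_i$ is not), delete each $e_i$, and arrange the resulting ``pods'' $H_i - e_i$ in a cyclic planar layout with added inter-pod edges $y_i x_{(i \bmod \ell) + 1}$. The resulting graph $H$ is planar, cubic, and a supergraph of $G$. For $2$-connectivity, I would remove an arbitrary vertex $v$ belonging to some pod $H_j$ and argue that $(H_j - e_j) - v$ has at most two connected components (each containing one of $x_j, y_j$ in the worst case), which are linked through the other pods via the inter-pod cycle, possibly missing only the one edge incident to $v$ when $v \in \{x_j, y_j\}$.

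The main obstacle is (ii)$\Rightarrow$: given a $2$-connected $3$-augmentation $H$ of $G$, produce a $2$-connected $3$-augmentation of each $G_i$. Fixing $i$ and setting $X_i = V(G_i)$, I would first observe that for every connected component $C$ of $H[V(H) \setminus X_i]$, at least $2$ edges run between $C$ and $X_i$, else one such edge would be a bridge of the cubic $2$-edge-connected graph $H$. The plan is then to replace each such $C$ together with its external edges by a small planar $2$-connected cubic ``cap'' having one degree-$2$ port per external edge: a cycle of appropriate length if there are at least $3$ external edges, or $K_4$ minus an edge (with its two degree-$2$ vertices as ports) if there are exactly $2$. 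Embedding each cap in the face region previously occupied by $C$, while matching the cyclic order of the external-edge endpoints, preserves planarity. The resulting graph $H_i \supseteq G_i$ is planar and $3$-regular, and $2$-connectivity is verified by translating, for each removed vertex $v \in V(H_i)$, any path in $H - v$ between two vertices of $V(H_i) \setminus \{v\}$ into a path in $H_i - v$: every maximal traversal through some $C$ is rerouted as a port-to-port path inside the corresponding cap, which exists because each cap is itself $2$-connected.
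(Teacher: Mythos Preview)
Your plan is correct, but you have misidentified where the work lies. The direction you call ``the main obstacle,'' namely (ii)$\Rightarrow$, is in fact a one-line observation in the paper: if $H$ is a $2$-connected $3$-augmentation of $G$, then since $G_i \subseteq G \subseteq H$, the \emph{same} graph $H$ is already a $2$-connected $3$-augmentation of each $G_i$. No cap construction, no path-rerouting, nothing. Your replacement argument does work (once you also treat the case where the deleted vertex $v$ lies in a cap rather than in $X_i$, which your sketch omits), but it is entirely unnecessary.

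For the backward directions your constructions differ from the paper's but are equally valid. The paper treats (i)$\Leftarrow$ and (ii)$\Leftarrow$ uniformly: take (connected, resp.\ $2$-connected) $3$-augmentations $H_i$ of the $G_i$, choose a new edge $e_i$ on each outer face, delete the $e_i$, and join the resulting $2\ell$ degree-$2$ vertices to a copy of $K_4^{(2\ell)}$ by a non-crossing matching; bridgelessness (hence $2$-connectivity in the cubic case) is then immediate. Your path-plus-Observation~\ref{obs:connected_3-augmentation} argument for (i)$\Leftarrow$ and your cyclic splice for (ii)$\Leftarrow$ achieve the same ends with slightly more ad hoc verification. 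The paper's approach buys uniformity and a shorter $2$-connectivity check; yours avoids introducing the $K_4^{(2\ell)}$ gadget.
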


\begin{proof}
    Let $G_1,\ldots,G_{\ell}$ denote the components of~$G$.
    \begin{enumerate}[(i)]
        \item Suppose there exists a connected $3$-augmentation~$H$ of~$G$. Note that~$H$ is also a connected $3$-augmentation of each~$G_i$.
        As~$H$ is connected, each~$G_i$ has a vertex incident to at least one edge in~$E(H)-E(G)$, i.e., no~$G_i$ is $3$-regular.

        Suppose now that each~$G_i$ admits a connected $3$-augmentation~$H_i$ and no~$G_i$ is $3$-regular.
        Each~$H_i$ contains a new edge~$e_i$. 
        Consider an embedding of $H_1 \dotcup \cdots \dotcup H_{\ell}$ where each~$e_i$ lies on the outer face.
        Finally, add a copy of $K_4^{(2\ell)}$ into the outer face, delete $e_1,\ldots,e_{\ell}$, and connect the $2\ell$ degree-$2$ vertices of $H_1 \dotcup \cdots \dotcup H_{\ell}$ with the $2\ell$ degree-$2$ vertices of $K_4^{(2\ell)}$ by a non-crossing matching.
        The resulting graph is a connected $3$-augmentation of~$G$. 

        \item Again, any $2$-connected $3$-augmentation of~$G$ is in particular a $2$-connected $3$-augmentation of each~$G_i$. 
        If such an augmentation of~$G$ exists, no~$G_i$ is $3$-regular. 
        
        On the other hand, if each~$G_i$ admits a $2$-connected $3$-augmentation~$H_i$ and no~$G_i$ is $3$-regular, the same construction as in the proof above yields a connected $3$-augmentation~$H$ of~$G$.
        Note that~$H$ is bridgeless.
        As a graph of maximum degree~$3$ is $2$-connected if it is connected and bridgeless, the claim follows.
        \qedhere
    \end{enumerate}
\end{proof}

The first claim of \cref{prop:wlog-connected} yields a linear-time algorithm for arbitrary input and connected output in the variable embedding setting.
This corresponds to Statement~\ref{itm:main_theorem_1con_variable} of \cref{thm:main_theorem}.

The remainder of \cref{sec:1con_2con_3aug_variable_embedding} only considers $2$-connected output.
The second claim of above proposition shows that we may assume that the input is connected.
By the following proposition, we may even assume that it is $2$-connected.

\begin{proposition}
    \label{prop:wlog-2-connected}
    A connected subcubic graph~$G$ admits a $2$-connected $3$-augmentation if and only if each block of~$G$ admits a $2$-connected $3$-augmentation.
\end{proposition}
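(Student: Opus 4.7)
The forward direction is immediate: if $H$ is a $2$-connected $3$-augmentation of $G$, then for any block $B$ of $G$ we have $B \subseteq G \subseteq H$, so the same $H$ is a $2$-connected $3$-augmentation of~$B$.

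For the backward direction the plan is to induct on the number of blocks of~$G$. The base case is when $G$ consists of a single block, so $G$ is itself $2$-connected and the hypothesis directly provides the desired augmentation. Otherwise, I pick any leaf block $B$ in the block-cut tree of~$G$; by definition $B$ is attached to the rest of~$G$ via a single cut vertex~$v$. Let $G' = G - (V(B) \setminus \{v\})$: this is a connected subcubic graph with strictly fewer blocks than~$G$, and every block of $G'$ is also a block of~$G$, so the inductive hypothesis yields a $2$-connected $3$-augmentation $H'$ of~$G'$. Together with the assumed $2$-connected $3$-augmentation $H_B$ of~$B$, the remainder of the proof is to combine $H'$ and $H_B$ into a $2$-connected $3$-augmentation $H$ of~$G$.

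For the combining step, set $a = \deg_{G'}(v)$ and $b = \deg_B(v)$, so $a, b \geq 1$ and $a + b = \deg_G(v) \leq 3$. I would take the disjoint union of $H'$ and $H_B$, identify the two copies of~$v$ so that $v$ temporarily has degree~$6$ ($a+b$ original edges and $6-a-b$ new edges), and then delete all $3-a$ new edges at $v$ coming from $H'$ together with exactly $a$ of the $3-b$ new edges at $v$ coming from $H_B$ (possible since $a \leq 3-b$). This leaves $v$ with degree~$3$, comprising the $a+b$ original edges from $G$ and $3-(a+b)$ remaining new edges. The three deleted edges produce three degree-$2$ loose ends $x_1, x_2, x_3$, and, crucially, at least one of them lies in $H' - v$ and at least one in $H_B - v$. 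Finally introduce one new vertex~$w$ with edges $wx_1, wx_2, wx_3$, restoring $3$-regularity. Planarity can be maintained by fixing embeddings of $H'$ and $H_B$ so that the removed new edges of $H'$ at~$v$ all belong to a common face, then inserting $H_B$ together with~$w$ into that face.

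The main obstacle is verifying that $H$ is $2$-connected. The subtlety is that removing up to~$2$ edges at $v$ from $H'$ can destroy $2$-edge-connectivity within $H' - v$, so $2$-connectivity of $H$ cannot be deduced directly from that of $H'$ and $H_B$. The key idea is that $w$ acts as a \emph{bypass} around~$v$: being adjacent to at least one loose end on each side, it provides a $w$-path between $H' - v$ and $H_B - v$ that avoids~$v$. I would then verify that no vertex $u$ of $H$ is a cut vertex by cases. For $u = v$, the bypass through~$w$ joins $H' - v$ and $H_B - v$, both connected by $2$-connectivity of $H'$ and~$H_B$. For $u = w$, only edges incident to~$v$ have been removed, so since $v$ retains its original edges to both sides and $H' - v, H_B - v$ are connected, the graph stays connected. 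For any other~$u$, the side containing~$u$ minus~$u$ remains connected by $2$-connectivity of $H'$ or~$H_B$, and the other side connects to it either via~$v$ or via~$w$ through a remaining loose end. Minor technicalities, such as coinciding loose ends arising from multi-edges, can be resolved using the parallel-edge gadget of \cref{fig:preprocessing_parallel}.
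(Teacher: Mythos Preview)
Your construction is correct, but the $2$-connectivity argument for the case $u \notin \{v,w\}$ does not work as stated. You claim that ``the side containing $u$ minus $u$ remains connected by $2$-connectivity of $H'$'', but the $H'$-side of $H$ is not $H'$: you have deleted up to two edges at $v$. Concretely, let $H'$ be two copies of $K_4$ with one edge removed from each, joined by a $2$-edge matching (a $3$-regular $2$-connected planar graph with a $2$-cut $\{c,d\}$), take $v=c$ with $a=1$ so that both removed edges $vx_1, vx_2$ go into the $K_4^-$ containing $v$. Removing $u=d$ from the $H'$-side of $H$ leaves the component $\{x_1,x_2,w\}$ separated from $v$; only the detour through $H_B$ reconnects them. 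The clean fix is a contraction argument: for $u\in V(H')\setminus\{v\}$, the set $S=(V(H_B)\setminus\{v\})\cup\{w\}$ induces a connected subgraph of $H-u$ (since $H_B-v$ is connected and $w$ has $a\geq 1$ neighbours there); contracting $S$ to a single vertex $s$ turns $H-u$ into $H'-u$ with each edge $vx_i$ replaced by the length-$2$ path $v\,s\,x_i$, which is connected because $H'-u$ is. The symmetric argument handles $u\in V(H_B)\setminus\{v\}$.

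For comparison, the paper's proof takes a simpler route: instead of peeling off a leaf block at a cut vertex, it splits $G$ at a \emph{bridge} $e=uv$ into components $G_1\ni u$ and $G_2\ni v$ (in a subcubic graph every cut vertex is incident to a bridge, so this is always possible). Then $\deg_{G_i}$ at the respective endpoint is at most $2$, so each $H_i$ contains a new edge $e_1=ua$, $e_2=vb$; the combined augmentation is simply $(H_1-e_1)\cup(H_2-e_2)+uv+ab$. No new vertex $w$ is needed, only two edges are removed and two added, and bridgelessness is immediate (any cycle of $H_i$ through $e_i$ reroutes via $uv$ and $ab$). Your approach is more general in flavour---it does not exploit the bridge structure---but pays for it with a more delicate gluing and connectivity analysis.
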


\begin{proof}
    Suppose~$G$ contains no bridge. 
    As~$G$ is connected and subcubic, it follows that $G$ is $2$-connected, i.e. the only block of~$G$ corresponds to the whole graph and the claim follows.
    
    We may therefore assume that~$G$ contains a bridge $e=uv$. 
    Let~$G_1$ be the connected component of $G - e$ containing~$u$, and let $G_2 = G - G_1$ denote the remaining graph.
    It is enough to show that if~$G_1$ and~$G_2$ have $2$-connected $3$-augmentations~$H_1$ respectively~$H_2$, then~$G$ also has a $2$-connected $3$-augmentation.
    To this end, consider an edge $e_1 \in E(H_1)-E(G_1)$ incident to~$u$ and an edge $e_2 \in E(H_2)-E(G_2)$ incident to~$v$.
    These edges exist as $\deg_{G_1}(u), \deg_{G_2}(v) \leq \Delta(G)-1 \leq 2$ but $\deg_{H_1}(u) = \deg_{H_2}(v) = 3$.
    Choose a planar embedding of $H_1 \dotcup H_2$ with~$e_1$ and~$e_2$ being outer edges.
    Denoting by $a,b$ the endpoints of $e_1,e_2$ different from $u,v$, we see that $(H_1 - e_1) \dotcup (H_2 - e_2) \cup \{uv, ab\}$ is a connected $3$-augmentation of~$G$ with no bridges, i.e., a $2$-connected $3$-augmentation.
\end{proof}

\subsection{The Fixed Embedding Setting for Multigraphs}
\label{subsec:2con-fixed-embedding-multi-graph}

In order to settle the variable embedding setting for $2$-connected output, we use the algorithm for fixed embedding as a crucial subroutine.  
Yet, the embedded graphs we consider might be not simple.
The following lemma yields a linear-time reduction from multi-graphs to simple graphs. 

\begin{lemma}
    \label{prop:wlog-no-multigraph}
    Let~$G$ be an $n$-vertex planar multi-graph of maximum degree $\Delta(G) \leq 3$ with an embedding~$\calE$.
    There is a simple planar graph~$G'$ with $\Delta(G') \leq 3$ on at most $4n$~vertices with an embedding~$\calE'$ such that~$G$ has a $2$-connected $3$-augmentation extending~$\calE$ if and only if~$G'$ has one extending~$\calE'$.
\end{lemma}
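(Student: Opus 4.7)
The plan is to define a local gadget-based reduction that eliminates all parallel edges while preserving the relevant property. For each pair of parallel edges~$e_1, e_2$ between distinct vertices~$u$ and~$v$, drawn consecutively in~$\calE$, I would subdivide~$e_1$ by a new vertex~$w_1$, subdivide~$e_2$ by a new vertex~$w_2$, and insert the edge~$w_1 w_2$ into the thin face region between~$e_1$ and~$e_2$. Locally this replaces the double edge by a $4$-cycle~$u w_1 v w_2$ with a chord~$w_1 w_2$: a simple, planar, $2$-connected gadget that preserves the degrees of~$u$ and~$v$ while giving~$w_1$ and~$w_2$ degree exactly~$3$.

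For the degenerate case of a triple edge (which, under $\Delta(G) \leq 3$, can only constitute the entire graph on two vertices), $G$ is itself a $2$-connected cubic multigraph that equals its own $3$-augmentation, so I would simply set $G' = K_4$ with an arbitrary embedding $\calE'$ and observe that both sides of the equivalence are trivially Yes-instances.

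The size bound $|V(G')| \leq 4n$ follows from the observation that, since $\Delta(G) \leq 3$, each vertex of~$G$ belongs to at most one pair of parallel edges, so at most~$n$ new vertices are introduced across all gadgets. The embedding~$\calE'$ is obtained directly by drawing each gadget in the corresponding narrow region of~$\calE$, which can be carried out in linear time.

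For the equivalence, the forward direction is direct: given a $2$-connected $3$-augmentation $H \supseteq G$ extending~$\calE$, I would replace each parallel pair in~$H$ by its gadget drawn between the two copies of the edge, obtaining $H'$; this new graph is simple, planar, $3$-regular, extends $\calE'$, and remains $2$-connected because the gadget provides two internally vertex-disjoint $u$--$v$ paths through the new vertices. The backward direction rests on the key observation that each gadget vertex~$w_i$ already has degree~$3$ in~$G'$, so any $3$-augmentation~$H'$ of~$G'$ extending~$\calE'$ adds no edges incident to~$w_1$ or~$w_2$; contracting each gadget back to two parallel edges then yields a $2$-connected $3$-augmentation of~$G$ extending~$\calE$. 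I expect the main subtlety to be verifying that $2$-connectivity is preserved under the contraction of the gadget back to a multi-edge, which requires a small cut argument showing that any separating vertex in the contracted graph would already separate the uncontracted graph.
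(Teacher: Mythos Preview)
Your approach is correct and essentially the same as the paper's: replace each multi-edge by a local planar gadget whose new vertices already have degree~$3$, so that augmentations cannot touch them and the equivalence is immediate in both directions. The paper uses the gadget from \cref{fig:preprocessing_parallel} (four new vertices, applied to one of the parallel copies), whereas your $4$-cycle-plus-chord gadget uses only two new vertices per pair and even yields the sharper bound $|V(G')|\le 2n$; both choices work for the same reason. One small imprecision: a triple edge need not ``constitute the entire graph''---it could be a $3$-regular component of a disconnected $G$---but your own gadget already handles this case (applying it to two of the three parallel edges produces exactly $K_4$), so the separate case distinction is unnecessary.
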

\begin{proof}
    Replacing multi-edges of~$G$ with the gadget represented in \cref{fig:preprocessing_parallel} yields a simple planar graph~$G'$ of maximum degree at most~$3$ whose embedding~$\calE'$ is closely related to~$\calE$.
    As every vertex of~$G$ is incident to at most three edges, we have $\abs{E(G)} \leq 3n$.
    We introduced at most four new vertices for each edge of~$G$, thus $\abs{V(G')} \leq 4n$.
    Note that a $2$-connected $3$-augmentation of~$G$ extending~$\calE$ yields one of~$G'$ extending $\calE$ and vice versa. 
\end{proof}

We can therefore generalize \cref{thm:2con_fixed} to multi-graphs.
\begin{corollary}
    \label{prop:2con-3aug-fixed-multi-graph}
    Let~$G$ be an $n$-vertex planar multi-graph of maximum degree $\Delta(G) \leq 3$ with a fixed planar embedding~$\calE$.
    Then we can compute in time $\mathcal{O}(n^2)$ a $2$-connected $3$\nobreakdash-augmentation~$H$ of~$G$ with a planar embedding~$\calE_H$ whose restriction to~$G$ equals~$\calE$, or conclude that no such exists.
\end{corollary}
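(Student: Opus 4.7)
The plan is to reduce the corollary to its simple-graph counterpart, \cref{thm:2con_fixed}, using the linear-time gadget reduction established in \cref{prop:wlog-no-multigraph}. The overall argument proceeds in three steps: (i) transform $G$ with embedding $\calE$ into a simple graph $G'$ with embedding $\calE'$; (ii) invoke the algorithm of \cref{thm:2con_fixed} on $(G', \calE')$; (iii) lift the answer back to $G$ and $\calE$.

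In the first step, I would apply \cref{prop:wlog-no-multigraph} to $(G, \calE)$ in linear time, obtaining a simple planar subcubic graph $G'$ on at most $4n$ vertices together with an embedding $\calE'$. That lemma guarantees that $G$ admits a $2$-connected $3$-augmentation extending $\calE$ if and only if $G'$ admits one extending $\calE'$, so it suffices to work with the simple-graph instance from here on.

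In the second step, I would run the algorithm from \cref{thm:2con_fixed} on $(G', \calE')$. Because the multi-edge gadget in \cref{fig:preprocessing_parallel} is itself connected and is inserted between two vertices that were already adjacent in $G$, the replacement preserves connectedness; in particular $G'$ is connected whenever $G$ is (the setting in which this corollary will be used, e.g., on pertinent graphs arising from an SPQR-tree node). The connected-case bound of \cref{thm:2con_fixed} then yields runtime $\calO(|V(G')|^2) = \calO((4n)^2) = \calO(n^2)$, and either produces a $2$-connected $3$-augmentation $H'$ of $G'$ extending $\calE'$ or certifies that no such $H'$ exists.

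In the final step, if $H'$ is produced I would undo the gadget replacement by contracting each gadget back into the corresponding multi-edge of $G$. Because each gadget only affects a small neighbourhood inside a single face of $\calE$, the resulting supergraph $H$ is a $2$-connected $3$-regular planar graph whose embedding restricts to $\calE$ on the vertices and edges of $G$; conversely a ``no solution'' answer for $G'$ transfers to $G$ via \cref{prop:wlog-no-multigraph}. I do not anticipate any real obstacle: the substance is already contained in the two cited results, and the only point to verify is that the gadget replacement is reversible in an embedding-preserving way, which is evident from the local structure of the gadget.
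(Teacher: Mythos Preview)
Your proposal is correct and follows exactly the paper's route: the corollary is stated without an explicit proof, immediately after \cref{prop:wlog-no-multigraph}, as the evident combination of that lemma with \cref{thm:2con_fixed}. Your observation that the $\mathcal{O}(n^2)$ bound hinges on connectedness of $G'$ (since \cref{thm:2con_fixed} only guarantees $\mathcal{O}(n^2)$ for connected inputs) is an apt one; the paper tacitly relies on the fact that this corollary is only invoked in \cref{lem:label-check} on graphs built from connected $uv$-graphs.
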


\subsection{\texorpdfstring{$\bm{2}$}{2}-Connected \texorpdfstring{$\bm{3}$}{3}-Augmentations for \texorpdfstring{$\bm{2}$}{2}-Connected Input and Variable Embeddings}
\label{subsec:variable-embedding}

Even an unlabeled $2$-connected subcubic planar graph~$G$ can have exponentially many different planar embeddings (e.g., the $(2 \times n)$-grid graph).
Thus, iterating over all embeddings of~$G$ and applying the algorithm from \cref{thm:2con_fixed} to each of them is not a polynomial-time algorithm and hence no feasible approach for us.
In this section we describe how to use the SPQR-tree of~$G$ to efficiently find a planar embedding~$\calE$ of~$G$ such that there is a $3$-augmentation~$H$ of~$G$ extending~$\calE$, or conclude that no such embedding exists.
The algorithm from \cref{prop:2con-3aug-fixed-multi-graph} will be an important subroutine. We show the following.

\begin{proposition}
    \label{prop:variable-embedding}
    Let~$G$ be an~$n$-vertex $2$-connected subcubic planar graph.
    Then we can compute in $\mathcal{O}(n^2)$ time a $2$-connected $3$-augmentation~$H$ of~$G$ or conclude that no such exists.
\end{proposition}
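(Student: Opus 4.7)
The plan is to exploit the SPQR-tree~$T$ of~$G$, which encodes all planar embeddings of~$G$ and can be computed in linear time. I root~$T$ at an arbitrary Q-vertex~$\rho$; since whether~$G$ admits some $2$-connected $3$-augmentation is unchanged under reflecting on the sphere, fixing the outer edge~$e_\rho$ loses no generality. This way $T$ parametrizes the embedding choices of~$G$ by flip choices at S- and R-nodes and by cyclic orderings at P-nodes, where in the latter case the subcubic constraint limits each P-node skeleton to at most three parallel virtual edges.

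The core step is a bottom-up dynamic program on~$T$. For each node~$\mu$ I compute a set~$\mathcal{S}(\mu)$ of feasible \emph{boundary signatures}. A signature summarizes, under some chosen embedding~$\calE_\mu$ and some partial augmentation of~$\pert(\mu)$, exactly the information needed to decide compatibility with the rest of~$G$ when gluing along the virtual edge~$u_\mu v_\mu$: for each of~$u_\mu, v_\mu$ the number of missing edges still to be placed by the parent and whether each such slot lies in the left or right outer face of~$\calE_\mu$, together with the connectivity and bridge status of the two outer faces in the style of the normal/connecting face and singleton/leaf/inner block classification used in Lemma~\ref{lem:2con_fixed_mindeg2}. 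Since~$G$ is subcubic, each field of a signature takes~$O(1)$ values, so $|\mathcal{S}(\mu)| = O(1)$.

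The transition at a node~$\mu$ iterates over the (constantly many) embedding choices of~$\skel(\mu)$ and, for each tuple of child signatures, glues these children's boundary data into a ``skeleton plus boundary'' multigraph of size $O(|\skel(\mu)|)$. On this multigraph I run the fixed-embedding subroutine of Corollary~\ref{prop:2con-3aug-fixed-multi-graph} to determine which signatures at~$\mu$'s virtual edge are realizable, and these populate~$\mathcal{S}(\mu)$. At the root~$\rho$ I perform a final analogous combination across~$\skel(\rho)$ and output a $2$-connected $3$-augmentation if any consistent completion exists. Since $\sum_\mu |\skel(\mu)| = O(n)$ and each skeleton is processed in $O(|\skel(\mu)|^2)$ time, the total runtime is~$O(n^2)$, matching the claim.

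The main obstacle will be pinning down the right notion of signature: it must carry enough global information---which blocks of the outer faces of~$\calE_\mu$ are separated by the pertinent boundary, and whether they will act as leaf, inner, or singleton blocks in the block-cut-forest of the incident face in the final embedding---to correctly propagate the connectivity and bridgelessness conditions across the virtual edge, yet remain of constant size per node. That this is possible rests on the fact that the virtual boundary separates~$G$ into two pieces meeting only at~$u_\mu$ and~$v_\mu$; since both vertices are subcubic, only~$O(1)$ new-edge slots ever need to be tracked across the boundary, and the block/face roles at~$u_\mu, v_\mu$ are determined locally by the chosen child signatures together with the embedding choice at~$\mu$.
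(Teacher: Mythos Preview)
Your high-level strategy---SPQR-tree rooted at a Q-vertex, bottom-up dynamic program with a constant-size summary per node, the fixed-embedding algorithm as subroutine, and the runtime bound $\sum_\mu \|\skel(\mu)\|^2 \le (\sum_\mu \|\skel(\mu)\|)^2 = O(n^2)$---is exactly what the paper does. But the transition step as you describe it does not run in polynomial time.

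You write that at a node~$\mu$ you iterate ``for each tuple of child signatures.'' If~$\mu$ has~$k$ children and each child carries a set of $O(1)$ feasible signatures, the number of such tuples is $c^k$ for some constant~$c$. At an S-node (a cycle skeleton) or an R-node, $k$ can be $\Theta(n)$, so this enumeration is exponential. Your stated per-node cost $O(\|\skel(\mu)\|^2)$ therefore does not follow from what you have written.

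The paper closes this gap with an ingredient your proposal lacks: a \emph{gadget lemma} (Lemma~\ref{lem:gadgets}). First, the paper's signature is very concrete. It defines an \emph{inner augmentation} of~$\pert(\mu)$, in which all inner faces are fully augmented and only degree-$1$ pendants remain in the outer face; the label of such an augmentation is the pair $(a,b)$ counting pendants on the left and right outer path, and Lemma~\ref{lem:wlog-0-or-1} shows one may assume $a,b\in\{0,1\}$. The resulting \emph{variable label set} of a child is thus one of eight possible symmetric subsets of $\{\lab{00},\lab{01},\lab{10},\lab{11}\}$. The gadget lemma then exhibits, for each of the seven non-empty such subsets, a constant-size $uv$-graph with fixed embedding whose embedded label set equals exactly that subset. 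This allows replacing \emph{every} child's virtual edge by its gadget \emph{simultaneously}, obtaining a single graph of size $O(\|\skel(\mu)\|)$, on which the fixed-embedding routine is called a constant number of times (Lemma~\ref{lem:label-check}) to read off the label set of~$\mu$. No enumeration over child tuples occurs.

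Your signature definition is also too vague to assess: you invoke the normal/connecting and singleton/leaf/inner-block machinery of Lemma~\ref{lem:2con_fixed_mindeg2}, but the outer boundary of~$\pert(\mu)$ can have $\Theta(\|\skel(\mu)\|)$ blocks, and you have not said why only the data at $u_\mu,v_\mu$ suffices nor what a ``partial augmentation'' is. The paper's inner-augmentation abstraction is precisely what makes the summary collapse to four labels and what enables the gadget replacement.
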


Together with \cref{prop:wlog-connected} and \cref{prop:wlog-2-connected}, this yields Statement~\ref{itm:main_theorem_2con_variable} of \cref{thm:main_theorem}.

\subparagraph{Overview.}
The proof of \cref{prop:variable-embedding} uses a bottom-up dynamic programming approach on the SPQR-tree~$T$ of~$G$ rooted at a Q-vertex~$\rho$ corresponding to some edge~$e_\rho$ in~$G$.
Consider a vertex~$\mu \neq \rho$ in~$T$.
Let~$uv$ be the virtual edge in~$\skel(\mu)$ that is associated to the parent edge of~$\mu$.
Recall that each embedding~$\calE$ of~$G$ with~$e_\rho$ on the outer face, when restricted to the pertinent graph~$\pert(\mu)$, gives an embedding~$\calE_{\mu}$ of~$\pert(\mu)$ whose inner faces are also inner faces of~$\calE$, and with~$u$ and~$v$ being outer vertices of~$\calE_\mu$.
The outer face of~$\calE_\mu$ is composed of two (not necessarily edge-disjoint) $u$-$v$-paths; the left and right outer path of~$\calE_\mu$, which are contained in the left and right outer face of~$\calE_\mu$ inside~$\calE$, respectively.
We seek to partition the (possibly exponentially many) planar embeddings of~$\pert(\mu)$ with~$u,v$ on its outer face into a constant number of equivalence classes based on how many edges in a $2$-connected $3$-augmentation of~$G$ could possibly ``connect'' $\pert(\mu)$ with the rest of the graph~$G$ inside the left or right outer face of $\calE_\mu$ inside~$\calE$.
This corresponds\footnote{
    up to the fact that left and right outer path may share degree-$2$ vertices, each of which sends however its third edge into only one of the left and right outer face
} to the number of degree-$2$ vertices on the left and right side in so-called \emph{inner augmentations} of~$\calE_\mu$.
Loosely speaking, it will be enough for us to distinguish three cases for the left side ($0$,~$1$, or at least~$2$ connections), the symmetric three cases for the right side, and to record which of the nine resulting combinations are possible.
Note that this grouping of embeddings of~$\calE_\mu$ into constantly many classes is the key insight that allows an efficient dynamic program.

Whether a particular equivalence class is realizable by some planar embedding~$\calE_\mu$ of~$\pert(\mu)$ will depend on the vertex type of~$\mu$ (S-, P- or R-vertex) and the realizable equivalence classes of its children~$\mu_1, \ldots, \mu_k$.
In the end, we shall conclude that the whole graph~$G$ has a $2$-connected $3$-augmentation if and only if for the unique child~$\mu$ of the root~$\rho$ of $T$ the equivalence class of embeddings of~$\pert(\mu)$ for which neither the left nor the right side has any connections is non-empty.

Most of our arguments are independent of SPQR-trees and we instead consider so-called $uv$-graphs, which are slightly more general than pertinent graphs.
We shall introduce inner augmentations of $uv$-graphs, which then give rise to label sets for $uv$-graphs, both in a fixed and variable embedding setting.
These label sets encode the aforementioned number of connections between the $uv$-graph as a subgraph of~$G$ and the rest of~$G$ in a potential $2$-connected $3$-augmentation.
After showing that we can compute variable label sets by resorting to the fixed embedding case and \cref{prop:2con-3aug-fixed-multi-graph}, we present the final dynamic program along the rooted SPQR-tree~$T$ of~$G$.

\subparagraph{$uv$-Graphs and Labels.}
A \emph{$uv$-graph} is a connected multigraph~$G_{uv}$ with $\Delta(G_{uv}) \leq 3$, two distinguished vertices~$u,v$ of degree at most~$2$, together with a planar embedding~$\calE_{uv}$ such that~$u$ and $v$ are outer vertices.
A connected multigraph~$H_{uv} \supseteq G_{uv}$ with planar embedding~$\calE_H$ is an \emph{inner augmentation} of~$G_{uv}$ if
\begin{itemize}
    \item $\calE_H$ extends~$\calE_{uv}$ and has~$u,v$ on its outer face,
    \item each of $u,v$ has the same degree in $H_{uv}$ as in $G_{uv}$,
    \item every vertex of~$H_{uv}$ except for~$u,v$ has degree~$1$ or~$3$,
    \item every degree-$1$ vertex of~$H_{uv}$ lies in the outer face of~$\calE_H$ and
    \item every bridge of~$H_{uv}$ that is not a bridge of~$G_{uv}$ is incident to a degree-$1$ vertex.
\end{itemize}
Because~$u,v$ are outer vertices in~$\calE_H$, one could add another edge~$e_{uv}$ (oriented from~$u$ to~$v$) into the outer face of~$\calE_H$ preserving planarity (this edge is not part of the inner augmentation).
Then~$e_{uv}$ splits the outer face into two faces~$f_A,f_B$ left and right of~$e_{uv}$, respectively.
Each degree-$1$ vertex of~$H_{uv}$ now lies either inside~$f_A$ or~$f_B$.

We are interested in the number of degree-$1$ vertices in each of these faces of~$\calE_H$ and write $d(H_{uv},\calE_H) = (a,b)$ if an inner augmentation~$H_{uv}$ of~$G_{uv}$ has exactly~$a$ degree-$1$ vertices inside~$f_A$ and exactly~$b$ degree-$1$ vertices inside~$f_B$.

\begin{lemma}
    \label{lem:wlog-0-or-1}
    Let~$H_{uv}$ be an inner augmentation of~$G_{uv}$ with $d(H_{uv}, \calE_H) = (a,b)$.
    If $a \geq 2$, then $G_{uv}$ has an inner augmentation $H_{uv}^0$ with $d(H_{uv}^0, \calE_H^0) = (0,b)$ and an inner augmentation $H_{uv}^1$ with $d(H_{uv}^1,  \calE_H^1) = (1,b)$.
    A symmetric statement holds when $b \geq 2$.
\end{lemma}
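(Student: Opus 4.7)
The plan is to treat the case $a \geq 2$; the case $b \geq 2$ follows by symmetry (reflect $\calE_H$ to swap $f_A$ and $f_B$). My strategy is to modify $H_{uv}$ only inside the face $f_A$ using small ``absorption gadgets'', which automatically leaves the number of degree-$1$ vertices inside $f_B$ equal to $b$ and leaves $u, v$ untouched, so their degrees in $G_{uv}$ are preserved.

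First, I would enumerate the degree-$1$ vertices of $H_{uv}$ lying in $f_A$ as $x_1, \dots, x_a$ in the cyclic order induced by the boundary walk of the outer face of $\calE_H$. Each gadget will be drawn close to the boundary of $f_A$ so as not to enclose any other leaf or any vertex of $G_{uv}$ inside a new bounded face. I use two gadgets, each acting on a chosen pair of consecutive leaves $x_i, x_{i+1}$.
Gadget $D_{-2}$ simply adds two parallel edges between $x_i$ and $x_{i+1}$: both vertices become degree $3$, no new vertex is introduced, and the two new edges form a $2$-cycle, hence are not bridges. The net effect inside $f_A$ is to reduce the number of degree-$1$ vertices by two.
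Gadget $D_{-1}$ introduces two new vertices $r, p$ inside $f_A$, together with edges $x_i r$, $x_{i+1} r$, $x_i x_{i+1}$, and $rp$. Now $x_i, x_{i+1}, r$ all have degree $3$, while $p$ is a degree-$1$ vertex, which I place inside the outer face by drawing the triangle $x_i r x_{i+1}$ close to the boundary of $f_A$ with $p$ lying on the side of the triangle containing $u$ and $v$. The only new bridge is $rp$, which is incident to the degree-$1$ vertex $p$. The net effect inside $f_A$ is to reduce the number of degree-$1$ vertices by one.

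Second, I would chain these gadgets. To reach $d = (0, b)$: if $a$ is even, apply $D_{-2}$ to $a/2$ disjoint consecutive pairs; if $a$ is odd (so $a \geq 3$), apply $D_{-1}$ once, reducing $a$ to the even number $a-1$, and then apply $D_{-2}$ to the remaining $(a-1)/2$ pairs. To reach $d = (1, b)$: if $a$ is odd, apply $D_{-2}$ to $(a-1)/2$ disjoint pairs, leaving one leaf behind; if $a$ is even (so $a \geq 2$), apply $D_{-2}$ to $(a-2)/2$ pairs first (zero times when $a=2$), then apply $D_{-1}$ to the last two remaining leaves, leaving only the new leaf $p$ in $f_A$. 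In each case, verifying the inner-augmentation conditions is routine: the new embedding extends $\calE_{uv}$ because all modifications lie in $f_A$, which is a subregion of the outer face of $\calE_{uv}$; every vertex other than $u$ and $v$ has degree $1$ or $3$; every remaining degree-$1$ vertex lies on the outer face; and the only new bridges (namely $rp$ in each $D_{-1}$-gadget) are incident to degree-$1$ vertices.

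The main obstacle is a parity issue: since every graph has even total degree, one cannot reduce the number of degree-$1$ vertices by exactly one while adding only new vertices of degree $3$. This rules out naive ``absorb a single leaf'' gadgets, and in particular means that $D_{-2}$ alone cannot handle odd values of $a$ when the target is $0$, nor even values of $a$ when the target is $1$. The gadget $D_{-1}$ resolves this by simultaneously absorbing two leaves and producing one new leaf, yielding a net reduction of exactly one while all other newly introduced vertices have degree $3$. This odd/even flexibility is what lets us hit both targets $0$ and $1$ from any $a \geq 2$.
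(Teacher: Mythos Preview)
Your proof is correct but takes a different route from the paper. The paper absorbs all $a$ leaves in one shot: it places a single copy of $K_4^{(a)}$ (a $K_4$ with $a$ subdivision vertices on its outer boundary) into $f_A$ and identifies its $a$ degree-$2$ vertices with the $a$ leaves in a non-crossing way, yielding $d=(0,b)$; for $d=(1,b)$ it then subdivides one outer edge of that $K_4^{(a)}$ and hangs a single new pendant into $f_A$. Your approach instead works pairwise, repeatedly applying the local gadgets $D_{-2}$ and $D_{-1}$ to consecutive leaves and handling parity explicitly. The paper's construction is shorter and avoids any case analysis on the parity of $a$, while yours is more elementary in that it needs no global gadget and makes the ``net change $-1$ versus $-2$'' mechanism transparent. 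One small point worth spelling out in your write-up: after a gadget turns $x_i,x_{i+1}$ into degree-$3$ vertices, their former pendant edges $x_iw_i$, $x_{i+1}w_{i+1}$ were bridges of $H_{uv}$ (and possibly not of $G_{uv}$), so you should note that they cease to be bridges---which is immediate since $x_i$ now reaches $w_i$ through $x_{i+1}$ and $w_{i+1}$, using connectedness of $H_{uv}$ minus its leaves.
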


\begin{proof}
    Add the edge~$uv$ to the inner augmentation~$H_{uv}$ such that there are $a$~degree-$1$ vertices in~$f_A$.
    We add a copy of~$K_4^{(a)}$ into~$f_A$ and identify the~$a$ degree-$2$ vertices of~$K_4^{(a)}$ with the~$a$ degree-$1$ vertices in~$f_A$ in a non-crossing way.
    Ignoring edge~$uv$, the obtained graph is the desired inner augmentation~$H_{uv}^0$ with $d(H_{uv}^0, \calE_H^0) = (0,b)$.
    We obtain~$H_{uv}^1$ by additionally subdividing an edge of~$K_4^{(a)}$ that is incident to~$f_A$ once and by attaching a degree-$1$ vertex to it into~$f_A$.
\end{proof}

Motivated by \cref{lem:wlog-0-or-1}, we focus on inner augmentations~$H_{uv}$ with~$d(H_{uv}, \calE_H) = (a,b)$ where~$a,b \in \{0,1\}$, and assign to $H_{uv}$ in this case the \emph{label}~$\lab{ab}$ with~$\lab{a},\lab{b} \in \{\lab{0},\lab{1}\}$.

The \emph{embedded label set}~$\LSemb{G_{uv}}{\calE_{uv}}$ contains all labels~$\lab{ab}$ such that there is an inner augmentation~$H_{uv}$ of~$G_{uv}$ with label~$\lab{ab}$.
Allowing other planar embeddings of~$G_{uv}$, we further define the \emph{variable label set} as $\LSvar{G_{uv}} = \bigcup_{\calE} \LSemb{G_{uv}}{\calE}$, where~$\calE$ runs over all planar embeddings of~$G_{uv}$ where~$u$ and $v$ are outer vertices.
As this in particular includes for each embedding $\calE$ of $G_{uv}$ also the flipped embedding $\calE'$ of $G_{uv}$, it follows that $\lab{a}\lab{b} \in \LSvar{G_{uv}}$ if and only if $\lab{b}\lab{a} \in \LSvar{G_{uv}}$.
Whenever this property holds for a (variable or embedded) label set, we call the label set \emph{symmetric}.
Hence, all variable label sets are symmetric, but embedded label sets may or may not be symmetric.

For brevity, let us use~$\star$ as a wildcard character, in the sense that if $\{\lab{x0}, \lab{x1}\}$ is in an embedded or variable label set for some $\lab{x} \in \{\lab{0},\lab{1}\}$, then we shorten the notation and replace them by a label $\lab{x}\star$.
Symmetrically, we use the notation $\star\lab{x}$ and in particular define $\{\star\star\} \coloneqq \{\lab{00}, \lab{01}, \lab{10},\lab{11}\}$.
Using this notation, the eight possible symmetric label sets are:
\begin{equation}
    \label{eqn:variable-label-sets}
    \emptyset,
    \{\lab{00}\},
    \{\lab{01}, \lab{10}\},
    \{\lab{11}\},
    \{\lab{0}\star, \star\lab{0}\},
    \{\lab{00}, \lab{11}\},
    \{\lab{1}\star, \star\lab{1}\},
    \{\star\star\}
\end{equation}

The following lemma reveals the significance of inner augmentations and label sets.

\begin{lemma}
    \label{lem:3aug-if-00-label}
    Let~$G$ be a $2$-connected graph with $\Delta(G) \leq 3$ and~$\calE$ be an embedding of~$G$ with some outer edge~$e = xy$.
    Further, let~$G_{uv}$ be the $uv$-graph obtained from~$G$ by deleting~$e$ and adding two new vertices $u,v$ with edges~$ux$ and~$vy$ into the outer face of~$\calE$.
    Then~$G$ has a $2$-connected $3$-augmentation if and only if $\lab{0}\lab{0} \in \LSvar{G_{uv}}$.
\end{lemma}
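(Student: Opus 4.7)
The plan is to prove both implications via a single, natural construction that is the same local operation as the one defining $G_{uv}$ from $G$, just applied one level up: in one direction $H_{uv} \coloneqq (H - e) + \{u, v, ux, vy\}$ with $u, v$ inserted in the outer face near $x, y$; in the other direction $H \coloneqq (H_{uv} - \{u,v\}) + xy$. In both cases the embedding transfers naturally.

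For the ``$\Leftarrow$'' direction, I would start from an inner augmentation $H_{uv}$ of some embedding $\calE_{uv}$ of $G_{uv}$ with label $\lab{00}$. The key preliminary observation is that $H_{uv}$ has no degree-$1$ vertices other than $u, v$: the inner-augmentation definition only allows degrees $\{1,3\}$ away from $u, v$, and any other degree-$1$ vertex would have to lie inside $f_A$ or $f_B$, contradicting label $\lab{00}$. In particular $\deg_{H_{uv}}(x) = \deg_{H_{uv}}(y) = 3$, so the operation $(H_{uv} - \{u,v\}) + xy$ keeps $x, y$ at degree $3$ and yields a cubic planar graph $H \supseteq G$. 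Connectedness of $H$ follows immediately from that of $H_{uv}$ since $u, v$ are pendants. For bridgelessness (equivalent to $2$-connectedness here), I would verify: the edge $xy$ is non-bridge because any $xy$-path of $H_{uv}$ avoids $u, v$ and so survives in $H - xy$; any $e' \neq xy$ that is non-bridge in $H_{uv}$ remains non-bridge in $H$; and if $e'$ is a bridge of $H_{uv}$, the bridge condition forces it to be a bridge of $G_{uv}$, in which case $G_{uv} - e'$ splits as $A \cup \{u\}$ and $B \cup \{v\}$ with $x \in A$, $y \in B$, so the re-added edge $xy$ reconnects the two sides in $H - e'$.

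For the ``$\Rightarrow$'' direction, I would pick any planar embedding of $H$ with $e$ on the outer face and form $H_{uv}$ as above. Cubicity of $H$ makes the degree conditions straightforward: every non-$\{u,v\}$ vertex has degree $3$ (with $x, y$ keeping degree $3$ after trading $e$ for $xu$ and $yv$), and the only degree-$1$ vertices $u, v$ sit on the splitting edge $e_{uv}$, contributing $0$ to both coordinates of $d(H_{uv}, \calE_H)$ and so yielding label $\lab{00}$. The main obstacle, and the conceptual heart of the lemma, is the bridge condition: a bridge of $H_{uv}$ other than the pendants $ux, vy$ corresponds to an edge $f$ such that $\{e, f\}$ is a $2$-edge-cut of $H$. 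If $f \in E(G) - e$, the same cut argument inside $G$ shows $f$ is a bridge of $G - e$, hence of $G_{uv}$. The crucial step is ruling out $f \in E(H) - E(G)$: if such an $f$ existed, then $e$ would be the \emph{only} edge of $G$ between the two sides of the vertex bipartition of $V(H)$ induced by the cut, because every edge of $G - e$ is an edge of $H - \{e, f\}$ and thus lies entirely on one side; this would make $e$ a bridge of $G$, contradicting $G$'s $2$-connectivity. This contradiction is the unique place where the $2$-connectivity hypothesis on $G$ is indispensable.
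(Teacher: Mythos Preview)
Your proposal is correct and takes essentially the same approach as the paper: delete $e$ and add the pendants $u,v$ to go from $H$ to $H_{uv}$ in one direction, and undo this in the other. In fact, you are more careful than the paper in two places. In the $\Rightarrow$ direction, the paper does not explicitly verify the bridge condition for $H_{uv}$, whereas you give the clean argument that a new bridge $f\in E(H)\setminus E(G)$ would make $e$ the unique $G$-edge across the induced cut, contradicting $2$-connectivity of $G$. In the $\Leftarrow$ direction, the paper simply asserts that the only bridges of $H_{uv}$ are $ux,vy$, while you instead handle a potential surviving bridge $e'$ of $G_{uv}$ directly and show that $x,y$ lie on opposite sides so that the reinserted edge $xy$ kills it; this is the more robust argument. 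The only point the paper adds that you omit is the cosmetic fix for the case that $H_{uv}$ already contains an edge $xy$ (replace one copy by a $K_4^{(2)}$ gadget to keep $H$ simple); under the paper's convention that multi-edges are allowed this is inessential, but it is worth a sentence.
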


\begin{proof}
    First let $H \supseteq G$ be a $2$-connected $3$-augmentation of~$G$ and let~$\calE_H$ be an embedding of~$H$ with $e=xy$ being an outer edge.
    Then deleting~$e$ and adding two new vertices $u,v$ with edges~$ux$ and~$vy$ into the outer face of~$\calE_H$ results in an inner augmentation~$H_{uv}$ of~$G_{uv}$ with respect to the embedding of~$G_{uv}$ inherited from~$\calE_H$.
    As adding an edge~$e_{uv}$ from~$u$ to~$v$ into~$H_{uv}$ gives a graph with no degree-$1$ vertices, we have $\lab{0}\lab{0} \in \LSvar{G_{uv}}$.
    
    Conversely, assume that $\lab{0}\lab{0} \in \LSvar{G_{uv}}$.
    Then there is an embedding~$\calE_{uv}$ of~$G_{uv}$ that allows for some inner augmentation~$H_{uv}$ with embedding~$\calE_H$ for which $H_{uv} + e_{uv}$ has no degree-$1$ vertices, where $e_{uv} = uv$ denotes a new edge between~$u$ and~$v$.
    Thus, in $H_{uv}$ the vertices~$u$ and~$v$ have degree~$1$ (as in~$G_{uv}$), every vertex of~$H_{uv}$ except~$u,v$ has degree~$3$, and the only bridges of~$H_{uv}$ are the edges~$ux$ and~$vy$.
    Then we obtain a $2$-connected $3$-augmentation~$H$ of~$G$ by removing $ux,vy$ from~$H_{uv}$ and adding the edge~$xy$ into the outer face of~$\calE_H$.
    In case, $H_{uv}$ already contains the edge $xy$, this is replaced by a copy of $K_4^{(2)}$ with two non-crossing edges between $x,y$ and the two degree-$2$ vertices of $K_4^{(2)}$.
\end{proof}

\subparagraph{Gadgets.}
In our algorithm below, we aim to replace certain $uv$-graphs~$X$ (with variable embedding) by $uv$-graphs~$Y$ with fixed embedding~$\calE_Y$, such that the variable label set~$\LSvar{X}$ equals the embedded label set~$\LSemb{Y}{\calE_Y}$.
This will allow us to use \cref{prop:2con-3aug-fixed-multi-graph} from the fixed embedding setting as a subroutine.

The following \lcnamecref{lem:gadgets} describes seven $uv$-graphs, each with a fixed embedding, corresponding to the seven different non-empty variable label sets as given in~\eqref{eqn:variable-label-sets}.
For this purpose, each such \emph{gadget} is itself a $uv$-graph~$Y$ with a fixed embedding~$\calE_Y$.

\begin{lemma}
    \label{lem:gadgets}
    For every $uv$-graph~$G_{uv}$ with $\LSvar{G_{uv}} \neq \emptyset$ there exists a gadget~$Y$ with an embedding~$\calE_Y$ such that~$u,v$ are outer vertices and~$\LSemb{Y}{\calE_Y} = \LSvar{G_{uv}}$.
\end{lemma}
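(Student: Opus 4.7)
The plan is to reduce the claim to a short finite case analysis. Since a variable label set is always symmetric (flipping an embedding~$\calE$ of~$G_{uv}$ interchanges its left and right outer path, so $\lab{ab} \in \LSvar{G_{uv}}$ iff $\lab{ba} \in \LSvar{G_{uv}}$), the set~$\LSvar{G_{uv}}$ must equal one of the eight sets listed in~\eqref{eqn:variable-label-sets}. If it is non-empty, it is one of exactly seven sets. Consequently, it suffices to exhibit, once and for all, seven fixed-embedding gadgets~$Y_L$ (one per non-empty symmetric label set~$L$) such that $\LSemb{Y_L}{\calE_{Y_L}} = L$. Whatever the specific input $uv$-graph~$G_{uv}$ is, we then associate to it the gadget~$Y_L$ for $L = \LSvar{G_{uv}}$.

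Each of the seven gadgets is a small embedded $uv$-multigraph of constant size. The construction principle is to start from a $uv$-graph whose outer boundary consists of a left and a right $uv$-path, and to insert into each side either (i) a small degree-balanced subgraph based on~$K_4$ that admits an inner augmentation entirely in its own region without leaving any stub (contributing a $\lab{0}$ on that side), (ii) a single degree-$2$ vertex attached in such a way that the unique new edge incident to it is forced into the corresponding outer face and terminates at a new degree-$1$ vertex there (contributing a $\lab{1}$ on that side), or (iii) a small $K_4$-with-subdivided-edge type piece that leaves a stub optionally realizable as $\lab{0}$ or $\lab{1}$ (contributing a $\star$). Mixing these three local choices on the two sides and sometimes rigidifying the combination with an extra connecting edge between the sides (used to enforce the diagonal sets $\{\lab{00}\}$, $\{\lab{11}\}$, $\{\lab{01},\lab{10}\}$, $\{\lab{00},\lab{11}\}$) realizes each of the seven target sets.

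For each candidate gadget~$Y_L$ the verification splits into two inclusions. The easy direction $L \subseteq \LSemb{Y_L}{\calE_{Y_L}}$ is established by exhibiting one explicit inner augmentation per label in~$L$. The main obstacle is the converse inclusion $\LSemb{Y_L}{\calE_{Y_L}} \subseteq L$, i.e.\ ruling out unwanted labels. Here the fixed embedding of~$Y_L$ is decisive: since new edges and vertices can only be placed inside the outer face, each degree-$2$ vertex of~$Y_L$ must receive its single missing edge on one prescribed side of the outer $uv$-path, and the requirements from the definition of inner augmentations (all other new vertices have degree~$3$, and no new bridge is allowed except at a degree-$1$ vertex) restrict how these missing edges may be connected up. Because every gadget~$Y_L$ has constant size, the resulting case distinction is finite and each forbidden label can be eliminated by an explicit combinatorial argument.
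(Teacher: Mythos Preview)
Your overall strategy coincides with the paper's: observe that $\LSvar{G_{uv}}$ is one of the seven non-empty symmetric subsets of $\{\lab{00},\lab{01},\lab{10},\lab{11}\}$, exhibit one constant-size embedded $uv$-gadget for each of the seven, and verify both inclusions for each gadget by a short case analysis. That reduction is exactly what the paper does.

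Where your sketch goes wrong is in the verification step. You assert that in your gadgets ``new edges and vertices can only be placed inside the outer face,'' and you use this to argue that each degree-$2$ vertex is forced to send its missing edge to a prescribed side. But the definition of an inner augmentation allows new edges and new degree-$3$ vertices in \emph{any} face of~$\calE_Y$; only degree-$1$ vertices are restricted to the outer face. If you build your gadgets so that every inner face has at most one incident degree-$2$ vertex (the paper calls such faces ``gray''), then indeed no new edge can go there, because a single new edge in an inner face would be a forbidden bridge. But your composition scheme of independent left/right ``$\lab{0}$-, $\lab{1}$-, $\star$-pieces'' cannot realize all seven sets with only gray inner faces: for the set $\{\lab{00},\lab{11}\}$ you need two degree-$2$ vertices~$\ell,r$ on opposite sides of the outer face that can \emph{either} both send their edge into a common \emph{inner} face (yielding $\lab{00}$) \emph{or} both into the outer face on their respective sides (yielding $\lab{11}$), while forbidding the mixed choices. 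The paper's gadgets for $\{\lab{00},\lab{11}\}$, $\{\lab{0}\star,\star\lab{0}\}$, and $\{\lab{1}\star,\star\lab{1}\}$ all rely on such a shared non-gray inner face, and the exclusion arguments hinge on the bridge condition inside that face, not on edges being confined to the outer face. Your ``extra connecting edge between the sides'' does not substitute for this, because once the left and right sides are separated by the $uv$-paths in the outer face, new edges on opposite sides cannot meet there.

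So the plan is right, but the concrete construction principle you describe does not produce all seven gadgets, and the stated reason for the reverse inclusion is incorrect; you need to allow (and exploit) inner faces with several incident degree-$2$ vertices, and argue via the no-new-bridge condition inside those faces, as the paper does.
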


\begin{figure}[bt]
    \centering
    \begin{subfigure}[b]{0.23\textwidth}
        \centering
        \includegraphics[page=1]{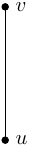}
        \caption{$\LSemb{Y}{\calE_Y} \!=\! \{\lab{00}\}$}
        \label{fig:gadget_00}
    \end{subfigure}
    \begin{subfigure}[b]{0.26\textwidth}
        \centering
        \includegraphics[page=2]{figures_journal/gadgets.pdf}
        \caption{$\LSemb{Y}{\calE_Y} \!=\! \{\lab{01}, \lab{10}\}$}
        \label{fig:gadget_01}
    \end{subfigure}
    \begin{subfigure}[b]{0.26\textwidth}
        \centering
        \includegraphics[page=6]{figures_journal/gadgets.pdf}
        \caption{$\LSemb{Y}{\calE_Y} \!=\! \{\lab{00},\lab{11}\}$}
        \label{fig:gadget_0011}
    \end{subfigure}
    \begin{subfigure}[b]{0.22\textwidth}
        \centering
        \includegraphics[page=4]{figures_journal/gadgets.pdf}
        \caption{$\LSemb{Y}{\calE_Y} \!=\! \{\lab{11}\}$}
        \label{fig:gadget_11}
    \end{subfigure}
    \par\bigskip
    \begin{subfigure}[b]{0.23\textwidth}
        \centering
        \includegraphics[page=3]{figures_journal/gadgets.pdf}
        \caption{$\LSemb{Y}{\calE_Y} \!=\! \{\star\star\}$}
        \label{fig:gadget_ss}
    \end{subfigure}
    \begin{subfigure}[b]{0.26\textwidth}
        \centering
        \includegraphics[page=5]{figures_journal/gadgets.pdf}
        \caption{$\LSemb{Y}{\calE_Y} \!=\! \{\lab{0}\star, \star\lab{0}\}$}
        \label{fig:gadget_0s}
    \end{subfigure}
    \begin{subfigure}[b]{0.26\textwidth}
        \centering
        \includegraphics[page=7]{figures_journal/gadgets.pdf}
        \caption{$\LSemb{Y}{\calE_Y} \!=\! \{\lab{1}\star, \star\lab{1}\}$}
        \label{fig:gadget_1s}
    \end{subfigure}
    \begin{subfigure}[b]{0.22\textwidth}
        \centering
        \includegraphics[page=8]{figures_journal/gadgets.pdf}
        \caption{Only degree $3$.}
        \label{fig:gadget_unsaturated}
    \end{subfigure}
    \caption{
        \subref{fig:gadget_00}--\subref{fig:gadget_1s}~The seven gadgets for the seven non-empty variable label sets~$\LSvar{G_{uv}}$ of a $uv$-graph~$G_{uv}$.
        \subref{fig:gadget_unsaturated}~Modification to locally replace a degree-$2$ vertex by four degree-$3$-vertices.
    }
    \label{fig:gadgets}
\end{figure}

\begin{proof}
    We distinguish the seven cases of what~$\LSvar{G_{uv}}$ is according to \eqref{eqn:variable-label-sets}.
    In each of our gadgets, which are shown in \cref{fig:gadgets}, vertices~$u$ and~$v$ are the only degree~$1$ vertices.
    For convenience, let us call the degree-$2$ vertices in the gadgets~$Y$ the \emph{white vertices}.
    Hence, each white vertex~$w$ (names as in \cref{fig:gadgets}) has exactly one new incident edge in an inner augmentation~$H$ of~$Y$, and we denote this new edge by~$e_w$.
    Observe that all gadgets, except for the one in \cref{fig:gadget_ss} for label set~$\{\star\star\}$, have at most one white vertex on either side (left or right) of the outer face of~$\calE_Y$.
    Hence whenever such a white vertex~$w$ has its new edge~$e_w$ in the outer face of~$\calE_Y$, then~$e_w$ is a bridge in~$H$ and its other endpoint has degree~$1$, thus counting towards~$d(H,\calE_H)$.
    Also for convenience, we call an inner face of~$\calE_Y$ that has~$0$ or~$1$ incident white vertex a \emph{gray face}.
    Observe that gray faces contain no new edges of~$H$ as inner faces may not contain bridges of~$H$.
    
    We proceed by going through the seven gadgets one-by-one.
    Note that in each case it is enough to check which of $\lab{00}$, $\lab{01}$, $\lab{10}$ and $\lab{11}$ are contained in a label set in order to determine it uniquely.
    For this we always let~$H$ be any inner augmentation of~$Y$.
    
    \begin{itemize}
        \item Case $\LSvar{G_{uv}} = \{\lab{00}\}$, i.e., $\lab{00} \in \LSvar{G_{uv}}$ and $\lab{01},\lab{10},\lab{11} \notin \LSvar{G_{uv}}$.
        Consider the gadget~$Y$ and its embedding~$\calE_Y$ represented in Figure~\ref{fig:gadget_00}. 
        We have no white vertices and thus only one inner augmentation~$H = Y$.
        Thus, $\LSemb{Y}{\calE_Y} = \{\lab{00}\}$ follows.
        
        \item Case $\LSvar{G_{uv}} = \{\lab{01}, \lab{10}\}$, i.e., $\lab{01},\lab{10} \in \LSvar{G_{uv}}$ and $\lab{00},\lab{11} \notin \LSvar{G_{uv}}$.
        Consider the gadget~$Y$ and its embedding~$\calE_Y$ represented in Figure~\ref{fig:gadget_01}. 
        The only white vertex~$x$ has its edge~$e_x$ in~$H$ in the outer face.
        Clearly there are exactly two possibilities; $e_x$ on the left or on the right.
        Thus, we have $\LSemb{Y}{\calE_Y} = \{\lab{01},\lab{10}\}$.
        
        \item Case $\LSvar{G_{uv}} = \{\lab{00},\lab{11}\}$, i.e., $\lab{00},\lab{11} \in \LSvar{G_{uv}}$ and $\lab{01},\lab{10} \notin \LSvar{G_{uv}}$.
        Consider the gadget~$Y$ and its embedding~$\calE_Y$ represented in Figure~\ref{fig:gadget_0011}. 
        We have two white vertices~$\ell$ and~$r$.
        If one of $e_\ell, e_r$ lies in the inner face~$f$ of~$\calE_Y$, then the other also lies in~$f$, as otherwise there would be a bridge of~$H$ in~$f$. 
        Thus either both edges~$e_\ell$ and~$e_r$ are in~$f$ or none, so $\lab{01},\lab{10} \notin \LSemb{Y}{\calE_Y}$.
        To obtain $\lab{00} \in \LSemb{Y}{\calE_Y}$, add an edge $e = e_\ell = e_r$ between~$\ell$ and~$r$ in~$f$.
        To obtain $\lab{11} \in \LSemb{Y}{\calE_Y}$, put~$e_\ell$,~$e_r$ into the outer face of $\calE_Y$.
        
        \item Case $\LSvar{G_{uv}} = \{\lab{11}\}$, i.e., $\lab{11} \in \LSvar{G_{uv}}$ and $\lab{00}, \lab{01}, \lab{10} \notin \LSvar{G_{uv}}$.
        Consider the gadget~$Y$ and its embedding~$\calE_Y$ represented in Figure~\ref{fig:gadget_11}. 
        We have two white vertices~$\ell,r$.
        Since both inner faces are gray, we have that~$e_\ell$ and~$e_r$ lie on separate sides of the outer face and form bridges of~$H$.
        Thus, there exists exactly one inner augmentation of~$Y$ for this embedding~$\calE_Y$ and we have $\LSemb{Y}{\calE_Y} = \{\lab{11}\}$.
        
        \item Case $\LSvar{G_{uv}} = \{\star\star\}$, i.e., $\lab{00},\lab{01},\lab{10},\lab{11} \in \LSvar{G_{uv}}$. 
        Consider the gadget~$Y$ and its embedding~$\calE_Y$ represented in Figure~\ref{fig:gadget_ss}. 
        We have three white vertices~$x$, $y$ and~$z$.
        To obtain $\lab{00} \in \LSemb{Y}{\calE_Y}$, put a new vertex into the outer face of~$\calE_Y$ and connect it to $x$, $y$, $z$ by the edges $e_x$, $e_y$, $e_z$, respectively.
        To obtain $\lab{01} \in \LSemb{Y}{\calE_Y}$, add an edge $e = e_x = e_y$ between~$x$ and~$y$ in the outer face and put~$e_z$ into the outer face on the right.
        Symmetrically, we obtain $\lab{10} \in \LSemb{Y}{\calE_Y}$.
        To obtain $\lab{11} \in \LSemb{Y}{\calE_Y}$, put~$e_x$ into the outer face on the left, add a new vertex~$w$ into the outer face on the right, connect~$w$ to~$y$, $z$ by the edges $e_y$, $e_z$, respectively, and add a pendant edge at~$w$ into the outer face of the result.
    
        \item Case $\LSvar{G_{uv}} = \{\lab{0}\star, \star\lab{0}\}$, i.e., $\lab{0}\lab{0},\lab{0}\lab{1},\lab{1}\lab{0} \in \LSvar{G_{uv}}$ and $\lab{1}\lab{1} \notin \LSvar{G_{uv}}$.
        Consider the gadget~$Y$ and its embedding~$\calE_Y$ represented in Figure~\ref{fig:gadget_0s}. 
        We have three white vertices $\ell$, $x$, $r$ and the edge~$e_x$ in~$H$ lies in the face~$f$ of~$\calE_Y$ with all three white vertices on its boundary (since~$e_x$ may not lie in the gray face).
        Since~$e_x$ is not a bridge, at least one of~$e_\ell$ and~$e_r$ lies within~$f$ as well. 
        Therefore,~$e_\ell$ and~$e_r$ cannot both lie in the outer face of~$Y$, so $\lab{11} \notin \LSemb{Y}{\calE_Y}$ follows. 
        To obtain $\lab{00} \in \LSemb{Y}{\calE_Y}$, put a new vertex into~$f$ and connect it to $\ell$, $x$, $r$ by the edges $e_\ell$, $e_x$, $e_r$, respectively.
        To obtain $\lab{01} \in \LSemb{Y}{\calE_Y}$, put~$e_\ell$ into the outer face of~$\calE_Y$, and add an edge $e = e_x = e_y$ between~$x$ and~$y$ in~$f$.
        Symmetrically, we obtain that $\lab{01} \in \LSemb{Y}{\calE_Y}$.
    
        \item Case $\LSvar{G_{uv}} = \{\lab{1}\star, \star\lab{1}\}$, i.e., $\lab{01},\lab{10},\lab{11} \in \LSvar{G_{uv}}$ and $\lab{00} \notin \LSvar{G_{uv}}$.
        Consider the gadget~$Y$ and its embedding represented in Figure~\ref{fig:gadget_1s}. 
        We have five white vertices $\ell$, $x$, $y$, $z$, and~$r$.
        Let~$f$ be the inner face of~$\calE_Y$ with $x$, $y$ and~$z$ on its boundary.
        As the other face at~$y$ is gray, edge~$e_y$ lies in~$f$.
        Now suppose that none of $e_\ell$, $e_r$ lie in the outer face of~$\calE_Y$.
        As then~$e_\ell$ is not a bridge,~$e_x$ lies not in~$f$.
        Similarly, as~$e_r$ is then not a bridge,~$e_z$ lies not in~$f$.
        But then~$e_y$ is a bridge in the inner face~$f$, which is impossible for an inner augmentation.
        Hence $\lab{00} \notin \LSemb{Y}{\calE_Y}$.
    
        To obtain $\lab{01} \in \LSemb{Y}{\calE_Y}$, put~$e_r$ into the outer face of~$\calE_Y$, add an edge $e = e_y = e_z$ between~$y$ and~$z$ into~$f$, and an edge $e' = e_\ell = e_x$ between~$\ell$ and~$x$ into their common inner face in~$\calE_Y$.
        Symmetrically, we obtain $\lab{10} \in \LSemb{Y}{\calE_Y}$.
        Finally, to obtain $\lab{11} \in \LSemb{Y}{\calE_Y}$, put both~$e_\ell$ and~$e_r$ into the outer face of~$\calE_Y$, add a new vertex into~$f$ and connect it to $x$, $y$, $z$ by edges $e_x$, $e_y$, $e_z$, respectively.
        \qedhere
    \end{itemize}
\end{proof}

We remark that \cref{fig:gadget_unsaturated} is not a gadget, and instead a local modification that we use at other places, such as in the proof of \cref{lem:label-check}.

\subparagraph{Computing a Label Set.}
In our algorithm below we want to compute the variable label sets of~$\pert(\mu)$ for vertices~$\mu$ of the rooted SPQR-tree $T$ of $G$.
As we will see, we can reduce this to a constant number of computations of embedded label sets of certain $uv$-graphs that are specifically crafted to encode all the possible embeddings of~$\pert(\mu)$.
The following \lcnamecref{lem:label-check} describes how to do this.

\begin{lemma}
    \label{lem:label-check}
    Let~$G_{uv}$ be an~$n$-vertex $uv$-graph and~$\calE_{uv}$ a planar embedding where~$u$ and~$v$ are outer vertices.
    Then we can check each of the following in time~$\mathcal{O}(n^2)$:
    \begin{itemize}
        \item Whether~$\lab{00} \in \LSemb{G_{uv}}{\calE_{uv}}$.
        \item Whether~$\lab{01} \in \LSemb{G_{uv}}{\calE_{uv}}$ or~$\lab{10} \in \LSemb{G_{uv}}{\calE_{uv}}$.
        \item Whether~$\lab{11} \in \LSemb{G_{uv}}{\calE_{uv}}$.
    \end{itemize}
    In particular, if~$\LSemb{G_{uv}}{\calE_{uv}}$ is symmetric, then this is sufficient to determine the exact embedded label set~$\LSemb{G_{uv}}{\calE_{uv}}$.
\end{lemma}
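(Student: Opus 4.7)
My plan is to reduce each of the three label checks to one application of the fixed-embedding $2$-connected $3$-augmentation algorithm from \cref{prop:2con-3aug-fixed-multi-graph}, run on a small auxiliary multigraph built from $G_{uv}$ and $\calE_{uv}$. For each label $\lab{\ell}$ under consideration, I would construct an auxiliary $uv$-graph $\tilde{G}_{\lab{\ell}}$ with a fixed embedding $\tilde{\calE}_{\lab{\ell}}$ by inserting a constant-size gadget $Y$ from \cref{lem:gadgets} inside the outer face of $\calE_{uv}$ and identifying the distinguished vertices of $Y$ with $u$ and $v$ of $G_{uv}$. Concretely, to test $\lab{00}$ I would use the gadget with $\LSemb{Y}{\calE_Y} = \{\lab{00}\}$; to test ``$\lab{01} \in \LSemb{G_{uv}}{\calE_{uv}}$ or $\lab{10} \in \LSemb{G_{uv}}{\calE_{uv}}$'' I would use the gadget with $\LSemb{Y}{\calE_Y} = \{\lab{01},\lab{10}\}$; and to test $\lab{11}$ I would use the gadget with $\LSemb{Y}{\calE_Y} = \{\lab{11}\}$.

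The central equivalence I then need to establish is: $\tilde{G}_{\lab{\ell}}$ admits a $2$-connected $3$-augmentation extending $\tilde{\calE}_{\lab{\ell}}$ if and only if the queried label lies in $\LSemb{G_{uv}}{\calE_{uv}}$. For the backward direction, I combine an inner augmentation of $G_{uv}$ carrying the queried label with the compatible inner augmentation of $Y$ provided by the construction of the gadget; the degree-$1$ vertices of the two augmentations in the annular region between $G_{uv}$ and $Y$ then pair up across the shared outer-face boundary into degree-$3$ connections, producing a global $2$-connected $3$-augmentation of $\tilde{G}_{\lab{\ell}}$. For the forward direction, any such augmentation $\tilde{H}$ restricts to inner augmentations of both sides; because $\LSemb{Y}{\calE_Y}$ contains only the queried label, matching slots across the boundary forces the inner augmentation of $G_{uv}$ to also carry that label.

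Each gadget has constant size, so $\tilde{G}_{\lab{\ell}}$ has $n + O(1)$ vertices and each call to \cref{prop:2con-3aug-fixed-multi-graph} runs in $\mathcal{O}(n^2)$ time; the three constantly-many checks thus stay within the claimed bound and, when $\LSemb{G_{uv}}{\calE_{uv}}$ is symmetric, their outcomes jointly determine the exact label set among the eight possibilities listed in~\eqref{eqn:variable-label-sets}. I expect the main obstacle to be the bookkeeping around the degrees of $u$ and $v$, which may be $0$, $1$, or $2$ in both $G_{uv}$ and in $Y$. After identification the merged vertex might end up with degree different from $3$, breaking the correspondence, since a $3$-augmentation of $\tilde{G}_{\lab{\ell}}$ could then add edges at the identified vertex in violation of the ``$u,v$ keep their degree'' rule for inner augmentations. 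I plan to address this by first padding $G_{uv}$ at each low-degree endpoint, applying the local modification of \cref{fig:gadget_unsaturated} and, where parallel edges arise from the gluing, the gadget of \cref{fig:preprocessing_parallel}, so that the merged vertex has exactly degree $3$ and the translation between inner augmentations and fixed-embedding $3$-augmentations becomes faithful.
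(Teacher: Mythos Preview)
Your proposal is correct and follows essentially the same approach as the paper. The paper explicitly remarks after its proof that ``for each label~$\lab{ab}$ we actually added the gadget for the embedded label set $\{\lab{ab}\}$ between vertices~$u,v$,'' which is exactly your plan; the paper merely spells out the three gadgets concretely (a single edge for $\lab{00}$, a length-two path for $\{\lab{01},\lab{10}\}$, the graph of \cref{fig:gadget_11} for $\lab{11}$) and handles the degree bookkeeping at $u,v$ via \cref{fig:gadget_unsaturated} just as you anticipate.
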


\begin{proof}
    The eight possible symmetric label sets in~\eqref{eqn:variable-label-sets} correspond bijectively to the eight possible yes-/no-answer combinations of the above three checks.
    Hence these checks are sufficient to determine~$\LSemb{G_{uv}}{\calE_{uv}}$, provided it is symmetric.
    \begin{itemize}
        \item To check whether~$\lab{00} \in \LSemb{G_{uv}}{\calE_{uv}}$, add an edge~$e_{uv}$ between~$u,v$ into the outer face of~$\calE_{uv}$.
        If afterwards~$u$ and/or~$v$ has degree~$2$, then we further replace them with degree\nobreakdash-$3$ vertices using the gadget shown in \cref{fig:gadget_unsaturated}.
        We call the obtained embedded planar multigraph~$G_{uv}^+$.
        We claim that $\lab{00} \in \LSemb{G_{uv}}{\calE_{uv}}$ if and only if~$G_{uv}^+$ has a $2$-connected $3$-augmentation extending its planar embedding.
        
        Indeed, removing edge~$e_{uv}$ from any such $3$-augmentation (and possibly undoing the replacements of~$u,v$) yields an inner augmentation~$H_{uv}$ of~$G_{uv}$ with label~$\lab{00}$.
        
        On the other hand, an inner augmentation~$H_{uv}$ with label~$\lab{00}$ that extends~$\calE_{uv}$ and an additional edge between~$u,v$ is a $2$-connected $3$-augmentation, except that~$u,v$ might still have degree~$2$ (they have degree~$1$ or~$2$ in~$G_{uv}$ which is increased by one by~$e_{uv}$).
        In that case, replace them by the gadget shown in~\cref{fig:gadget_unsaturated} to obtain a $2$-connected $3$-augmentation of $G_{uv}^+$.
    
        \item Similarly, to check whether~$\lab{01}$ or~$\lab{10}$ is in~$\LSemb{G_{uv}}{\calE_{uv}}$, we add a path of length two between~$u,v$ into the outer face of~$\calE_{uv}$.
        Let~$w$ be the middle vertex of this path.
        If afterwards~$u$ and/or~$v$ have degree~$2$, we replace them with degree-$3$ vertices using the gadget from \cref{fig:gadget_unsaturated}.
        Call the obtained planar graph~$G_{uv}^+$.
        
        We claim that $\{\lab{01}, \lab{10}\} \cap \LSemb{G_{uv}}{\calE_{uv}} \neq \emptyset$ if and only if~$G_{uv}^+$ has a $2$-connected $3$-augmentation extending its planar embedding.
        
        If such a $3$-augmentation exists, then removing edges~$uw$ and~$vw$ (and possibly undoing the replacements of~$u,v$) yields an inner augmentation~$H_{uv}$ of~$G_{uv}$ where~$w$ is the only degree-$1$ vertex (and in the outer face of~$H_{uv}$).
        It follows that~$H_{uv}$ has label~$\lab{01}$ or~$\lab{10}$.
        
        For the other direction, assume that~$\lab{01}$ or~$\lab{10}$ is in~$\LSemb{G_{uv}}{\calE_{uv}}$.
        Then there is an inner augmentation~$H_{uv}$ of~$G_{uv}$ 
        with label~$\lab{01}$ or~$\lab{10}$.
        In particular there is a degree-$1$ vertex~$w$ in the outer face of~$H_{uv}$.
        Now add edges~$uw$ and~$vw$, and replace each of $u,v$ that has degree~$2$ by the gadget from \cref{fig:gadget_unsaturated}.
        This yields a $2$-connected $3$-augmentation of $G_{uv}^+$.
        
        \item Lastly, to check whether~$\lab{11} \in \LSemb{G_{uv}}{\calE_{uv}}$, we build a graph~$G_{uv}^+$ by taking the $uv$-graph from \cref{fig:gadget_11}, embedding it into the outer face of~$\calE_{uv}$ and identifying the respective~$u$ and~$v$ vertices.
        If afterwards~$u,v$ have degree~$2$, we replace them by degree-$3$ vertices using the gadget from~\cref{fig:gadget_unsaturated}.
        Let~$\ell,r$ be the two degree-$2$ vertices as shown in \cref{fig:gadget_11}.
        Again, we claim that $\lab{11} \in \LSemb{G_{uv}}{\calE_{uv}}$ if and only if~$G_{uv}^+$ has a $2$-connected $3$-augmentation extending its embedding.
        
        If such a $3$-augmentation exists, removing all vertices from the $uv$-graph in \cref{fig:gadget_11} except for~$u,v,\ell,r$ (and possibly undoing the replacements of~$u,v$) yields an inner augmentation~$H_{uv}$ with label~$\lab{11}$ (because there are two degree-$1$ vertices and a $uv$-edge in the outer face of the embedding of~$H_{uv}$ would separate them into different faces).
        
        To construct a $2$-connected $3$-augmentation from an inner augmentation~$H_{uv}$ of~$G_{uv}$ with label~$\lab{11}$, consider a hypothetical $uv$-edge separating the two degree-$1$ vertices of~$H_{uv}$ into different faces.
        Add the graph from \cref{fig:gadget_11} as above in the embedding of~$H_{uv}$ where the $uv$-edge would have been.
        Then identify the two degree-$1$ vertices with~$\ell$ and~$r$ in a non-crossing way.
        This yields a $3$-augmentation of $G_{uv}$ after possibly replacing $u$ and/or $v$ by the gadget from \cref{fig:gadget_unsaturated}, as in the previous cases.
    \end{itemize}
    In all three cases, the existence of a $2$-connected $3$-augmentation of~$G_{uv}^+$ can be checked using \cref{prop:2con-3aug-fixed-multi-graph}.
    As only a constant number of vertices and edges were added to~$G_{uv}$ this check takes time in~$\mathcal{O}(n^2)$.
\end{proof}

We remark that in the proof of \cref{lem:label-check} for each label~$\lab{ab}$ we actually added the gadget for the embedded label set $\{\lab{ab}\}$ between vertices~$u,v$.
Thus, these three gadgets serve a twofold role in our algorithm.

\subparagraph{Algorithm for Variable Embedding.}
In order to decide whether a given $2$-connected planar graph~$G$ admits some planar embedding which admits a $2$-connected $3$-augmentation, we use the SPQR-tree~$T$ of~$G$.
Rooting~$T$ at some Q-vertex~$\rho$, the pertinent graph~$\pert(\mu)$ of a vertex~$\mu$ in~$T$ is a subgraph of~$G$.
Moreover, if~$u_\mu v_\mu$ is the virtual edge in $\skel(\mu)$ associated to the parent edge of~$\mu$, then $\pert(\mu)$ is a $uv$-graph (with~$u_{\mu}, v_{\mu}$ taking the roles of~$u,v$ in the $uv$-graph).
Now the variable label set~$\LSvar{\pert(\mu)}$ is a constant-size representation of all possible labels that any possible embedding of an inner augmentation of~$\pert(\mu)$ can have (having~$u_{\mu}$ and~$v_{\mu}$ on its outer face).
The remainder of this section describes how the variable label sets of all vertices in the SPQR-tree can be computed by a bottom-up dynamic program.
Here we need to distinguish whether we consider an~S- a~P- or an R-vertex of the SPQR-tree.

\begin{lemma}
    \label{lem:S-vertex}
    Let~$\mu$ be an S-vertex of the SPQR-tree with children~$\mu_1, \ldots, \mu_k$, such that each variable label set~$\LSvar{\pert(\mu_i)}$, $i=1,\ldots,k$, is non-empty and known.
    Then the variable label set~$\LSvar{\pert(\mu)}$ can be computed in time $\mathcal{O}(\| \skel(\mu) \|^2)$.
\end{lemma}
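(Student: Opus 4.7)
The plan is to exploit that $\skel(\mu)$ is a cycle. Writing the cycle as $w_0 w_1 \cdots w_k w_0$ with parent virtual edge $e_0 = w_k w_0$, so that $u_\mu = w_0$ and $v_\mu = w_k$, the graph $\pert(\mu)$ is obtained by substituting $\pert(\mu_i)$ for each non-parent edge $w_{i-1} w_i$. Consequently, every planar embedding of $\pert(\mu)$ having $u_\mu, v_\mu$ on the outer face is parameterized by an independent flip choice for each $\pert(\mu_i)$.

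I will replace every $\pert(\mu_i)$ by the constant-size gadget $Y_i$ from \cref{lem:gadgets}, which has a fixed embedding $\calE_{Y_i}$ satisfying $\LSemb{Y_i}{\calE_{Y_i}} = \LSvar{\pert(\mu_i)}$. Identifying consecutive gadgets at the shared cut vertex $w_i$ produces a $uv$-graph $G'$ of size $\mathcal{O}(\|\skel(\mu)\|)$ equipped with a natural planar embedding $\calE'$ in which $u_\mu$ and $v_\mu$ remain outer vertices. Computing $\LSemb{G'}{\calE'}$ via the three checks in \cref{lem:label-check} will then take $\mathcal{O}(\|\skel(\mu)\|^2)$ time, so it remains to argue that $\LSemb{G'}{\calE'}$ equals $\LSvar{\pert(\mu)}$.

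The main step is proving this equality. For the forward inclusion, I start with an inner augmentation $H$ of $\pert(\mu)$ with label $\lab{ab}$ (which we may assume lies in $\{\lab{00},\lab{01},\lab{10},\lab{11}\}$ by \cref{lem:wlog-0-or-1}) and restrict it to each child's region, consisting of $\pert(\mu_i)$ together with everything placed in its inner faces. This restriction is itself an inner augmentation of $\pert(\mu_i)$ with some label $\lab{c_i d_i} \in \LSvar{\pert(\mu_i)} = \LSemb{Y_i}{\calE_{Y_i}}$, and can therefore be realized inside $Y_i$; the remaining part of $H$, living in the outer face of $\pert(\mu)$ together with its assignment of degree-$1$ vertices to the two sides, transfers to $G'$ unchanged. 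Symmetry is essential here: a flip of $\pert(\mu_i)$ in $\pert(\mu)$ merely swaps $\lab{c_i d_i}$ with $\lab{d_i c_i}$, and since $\LSemb{Y_i}{\calE_{Y_i}}$ is symmetric, both are realizable inside $Y_i$ under the single fixed embedding $\calE_{Y_i}$. The reverse inclusion is analogous: starting from an inner augmentation of $G'$, I replace the realization inside each $Y_i$ by a corresponding inner augmentation of $\pert(\mu_i)$ with the same (or flipped) label, choosing the orientation of $\pert(\mu_i)$ in $\pert(\mu)$ accordingly.

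The hard part will be making the cut-and-paste correspondence precise. I need to verify that restricting $H$ to a child's region preserves the terminal degrees at $w_{i-1}, w_i$, leaves no degree-$1$ vertex in an inner face, and introduces no bridge not already incident to a degree-$1$ vertex; and symmetrically, that reassembling a chosen realization inside each $Y_i$ together with the ambient outer-face structure yields a legitimate inner augmentation of $G'$ (in particular, connectedness and the bridge condition survive the swap). Once this is established, symmetry of $\LSvar{\pert(\mu)}$ forces symmetry of $\LSemb{G'}{\calE'}$, so \cref{lem:label-check} determines the latter exactly, and the claimed runtime follows.
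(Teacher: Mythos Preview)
Your overall strategy matches the paper's: remove the parent virtual edge from $\skel(\mu)$, replace each child virtual edge by the constant-size gadget from \cref{lem:gadgets}, and invoke \cref{lem:label-check} on the resulting $uv$-graph. However, the cut-and-paste correspondence as you describe it has a real gap.

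All gadgets of \cref{lem:gadgets} have their two terminals at degree~$1$, irrespective of the actual degrees of $u_{\mu_i},v_{\mu_i}$ in $\pert(\mu_i)$. Hence, after identifying consecutive gadgets at an inner skeleton vertex $w_i$ ($1 \le i \le k-1$), that vertex has degree exactly~$2$ in your $G'$. But $w_i$ may well have degree~$3$ in $G$ --- this occurs whenever one of the two adjacent children is not a Q-vertex, so that $w_i$ already has degree~$2$ inside that child's pertinent graph. In $\pert(\mu)$ such a saturated $w_i$ receives no new edge in any inner augmentation, whereas in every inner augmentation of $G'$ it \emph{must} receive one (being a non-terminal degree-$2$ vertex). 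The two label sets then disagree: for instance, with $k=2$, $\deg_G(w_1)=3$, and both children having variable label set $\{\lab{00}\}$, one gets $\LSvar{\pert(\mu)} = \{\lab{00}\}$ but $\LSemb{G'}{\calE'} = \{\lab{01},\lab{10}\}$, since the forced pendant at $w_1$ lands in the outer face. Your forward-direction transfer thus fails to produce a valid inner augmentation of $G'$ (the glued object leaves $w_i$ at degree~$2$), and the reverse direction would give $w_i$ degree~$4$ in $\pert(\mu)$.

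The paper's proof contains exactly the missing step: after inserting the gadgets it additionally replaces each skeleton vertex $w$ with $\deg_G(w)=3$ (and also $u,v$) by the degree-saturation gadget of \cref{fig:gadget_unsaturated}, forcing those vertices to have degree~$3$ in $G_\mu$ so they cannot receive new edges. With this constant-per-vertex fix, your equality $\LSemb{G'}{\calE'} = \LSvar{\pert(\mu)}$ holds and the remainder of your argument goes through unchanged.
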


\begin{proof}
    Let~$uv$ be the virtual edge associated to the parent edge of~$\mu$.
    Further, let~$u_iv_i$ be the virtual edge associated to the tree edge~$\mu\mu_i$, $i=1,\ldots,k$.
    
    First, we remove the edge~$uv$ in~$\skel(\mu)$ to obtain once again a $uv$-graph~$G_{uv}$.
    As~$G_{uv}$ is a path, its planar embedding is unique.
    Replace each virtual edge~$u_iv_i$ by the gadget~$Y$ with embedding~$\calE_Y$ from \cref{lem:gadgets} that realizes the embedded label set $\LSemb{Y}{\calE_Y} = \LSvar{\pert(\mu_i)}$.
    Call the obtained graph~$G_{\mu}$.
    Second, we consider the vertices in~$G_{\mu}$ that belong to~$\skel(\mu)$ (i.e., those that have not been introduced by some gadget).
    Each~$w \in \skel(\mu)$ has degree~$2$ in $\skel(\mu)$.
    If~$\deg_G(w) = 3$ or if~$w$ is one of~$u,v$, then we replace~$w$ by the gadget shown in \cref{fig:gadget_unsaturated} to replace each such degree-$2$ vertex by four degree-$3$ vertices.
    This is necessary because if~$\deg_G(w) = 3$ in~$G$, we may not add in an augmentation additional edges to $w$ at any time.
    Additionally, if $w \in \{u,v\}$, then we consider possible new edges at vertex $w$ further upwards in the SPQR-tree and not here.
    
    By a slight abuse of notation, we still call the obtained graph~$G_{\mu}$ and its planar embedding as constructed~$\calE_{\mu}$.
    As before, we have that $\LSvar{\pert(\mu)} = \LSemb{G_{\mu}}{\calE_{\mu}}$.
    As every gadget has constant size, we have $\| G_{\mu} \| \in \mathcal{O}(\| \skel(\mu) \|)$.
    By \cref{lem:label-check}, we can compute $\LSemb{G_{\mu}}{\calE_{\mu}}$ in time $\mathcal{O}(\| \skel(\mu) \|^2)$.
\end{proof}

\begin{lemma}
    \label{lem:P-vertex}
    Let~$\mu$ be a P-vertex of the SPQR-tree with children~$\mu_1, \ldots, \mu_k$, such that each variable label set~$\LSvar{\pert(\mu_i)}$, $i=1,\ldots,k$, is non-empty and known.
    Then the variable label set~$\LSvar{\pert(\mu)}$ can be computed in time $\mathcal{O}(1)$.
\end{lemma}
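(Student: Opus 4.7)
The plan is to mirror the approach from the proof of \cref{lem:S-vertex}, leveraging the gadgets of \cref{lem:gadgets} to reduce the variable embedding problem at~$\mu$ to a small number of fixed-embedding instances that I then resolve via \cref{lem:label-check}. The crucial simplification at a P-vertex is that the subcubic constraint bounds the number of children by a constant, so the whole case analysis is itself of constant size.

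First, I would observe that every P-vertex~$\mu$ has exactly $k = 2$ children. Let~$u, v$ denote the two vertices of $\skel(\mu)$. By definition of a P-vertex, $\skel(\mu)$ contains at least three parallel edges between~$u$ and~$v$, namely the parent virtual edge together with the $k$ virtual edges corresponding to the children, so $k \geq 2$. Conversely, each of these $k + 1$ parallel edges at~$u$ represents a subgraph that contributes at least~$1$ to $\deg_G(u)$, and since $\deg_G(u) \leq 3$ we obtain $k + 1 \leq 3$, hence $k = 2$. In particular, $\deg_{\pert(\mu_i)}(u_i) = \deg_{\pert(\mu_i)}(v_i) = 1$ for both children, so each child attaches to~$u$ and~$v$ by a single edge.

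Next, in close analogy to the S-vertex case, I would build a fixed-embedding $uv$-graph~$G_{\mu}$ by deleting the parent virtual edge from $\skel(\mu)$ and replacing each of the two remaining child virtual edges~$u_iv_i$ by the gadget~$Y_i$ with its embedding~$\calE_{Y_i}$ from \cref{lem:gadgets} satisfying $\LSemb{Y_i}{\calE_{Y_i}} = \LSvar{\pert(\mu_i)}$; the terminals of each gadget are identified with~$u$ and~$v$, which then still have degree~$2$ in~$G_{\mu}$. The only remaining embedding choice at the P-vertex level is the cyclic order of the two child structures around~$u$ and~$v$, giving exactly two embeddings $\calE_{\mu}$ and $\calE_{\mu}'$ of~$G_{\mu}$. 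I would then claim that $\LSvar{\pert(\mu)} = \LSemb{G_{\mu}}{\calE_{\mu}} \cup \LSemb{G_{\mu}}{\calE_{\mu}'}$, because the gadget property absorbs all embedding choices internal to the children, while the two orderings capture the remaining embedding freedom at the parent level. The two label sets on the right can be computed by two invocations of \cref{lem:label-check} on constant-size graphs, giving a total runtime of $\calO(1)$.

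The main technical obstacle will be verifying the equality $\LSvar{\pert(\mu)} = \LSemb{G_{\mu}}{\calE_{\mu}} \cup \LSemb{G_{\mu}}{\calE_{\mu}'}$. In one direction, one has to show that any inner augmentation of $\pert(\mu)$ can be converted into an inner augmentation of~$G_{\mu}$ of the same top-level label by replacing each child's contribution with a suitable inner augmentation of the corresponding gadget realizing the same child-label; in particular, the degree-$1$ vertices that each child augmentation places into the middle face of~$\pert(\mu)$ must be consistently mirrored on the gadget side and absorbed by degree-$3$ structures inside that face. The converse direction requires taking an inner augmentation of~$G_{\mu}$ and re-expanding the gadgets back into the children while preserving the top-level label, which again hinges on the compatibility of the middle-face completions.
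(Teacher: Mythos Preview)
Your approach is correct and essentially the same as the paper's: both observe that $\Delta(G)\le 3$ forces $k=2$, delete the parent virtual edge, replace the two child virtual edges by the gadgets from \cref{lem:gadgets}, and then invoke \cref{lem:label-check} on a constant-size graph.

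The one difference is that the paper uses a \emph{single} embedding~$\calE_\mu$ rather than two. After removing the parent edge, the two remaining parallel edges have a unique planar embedding up to flipping, and since $\LSvar{\pert(\mu)}$ is symmetric the flipped embedding is redundant; so one call to \cref{lem:label-check} suffices. Relatedly, be careful with your phrasing ``compute each and take the union'': \cref{lem:label-check} only determines a label set exactly when that set is symmetric, and your individual sets $\LSemb{G_\mu}{\calE_\mu}$ and $\LSemb{G_\mu}{\calE_\mu'}$ need not be. This is harmless for the final result, because the three yes/no checks of \cref{lem:label-check} applied to either embedding already determine the symmetrization, which is exactly the union you want and equals $\LSvar{\pert(\mu)}$.
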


\begin{proof}
    By definition,~$\skel(\mu)$ consists of two vertices and at least three parallel edges.
    Since~$\Delta(G) \leq 3$, we have that~$\skel(\mu)$ contains exactly three parallel edges.
    To be able to distinguish between these three edges, we write~$uv$ for the virtual edge associated to the parent edge of~$\mu$, and~$u_1v_1$,~$u_2v_2$ for the virtual edges associated to the tree edges~$\mu\mu_1$,~$\mu\mu_2$.
    
    As in (the proofs of) \cref{lem:R-vertex,lem:S-vertex}, we fix a planar embedding of $\skel(\mu)$ with edge~$uv$ on the outer face and then remove edge~$uv$ to obtain a $uv$-graph~$G_{uv}$.
    The planar embedding of~$G_{uv}$ is again unique.
    (We can again ignore the flipped embedding, as variable label sets are symmetric.)
    
    For both children~$\mu_i$ the variable label sets~$\LSvar{\pert(\mu_i)}$ are non-empty and known.
    We replace~$u_iv_i$, $i=1,2$, in~$G_{uv}$ by the gadget~$Y$ with embedding~$\calE_Y$ from \cref{lem:gadgets} that realizes the embedded label set $\LSemb{Y}{\calE_Y} = \LSvar{\pert(\mu_i)}$.
    We call the obtained graph~$G_\mu$ and~$\calE_{\mu}$ its planar embedding as constructed.
    
    As above, we have $\LSvar{\pert(\mu)} = \LSemb{G_{\mu}}{\calE_{\mu}}$ because the embedded label set of each gadget equals the variable label set of the pertinent graph replaced by it.
    Thus, we can again use \cref{lem:label-check} to compute $\LSemb{G_{\mu}}{\calE_{\mu}}$ and therefore also $\LSvar{\pert(\mu)}$.
    This takes constant time, because~$\skel(\mu)$ and each gadget has constant size.
\end{proof}

\begin{lemma}
    \label{lem:R-vertex}
    Let~$\mu$ be an R-vertex of the SPQR-tree with children~$\mu_1, \ldots, \mu_k$, such that each variable label set~$\LSvar{\pert(\mu_i)}$, $i=1,\ldots,k$, is non-empty and known.
    Then the variable label set~$\LSvar{\pert(\mu)}$ can be computed in time $\mathcal{O}(\| \skel(\mu) \|^2)$.
\end{lemma}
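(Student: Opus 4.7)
The plan is to mirror the strategy used for S- and P-vertices (\cref{lem:S-vertex,lem:P-vertex}), exploiting the key structural feature of R-vertices: by definition, $\skel(\mu)$ is $3$-connected, so by Whitney's Theorem it admits a unique planar embedding up to the choice of outer face and flipping. Thus, once we require the virtual parent edge $uv$ of $\skel(\mu)$ to lie on the outer face, the embedding of $\skel(\mu)$ is determined up to a global flip, which is irrelevant because variable label sets are symmetric.

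First, I fix this (essentially unique) planar embedding of $\skel(\mu)$ with $uv$ on the outer face and then remove the edge $uv$, obtaining a $uv$-graph on the vertex set of $\skel(\mu)$. For each virtual edge $u_iv_i$ of $\skel(\mu)$ associated to the tree edge $\mu\mu_i$, I apply \cref{lem:gadgets} to find a gadget $Y_i$ with embedding $\calE_{Y_i}$ satisfying $\LSemb{Y_i}{\calE_{Y_i}} = \LSvar{\pert(\mu_i)}$ (this is possible since by assumption each $\LSvar{\pert(\mu_i)}$ is non-empty), and replace $u_iv_i$ by $Y_i$ so that the two poles of $Y_i$ are identified with $u_i$ and $v_i$. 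Finally, for every vertex $w$ of $\skel(\mu)$ that either equals $u$ or $v$, or that has $\deg_G(w) = 3$, I replace $w$ by the gadget in \cref{fig:gadget_unsaturated}: this prevents any new edge from being attached at $w$ within $\pert(\mu)$ (for the internal vertices this is forced by $\deg_G(w) = 3$, and for $w \in \{u, v\}$ this is because edges incident to $u,v$ will be handled higher up in the SPQR-tree). Call the resulting planar graph $G_\mu$ with its inherited embedding $\calE_\mu$.

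The key claim is $\LSvar{\pert(\mu)} = \LSemb{G_\mu}{\calE_\mu}$. The inclusion $\LSemb{G_\mu}{\calE_\mu} \subseteq \LSvar{\pert(\mu)}$ follows because an inner augmentation of $G_\mu$ can be translated, using the defining property of each gadget, into an inner augmentation of $\pert(\mu)$ for the embedding induced by $\calE_\mu$ on $\pert(\mu)$. The other inclusion uses Whitney's Theorem crucially: any planar embedding of $\pert(\mu)$ with $u, v$ on its outer face arises from the unique embedding of $\skel(\mu)$ (with $uv$ outer) combined with some choice of embedding and flip for each $\pert(\mu_i)$ glued in at $u_iv_i$; every realizable label for such an embedding is, by \cref{lem:gadgets}, also realizable when $\pert(\mu_i)$ is replaced by the gadget $Y_i$ in $G_\mu$. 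Hence both label sets coincide.

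Since each gadget has constant size, $\|G_\mu\| \in \mathcal{O}(\|\skel(\mu)\|)$. Applying \cref{lem:label-check} computes $\LSemb{G_\mu}{\calE_\mu}$, and hence $\LSvar{\pert(\mu)}$, in time $\mathcal{O}(\|\skel(\mu)\|^2)$. The main conceptual obstacle is verifying that the gadget replacements faithfully encode the variable label sets of the children even though the skeleton $\skel(\mu)$ may attach a child at a virtual edge lying on an inner face of $\calE_\mu$; this is exactly where the defining property of gadgets in \cref{lem:gadgets}, namely $\LSemb{Y_i}{\calE_{Y_i}} = \LSvar{\pert(\mu_i)}$, does the work, together with the fact that flipping each $\pert(\mu_i)$ corresponds to symmetry of its variable label set.
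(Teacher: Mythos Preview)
Your approach is essentially the paper's: fix the unique embedding of $\skel(\mu)$ via Whitney, delete the parent virtual edge, replace each child virtual edge by the gadget of \cref{lem:gadgets}, and then invoke \cref{lem:label-check}. The paper argues exactly this.

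The one place you diverge is your degree-saturation step, where you propose to replace every skeleton vertex $w$ with $\deg_G(w)=3$ \emph{or} $w\in\{u,v\}$ by the gadget in \cref{fig:gadget_unsaturated}. This step is copied from the S-vertex argument, but for R-vertices it is unnecessary and, as stated, actually harmful. Since $\skel(\mu)$ is $3$-connected and $\Delta(G)\le 3$, every vertex of $\skel(\mu)$ has degree exactly~$3$ in $\skel(\mu)$. Hence after gadget replacement each skeleton vertex $w\neq u,v$ already has degree~$3$ in $G_\mu$ (three gadget-edges), so the saturation gadget is vacuous there. On the other hand, $u$ and $v$ end up with degree~$2$ in $G_\mu$ (two child gadget-edges, the parent edge having been removed), which is precisely what a $uv$-graph requires; the definition of inner augmentation already forbids new edges at $u,v$. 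If you literally apply the degree-$2$ saturation gadget to $u$ and $v$, they disappear as distinguished poles, $G_\mu$ is no longer a $uv$-graph, and \cref{lem:label-check} cannot be applied. The paper's proof simply omits this step for R-vertices; dropping it from yours fixes the issue and makes the two proofs identical.
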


\begin{proof}
    As~$\mu$ is an R-vertex, its skeleton~$\skel(\mu)$ is $3$-connected.
    So by Whitney's Theorem~\cite{Whitney1933_UniqueEmbedding} $\skel(\mu)$ has a unique planar embedding up to flipping and the choice of the outer face.
    Let~$uv$ be the virtual edge in~$\skel(\mu)$ associated to the parent edge of~$\mu$.
    We choose an arbitrary planar embedding of~$\skel(\mu)$ with~$uv$ on the outer face and then remove the edge between~$u$ and~$v$ to obtain a $uv$-graph~$G_{uv}$.
    Note that the induced planar embedding of~$G_{uv}$ is now unique (apart from its flipped embedding, which we can ignore because variable label sets are symmetric).
    
    For each virtual edge~$u_iv_i$, $i=1,\ldots,k$, associated to a tree edge~$\mu\mu_i$ we know the label set~$\LSvar{\pert(\mu_i)}$.
    Replace the virtual edge $u_iv_i$ in~$G_{uv}$ by the gadget~$Y$ with embedding~$\calE_Y$ from \cref{lem:gadgets} with~$\LSemb{Y}{\calE_Y} = \LSvar{\pert(\mu_i)}$.
    Let~$G_{\mu}$ be the obtained graph and~$\calE_{\mu}$ its planar embedding as constructed above.
    
    Because the embedded label set of a gadget~$Y$ with fixed embedding~$\calE_{Y}$ equals the variable label set of the corresponding subgraph~$\pert(\mu_i)$, it follows that $\LSvar{\pert(\mu)} = \LSemb{G_{\mu}}{\calE_{\mu}}$.
    This equivalent reformulation of the variable label set of $\pert(\mu)$ as the embedded label set of $G_\mu$ is the key insight.
    Each gadget has constant size, so $\| G_{\mu} \| \in \mathcal{O}(\| \skel(\mu) \|)$.
    Thus we can compute~$\LSemb{G_{\mu}}{\calE_{\mu}}$ in time $\mathcal{O}(\| \skel(\mu) \|^2)$ using \cref{lem:label-check}.
\end{proof}

\Cref{lem:S-vertex,lem:P-vertex,lem:R-vertex} compute the variable label set of an inner vertex of the SPQR-tree, requiring that the variable label sets of its children are non-empty.
If this condition is not satisfied, i.e., at least one vertex~$\mu$ has $\LSvar{\pert(\mu)} = \emptyset$, then the following \lcnamecref{lem:empty-label-set} applies:

\begin{lemma}
    \label{lem:empty-label-set}
    If~$\LSvar{\pert(\mu)} = \emptyset$ for some vertex~$\mu$ of the SPQR-tree~$T$ of~$G$, then~$G$ has no $2$-connected $3$-augmentation.
\end{lemma}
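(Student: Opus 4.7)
My plan is to prove the contrapositive: I will assume that~$G$ admits a $2$-connected $3$-augmentation~$H$ and derive, for every vertex~$\mu$ of~$T$, an inner augmentation of~$\pert(\mu)$ constructed from~$H$. Such a witness certifies~$\LSvar{\pert(\mu)} \neq \emptyset$, which is the contrapositive of the claim.

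First I would fix a planar embedding~$\calE_H$ of~$H$ in which~$e_\rho$ lies on the outer face, which is always possible by selecting the outer face appropriately. This induces a planar embedding~$\calE$ of~$G$ with~$e_\rho$ outer, and for every~$\mu \in V(T)$ the further restriction yields an embedding~$\calE_\mu$ of~$\pert(\mu)$ having~$u_\mu, v_\mu$ on its outer face (a standard SPQR-tree property). I would then conceptually insert a virtual edge from~$u_\mu$ to~$v_\mu$ through the outer face of~$\calE_\mu$ in~$\calE_H$ to split it into left and right regions~$f_A$ and~$f_B$. The inner augmentation~$H_\mu^{\mathrm{ind}}$ is defined to consist of~$\pert(\mu)$, of all vertices and edges of~$H$ drawn strictly inside the inner faces of~$\calE_\mu$, and, for every vertex~$x \in V(\pert(\mu)) \setminus \{u_\mu, v_\mu\}$ that has a new edge of~$H$ leaving into the outer face of~$\calE_\mu$, of a new degree-$1$ pendant at~$x$ placed in the corresponding side~$f_A$ or~$f_B$ into which that edge is embedded. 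Counting the pendants in~$f_A$ and~$f_B$ yields a label~$(a,b)$, and a repeated application of~\cref{lem:wlog-0-or-1} reduces this to a label in $\{\lab{00},\lab{01},\lab{10},\lab{11}\}$, witnessing that~$\LSemb{\pert(\mu)}{\calE_\mu}$, and hence~$\LSvar{\pert(\mu)}$, is non-empty.

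The hard part will be verifying that~$H_\mu^{\mathrm{ind}}$ indeed meets all axioms of an inner augmentation, especially that~$u_\mu, v_\mu$ preserve their~$\pert(\mu)$-degrees and that every new bridge is incident to a degree-$1$ vertex. The delicate case is when~$u_\mu$ (symmetrically~$v_\mu$) carries a new edge of~$H$ going into an inner face~$f$ of~$\calE_\mu$, which would otherwise inflate its degree in~$H_\mu^{\mathrm{ind}}$. Here the $2$-connectedness of~$H$ is the key tool: any such edge~$u_\mu w$ forces, via a cut-vertex argument localized to~$f$, that at least one further boundary vertex~$z$ of~$f$ also sends a new edge into~$f$ in~$H$ (since boundary vertices of~$\pert(\mu)$-degree~$3$ cannot contribute new edges, leaving~$u_\mu$ as a cut vertex of~$H$ otherwise). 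I would then locally rewire the interior structure of~$f$ within~$H_\mu^{\mathrm{ind}}$ to detach it from~$u_\mu$ (for instance, by subdividing an interior edge close to the alternative entry~$z$ and redirecting~$w$'s former connection there), keeping every non-$u_\mu,v_\mu$ vertex at degree~$3$; any resulting increase in pendant counts is absorbed by a subsequent application of~\cref{lem:wlog-0-or-1}. The bridge condition follows analogously: since~$H$ is bridgeless, any new edge that becomes a bridge in~$H_\mu^{\mathrm{ind}}$ must have had its alternative route severed by our cut-off across the outer face and therefore lies on a component terminating in a pendant introduced by the construction.
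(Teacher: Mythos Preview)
Your overall plan---prove the contrapositive by carving an inner augmentation of~$\pert(\mu)$ out of a given $2$-connected $3$-augmentation~$H$---is exactly the paper's approach, and your construction of~$H_\mu^{\mathrm{ind}}$ (pertinent graph, plus everything~$H$ draws in the inner faces of~$\calE_\mu$, plus one pendant per outer-boundary vertex that loses a new edge to the outside) coincides with the paper's.

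Where you diverge is in the ``delicate case'': a new $H$-edge at~$u_\mu$ (or~$v_\mu$) entering an \emph{inner} face of~$\calE_\mu$. This case simply does not occur for any non-root~$\mu$, and recognising this is the structural point the paper's short proof rests on. Here is why. Since~$\mu \neq \rho$, the pole~$u_\mu$ has at least one $G$-edge on the parent side of the separation~$\{u_\mu,v_\mu\}$ (otherwise~$v_\mu$ alone would be a cut vertex of the $2$-connected graph~$G$). Because~$\Delta(G)\le 3$, this forces
\[
\deg_{\pert(\mu)}(u_\mu) \;\le\; \deg_G(u_\mu) - 1 \;\le\; 2.
\]
If $\deg_{\pert(\mu)}(u_\mu)=2$, then $\deg_G(u_\mu)=3$ and~$u_\mu$ receives \emph{no} new edge in~$H$ at all. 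If $\deg_{\pert(\mu)}(u_\mu)=1$, then~$u_\mu$ is incident only to the outer face of~$\calE_\mu$, so any new $H$-edge at~$u_\mu$ lies in the outer face and is discarded by your construction. Either way the degree-preservation axiom at~$u_\mu,v_\mu$ holds automatically, with no rewiring needed.

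Consequently your third paragraph is superfluous, and the rewiring you sketch (``subdivide an interior edge close to~$z$ and redirect~$w$'s former connection there'') is both unnecessary and too vague to stand on its own---it is not clear, for instance, that it keeps every interior vertex at degree~$3$ or avoids creating a forbidden inner bridge. Drop it and instead insert the two-line degree argument above; the remaining axioms (degree~$1$ or~$3$ for $w\neq u_\mu,v_\mu$, and the bridge condition) then follow directly, since every such~$w$ satisfies $\deg_{\pert(\mu)}(w)=\deg_G(w)\in\{2,3\}$ and every non-pendant edge of~$H_\mu^{\mathrm{ind}}$ lies on a cycle that can be shortcut through the connected graph~$\pert(\mu)$.
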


\begin{proof}
    Assuming that~$G$ has a $2$-connected $3$-augmentation~$H$, we shall show that we have $\LSvar{\pert(\mu)} \neq \emptyset$ for every vertex~$\mu$ of~$T$.
    If~$\mu$ is the root, let~$u,v$ be the two unique vertices in~$\skel(\mu)$ (because $\mu = \rho$ is a Q-vertex).
    If~$\mu$ is not the root, let~$u,v$ be the endpoints of the virtual edge associated to the parent edge of~$\mu$.

    By the definition of labels, $\LSvar{\pert(\mu)} \neq \emptyset$ if there is some inner augmentation of~$\pert(\mu)$ for at least one of its planar embeddings with~$u,v$ on its outer face.
    But the $2$-connected $3$-augmentation~$H$ of~$G$ induces an inner augmentation of~$\pert(\mu)$ as follows:
    Let~$\calE_H$ be a planar embedding of~$H$ with outer edge~$e_\rho$ and~$\calE_G$ its restriction to~$G$.
    Recall that then~$u,v$ are outer vertices of~$\pert(\mu)$ in~$\calE_G$.
    Consider the embedded subgraph of~$H$ consisting of~$\pert(\mu)$ and all vertices and edges of~$H$ inside inner faces of~$\pert(\mu)$ in~$\calE_G$.
    For each vertex~$w \neq u,v$ on the outer face of~$\pert(\mu)$ in~$\calE_G$ incident to an edge of~$H$ in the outer face of~$\pert(\mu)$, we add a new pendant edge at~$w$ into the outer face of~$\pert(\mu)$ in~$\calE_G$.
    The resulting graph is an inner augmentation of~$\pert(\mu)$ and hence $\LSvar{\pert(\mu)} \neq \emptyset$.
\end{proof}

\medskip

Now that we considered S-, P- and R-vertices, we are finally set up to prove \cref{prop:variable-embedding}.
There we claim that we can decide in polynomial time whether a $2$-connected planar graph~$G$ with~$\Delta(G) \leq 3$ has a $2$-connected $3$-augmentation.

\begin{proof}[Proof of \cref{prop:variable-embedding}]
    As mentioned above, we use bottom-up dynamic programming on the SPQR-tree~$T$ of~$G$ rooted at an arbitrary Q-vertex~$\rho$ corresponding to an edge~$e_\rho$ in~$G$.

    The base cases are the leaves of~$T$, all of which are Q-vertices.
    The variable label set of a leaf~$\mu$ is~$\LSvar{\pert(\mu)} = \{\lab{00}\}$:
    $\pert(\mu)$ is just a single edge and the only inner augmentation of~$\pert(\mu)$ is $\pert(\mu)$ itself, and as such has label $\lab{00}$.

    Now let~$\mu$ be an inner vertex of~$T$ and thus be either an S-, a P- or an R-vertex.
    All its children~$\mu_1, \ldots, \mu_k$ have already been processed and their variable label sets~$\LSvar{\pert(\mu_i)}$ are known.
    Then the variable label set~$\LSvar{\pert(\mu)}$ can be computed in time $\mathcal{O}(\| \skel(\mu) \|^2)$ (which is actually $\mathcal{O}(1)$ in case of a P-vertex) by \cref{lem:S-vertex,lem:P-vertex,lem:R-vertex}.
    To apply these lemmas, we need to guarantee that the variable label sets~$\LSvar{\pert(\mu_i)}$ of the children are non-empty.
    If this is not the case, then by \cref{lem:empty-label-set}~graph $G$ has no $2$-connected $3$-augmentation and we can stop immediately.

    It remains to consider the root~$\rho$ of the SPQR-tree.
    Recall that~$\pert(\rho) = G$.
    Following the setup of \cref{lem:3aug-if-00-label}, let~$x,y$ be the two unique vertices of~$\skel(\rho)$ and~$xy$ be the unique non-virtual edge, i.e., the edge $e_\rho = xy$ of~$G$.
    Let~$G_{uv}$ be the $uv$-graph obtained from $G = \pert(\rho)$ by deleting $e_\rho = xy$ and adding two new pendant edges $ux,vy$.
    Note that~$x$ and~$y$ have the same degree in~$G_{uv}$ as in~$G$.
    By \cref{lem:3aug-if-00-label},~$G$ has a $2$-connected $3$-augmentation if and only if $\lab{00} \in \LSvar{G_{uv}}$.
    
    To check whether $\lab{00} \in \LSvar{G_{uv}}$, let~$\mu$ be the unique child of~$\rho$.
    Thus we have $\pert(\mu) = G - e_\rho$.
    We have already computed $\LSvar{\pert(\mu)}$ and can assume by \cref{lem:empty-label-set} that it is non-empty.
    Consider the gadget~$Y$ with embedding~$\calE_Y$ from \cref{lem:gadgets} such that $\LSemb{Y}{\calE_Y} = \LSvar{\pert(\mu)}$.
    Let~$u'$ and~$v'$ denote the two degree-$1$ vertices in~$Y$.
    If both~$x$ and~$y$ have degree~$3$ in~$G$ (hence also in~$G_{uv}$), then $\LSvar{G_{uv}} = \LSvar{\pert(\mu)} = \LSemb{Y}{\calE_Y}$ and we already know whether or not~$\lab{00}$ is contained in these label sets.
    
    If~$x$ has degree~$2$ in~$G$ (hence also degree~$2$ in~$G_{uv}$, while degree~$1$ in~$\pert(\mu)$), then~$x$ receives a new edge in inner augmentations of~$G_{uv}$ but not in inner augmentations of~$\pert(\mu)$.
    For~$Y$ to model~$\LSvar{G_{uv}}$ instead of~$\LSvar{\pert(\mu)}$, we subdivide in~$Y$ the edge at~$u'$ by a new vertex~$x'$.
    Similarly, if~$y$ has degree~$2$ in~$G$, we subdivide in~$Y$ the edge at~$v'$.
    For the resulting graph~$Y'$ with embedding~$\calE_{Y'}$ it follows that $\LSvar{G_{uv}} = \LSemb{Y'}{\calE_{Y'}}$ and we can check whether~$\lab{00}$ is contained in these label sets by calling \cref{lem:label-check} on~$Y'$ with embedding~$\calE_{Y'}$.
    This takes constant time, as~$Y'$ has constant size.
    
    The overall runtime is the time needed to construct the SPQR-tree plus the time spent processing each of its vertices.
    Gutwenger and Mutzel~\cite{Gutwenger2001_SPQRLinear} show how to construct the SPQR-tree in time~$\mathcal{O}(n)$.
    The time for the dynamic program traversing the SPQR-tree~$T$ is
    \[
        \mathcal{O}\bigl(
            \sum\limits_{\mu \in V(T)} \| \skel(\mu) \|^2
        \bigr)
        \subseteq
        \mathcal{O}\bigl(\bigl(
            \sum\limits_{\mu \in V(T)} \| \skel(\mu) \|
        \bigr)^2\bigr)
        \subseteq
        \mathcal{O}(n^2)
        \text{,}
    \]
    where the first step uses that for a set of positive integers the sum of their squares is at most the square of their sum, and the second step uses that the SPQR-tree has linear size.
\end{proof}

\section{Connected~3-Augmentations for a Fixed Embedding}
\label{sec:1con_3aug_fixed_embedding}

In this section, we find connected $3$-augmentations for arbitrary (and possibly disconnected) input graphs.
Refer to the second column of the table in \cref{fig:overview}.
For the variable embedding setting, the problem can be solved in linear time, see \cref{prop:wlog-connected}.
We now present a quadratic-time algorithm for the fixed embedding setting.
By~\cref{obs:connected_3-augmentation}, this is equivalent to deciding whether a given embedding can be extended to a subcubic planar connected graph.

\begin{lemma}
\label{lem:connected_augmentation}
    Let~$G$ be a planar subcubic graph with an embedding~$\calE$ and $\Delta(G) \leq 3$.
    Let $n$ be the number of vertices in $G$.
    Then we can compute, in time~$\calO(n^2)$, a connected subcubic planar supergraph~$H$ of~$G$ extending~$\calE$, or conclude that none exists. 
\end{lemma}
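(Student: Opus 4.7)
The plan is to adapt the reduction to the \GeneralizedFactor problem developed in the proof of \cref{lem:2con_fixed_mindeg2}, replacing the $2$-connectivity constraints by the weaker constraints appropriate for plain connectivity. By \cref{obs:connected_3-augmentation}, it suffices to decide the existence of a connected $3$-augmentation of~$G$ extending~$\calE$: any connected subcubic planar supergraph extending~$\calE$ can be padded to a connected $3$-regular one by inserting a $K_4^{(1)}$-gadget at every vertex of degree $<3$, without affecting planarity or connectivity.

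After a preprocessing step as in \cref{lem:preprocessing} reducing to the case $\delta(G) \geq 2$, I would construct the same auxiliary bipartite graph $A$ with parts $\calV$ and $\calF$ as in \cref{lem:2con_fixed_mindeg2}: $\calV$ consists of the degree-$2$ vertices of~$G$; a \emph{normal} face of~$\calE$ (incident to a single connected component of~$G$) contributes one node to~$\calF$, and a \emph{connecting} face (incident to at least two components) contributes one node per incident connected component. In contrast to the $2$-connected setting, it is unnecessary here to split a connecting face further into singleton/leaf/inner blocks; the coarser partition into connected components is what matters for mere connectivity. The $B$-sets I would use are
\[
    B(v) = \{1\} \ \text{for } v \in \calV, \qquad B(\cdot) = \{0, 1, 2, \ldots\} \ \text{for each normal-face node,}
\]
and $B(\cdot) = \{1, 2, 3, \ldots\}$ for every (face, component) node of a connecting face.

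The correctness of this reduction hinges on the observation that, for any planar embedding of a (possibly disconnected) graph, the bipartite incidence graph between its connected components and its faces is a tree --- a direct consequence of Euler's formula, since $|V_B| = c + f$ and a face-by-face count gives $|E_B| = \sum_C f_C = c + f - 1$, while connectivity of $B$ is clear because in the plane every two components are linked by a path through faces. The tree structure means that, in any connected $3$-augmentation of~$G$, the merging of components can only be routed through the connecting faces and, by acyclicity, every connecting face must merge \emph{all} of its incident components rather than only some. This is exactly what the $B$-sets encode: a $B$-factor of~$A$ yields, at every connecting face, at least one new half-edge per incident component; gathering these half-edges at a fresh Steiner vertex inside the face and then saturating it by a wheel-extension (cf.\ \cref{claim:Bfactor_to_3aug}) gives a connected $3$-regular extension. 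Conversely, a connected $3$-augmentation~$H$ canonically assigns each new half-edge to its containing face and to the component of that face in which its endpoint lies, yielding a $B$-factor of~$A$ as in \cref{claim:3aug_to_Bfactor}.

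The main obstacle will be carefully justifying the tree property of the component-face incidence and the ensuing ``every connecting face must merge all its incident components'' in the converse direction. Once this is settled, both translations are routine. For the runtime, note that the $B$-sets above contain no two consecutive forbidden degrees at any vertex of~$A$: normal-face nodes have no forbidden value, each connecting-face component-node forbids only $0$, and at every $v \in \calV$ we have $\deg_A(v) \leq 2$ with only $0$ and $2$ forbidden, which are non-consecutive. Hence \cref{thm:generalized-factor-sebo} (\Sebo's algorithm) computes a $B$-factor (or certifies that none exists) in time $\mathcal{O}(|V(A)| \cdot |E(A)|)$. A straightforward analogue of \cref{claim:size_of_A_is_linear} --- using that the total number of (face, component)-incidences equals $|E_B| = c + f - 1 \in \mathcal{O}(n)$ --- gives $|V(A)|, |E(A)| \in \mathcal{O}(n)$, hence an overall runtime of $\mathcal{O}(n^2)$.
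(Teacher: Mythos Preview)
Your proposal is correct and follows the same reduction-to-\GeneralizedFactor strategy as the paper, with a few harmless variations. The paper (i) skips the preprocessing to $\delta(G)\ge 2$ and instead lets $B(v)=\{0,1,\ldots,\min(3-\deg_G(v),\deg_A(v))\}$ for every $v$ of degree~$<3$, (ii) consequently omits normal-face nodes from~$\calF$ altogether (since $0\in B(v)$ absorbs vertices incident only to non-connecting faces), and (iii) targets a connected \emph{subcubic} supergraph rather than a $3$-regular one. These are cosmetic; both reductions satisfy \Sebo's no-two-consecutive-forbidden condition and yield~$\calO(n^2)$.

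Where your write-up actually adds value is the correctness argument: the paper merely asserts that the equivalence ``follows by a similar argument as in \cref{lem:2con_fixed_mindeg2}'', whereas you supply the key structural fact---that the bipartite component/face incidence graph is a tree (via the Euler count $|E_B|=c+f-1$)---and deduce that in any connected augmentation every (connecting face, incident component) pair must carry at least one new half-edge. That is precisely the justification the paper elides.

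Two small points to tidy. First, \cref{lem:preprocessing} as stated is for \emph{$2$-connected} $3$-augmentations; you should note (or the reader should be told) that the same replacement rules preserve the existence of a \emph{connected} $3$-augmentation extending the embedding---the paper in fact remarks this explicitly just before its proof. Second, because you allow $B=\{0,1,2,\ldots\}$ at normal-face nodes, a Steiner vertex there can receive exactly one half-edge, a case that does not arise in \cref{claim:Bfactor_to_3aug}; handle it by attaching a~$K_4^{(1)}$ rather than a wheel-extension.
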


We remark that the proof of \cref{lem:connected_augmentation} is essentially a simpler version of the proof of \cref{lem:2con_fixed_mindeg2}.
Yet, as we now only need to find a \emph{subcubic} planar connected supergraph, we do not need to assign exactly $3-\deg_{G}(x)$ new edges to a vertex~$x$ of degree less than~$3$.
In fact, such a vertex can be incident to any number of new edges, ranging from~$0$ to~$3-\deg_{G}(x)$. 
The same replacement rules as in \cref{lem:preprocessing} could be applied in order to reduce the problem to a input graph~$G$ with~$\delta(G) \geq 2$.
However, this will not be necessary as we can handle in this section vertices of degree~$0$ and~$1$ in the same way as vertices of degree~$2$.

\begin{proof}
    Using the same ideas as in the proof of \cref{lem:2con_fixed_mindeg2}, we reduce the problem of finding a connected supergraph which extends~$\calE$, to an instance of the \GeneralizedFactor problem~$A$ which fulfills the necessary condition to apply an~$\calO(n^2)$-time algorithm by~\Sebo, see \cref{thm:generalized-factor-sebo}. 

    We construct the supergraph~$H$ of~$G$ by adding vertices and edges to some faces of~$\calE$.
    Thus, the obtained embedding of~$H$ extends~$\calE$. 
    
    In order to obtain a connected supergraph, faces of~$\calE$ which are incident to at least two connected components \emph{must} contain new edges (and possibly vertices).
    We call the set of these the \emph{component-connecting faces}~$F_{\compconn}$. 

    For a component-connecting face~$f \in F_{\compconn}$, we denote by~$G_f$ the subgraph of~$G$ on the vertices and edges incident to~$f$ (using the same notation as in the proof of~\cref{lem:2con_fixed_mindeg2}). 

    The~\GeneralizedFactor instance~$A$ is a bipartite graph with bipartition classes~$\calV$ and $\calF$. 
    Here, $\calV \coloneqq \{v \in V \mid \deg_G(v) \leq 2\}$ contains all vertices of~$G$ that have degree less than~$3$ (and hence require new edges). 
    Vertices in~$\calF$ represent component-connecting faces of~$\calE$. 
    For each component-connecting face~$f \in F_{\compconn}$, we add all components of~$G_f$ as vertices to~$\calF$. 
    (If there are two faces~$f,g$ in $\calE$ such that~$G_f$ and~$G_g$ contain two components corresponding to the same subgraph of~$G$, then~$\calF$ contains two such vertices: one corresponding to the component of~$G_f$, and another to the component of~$G_g$.)

    Each~$c \in \calF$ corresponds to a component of~$G_f$ for some face of~$\calE$. 
    In the graph~$A$, $c$ is incident to all $v \in \calV$ which are incident to the component~$c$ in~$\calE$.
    It remains to assign a set~$B(x) \subseteq \{0,1, \dots, \deg_{A}(x)\}$ of possible degrees to each vertex~$x \in \calV \cup \calF$:
    \[
        B(x) \coloneqq 
        \begin{cases}
            \{k \in \N_0 \mid k \leq \min(3-\deg_{G}(x), \deg_A(x))\}, & \text{if $x \in \calV$} \\
            \{1, 2, 3, \dots, \deg_{A}(x)\}, & \text{if $x \in \calF$,}
        \end{cases}
    \]
    i.e., each vertex $x \in \calV$ can be incident to up to $3-\deg_{G}(x)$ edges, and each $c \in \calF$ is incident to at least one edge in any $B$-factor.
    
    Observe that the order and size of~$A$ are linear in~$n$:
    Every vertex~${u \in \calV}$ is incident to at most two faces of~$\calE$ as $\deg_G(u) \leq 2$. 
    Since~$u$ belongs to at most one component of~$G_f$ for each component-connecting face~$f$, we obtain $\abs{\calF} \leq 2n$ and $\deg_{A}(v) \leq 2$ for every~$v \in \calV$. 
    In particular, it follows that $\abs{E(A)} \leq 2n$.
    Note that~$A$ can be computed in linear time.

    Using a similar argument as in~\cref{lem:2con_fixed_mindeg2}, we observe that~$A$ admits a~$B$-factor if and only if~$G$ has a connected subcubic planar supergraph~$H$ extending~$\calE$. 

    Recall that~$\deg_{A}(v) \leq 2$ for all $v \in \calV$.
    For $v \in \calV$ with $\deg_G(v) = 2$ and~$\deg_A(v) = 2$, we have $B(v) = \{0,1\}$. 
    The set~$B(v)$ then excludes exactly one possible degree.
    In fact, this is the only case when~$B(x)$ does not include all possible degrees for a vertex~$x \in \calV \cup \calF$.
    We can therefore apply the algorithm by~\Sebo (see \cref{thm:generalized-factor-sebo}) and compute a~$B$-factor of~$A$ in time~$\calO(n^2)$.
\end{proof}

\begin{remark}
    The \GeneralizedFactor instance we constructed in the proof above can be reduced to an instance of \MaximumFlow.
    In theory\footnote{The algorithm in~\cite{Chen2023_AlmostLinearTime} is randomized and not practical. Our combinatorial, but asymptotically slower, algorithm is preferable here.}, this achieves a better, namely almost-linear~\cite{Chen2023_AlmostLinearTime}, runtime.
    The graph of the flow instance is obtained by adding two vertices~$s$ and~$t$ (namely the source and sink) and connecting~$s$ to all vertices in~$\calV$ and~$t$ to all vertices in~$\calF$. 
    Edges incident to~$s$ are outgoing, edges incident to~$t$ are incoming.
    Edges between~$\calV$ and~$\calF$ are oriented from $\calV$ to~$\calF$.
    The edge capacities of edges~$sx$ with $x \in \calV$ encode the sets $B(x)=\{0,1,2,\dots, 3-\deg_G(x)\}$.
    This is achieved by setting the edge capacity to~$3-\deg_G(x)$. 
    All other edges have a capacity of~$1$.
    It can be easily verified that the obtained graph admits an $s$--$t$-flow of value at least~$\abs{\calF}$ if and only if~$A$ admits a~$B$-factor.
\end{remark}

\section{\NP-Hardness for 3-Connected 3-Augmentations}
\label{sec:3con_3aug_variable_embedding}

In this section, we prove that deciding whether a given planar graph~$G$ admits a $3$-connected $3$-augmentation is \NP-complete. In particular, we show that the problem remains \NP-complete when restricted to connected graphs~$G$. This implies the \textsf{NPC}-results represented in the fourth column of the table in~\cref{fig:overview}, corresponding to Statement~\ref{itm:main_theorem_3con_variable} of \cref{thm:main_theorem}.

We reduce from the \NP-complete problem \textsc{Planar-Monotone-3SAT}~\cite{deBerg2010_Planar3SAT}.
In intermediate steps of the proof, we obtain graphs with vertices of degree greater than~$3$. 
As we are interested in $3$-connected $3$-augmentations, we need a transformation that replaces such a vertex by a small gadget in order to obtain a maximum degree of at most~$3$ while preserving $3$-connectivity. 
This is realized by \emph{wheel-extensions}.
Consider a cycle~$C_{\ell}$ of length~$\ell \geq 3$ and a path~$P_2$.
We denote by $C_{\ell} \square P_2$ their product. 
For an integer~$\ell \geq 3$, let~$W_\ell$ be the graph obtained from $C_\ell \square P_2$ by subdividing each edge in one cycle~$C_\ell$ exactly once.
See \cref{fig:wheel_extension_wheel} for an illustration.
Consider a planar graph~$G$ with an embedding~$\calE$, and a vertex~$v \in V(G)$ with~$\deg_G(v) = \ell \geq 3$.
A \emph{wheel-extension at~$v$} is the graph and embedding obtained by replacing~$v$ with~$W_\ell$, and by attaching $v$'s incident edges to the subdivision vertices of~$W_\ell$ in a one-to-one non-crossing way.
See \cref{fig:wheel_extension_extension}.

\begin{figure}[htb]
    \centering
    \begin{subfigure}{0.3\textwidth}
        \centering
        \includegraphics[page=2]{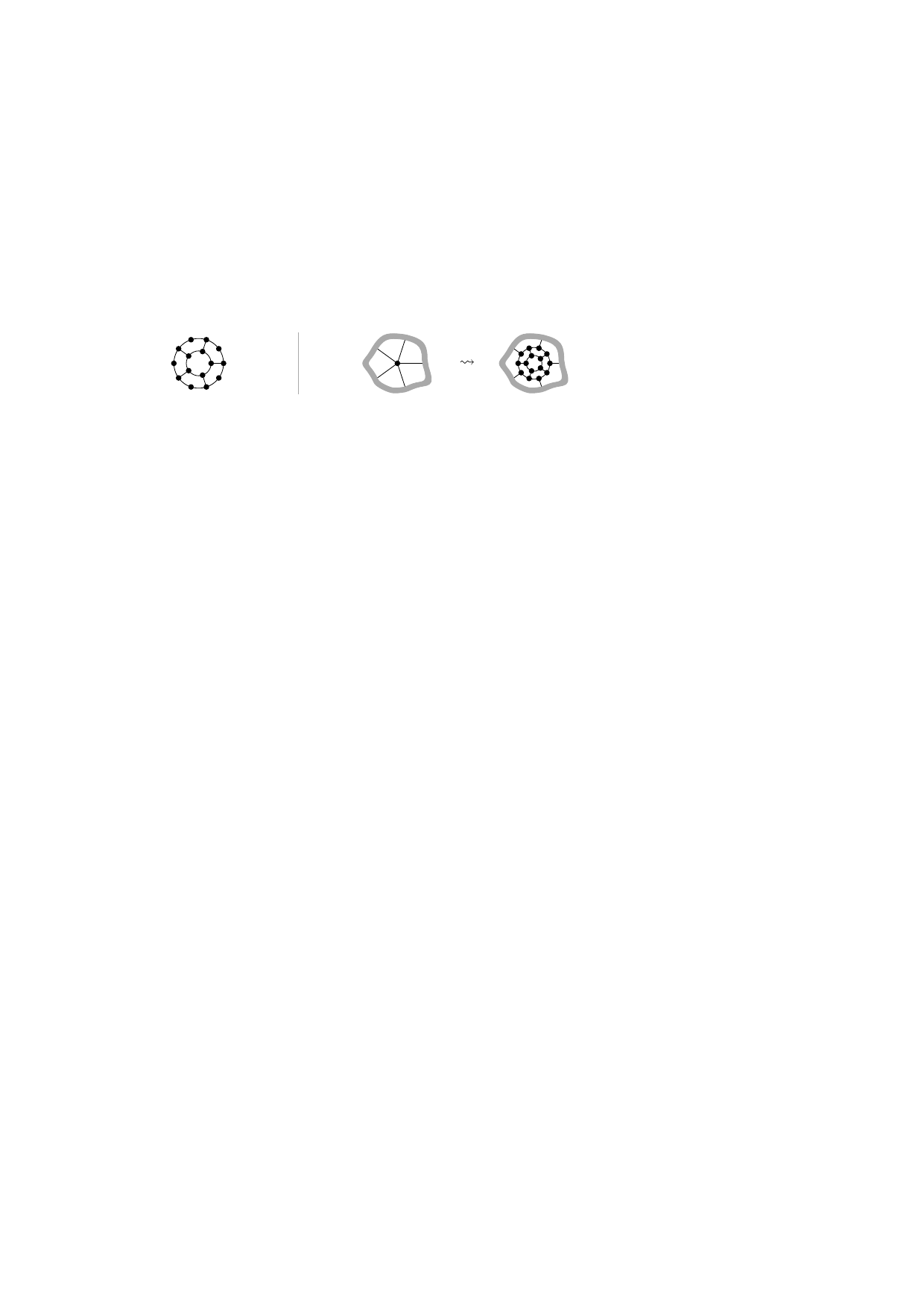}
        \caption{}
        \label{fig:wheel_extension_wheel}
    \end{subfigure}
    \hspace{4mm}
    \begin{subfigure}{0.6\textwidth}
        \centering
        \includegraphics[page=3]{figures_journal/wheel-extension.pdf}
        \caption{}
        \label{fig:wheel_extension_extension}
    \end{subfigure}
    \caption{
        \subref{fig:wheel_extension_wheel} The graph~$W_5$ obtained from $C_5 \square P_2$ through subdivision.
        \subref{fig:wheel_extension_extension} A Wheel-extension.
    }
    \label{fig:wheel_extension}
\end{figure}

\begin{observation}
    \label{obs:wheel_extension}
    Let~$G$ be a graph (possibly with multi-edges, but no loops), let $v \in V(G)$ be a vertex with~$\deg_G(v) \geq 3$, and let~$G'$ be obtained from~$G$ by a wheel-extension at~$v$.
    Then~$\theta(G') \geq \min\{\theta(G), 3\}$.
\end{observation}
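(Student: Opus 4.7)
The plan is to prove $\theta(G') \geq k$, where $k := \min\{\theta(G), 3\}$, by showing that no edge set $F \subseteq E(G')$ with $|F| \leq k-1 \leq 2$ can disconnect $G'$. I would suppose, toward a contradiction, that $(A,B)$ is a non-trivial partition of $V(G')$ whose crossing edge set is exactly $F$, and split into two cases depending on whether $V(W_\ell)$ lies entirely on one side of the partition.

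\emph{Case 1}: $V(W_\ell) \subseteq A$ (the case $\subseteq B$ is symmetric). Then $F$ contains no edge internal to $W_\ell$, and its edges correspond bijectively to edges of $G$ --- either edges of $G-v$ crossing the partition, or edges at $v$ whose other endpoint lies in $B$ (via the identification of external half-edges of $W_\ell$ with the former edges at $v$). Setting $A' := (A \setminus V(W_\ell)) \cup \{v\}$, the pair $(A',B)$ is a non-trivial partition of $V(G)$ (as $A' \ni v$ and $B \neq \emptyset$) whose $G$-cut has exactly $|F|$ edges, contradicting $|F| < k \leq \theta(G)$.

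\emph{Case 2}: $V(W_\ell)$ meets both $A$ and $B$. Here I would invoke the structural claim that every non-trivial edge cut of $W_\ell$ of size at most $2$ isolates a single subdivision vertex. This follows by contracting each $s_i$ with a neighbor $b_i$: the result is the prism $C_\ell \square P_2$, which is well known to be $3$-edge-connected for $\ell \geq 3$. Any partition of $V(W_\ell)$ not of the claimed form can be modified by rerouting mismatched $s_i$ to the side containing $b_i$ without increasing the cut size, and then descends to a non-trivial cut of the prism, forcing cut size at least $3$. Hence, up to swapping sides, $A \cap V(W_\ell) = \{s_i\}$ for some subdivision vertex, both $W_\ell$-edges at $s_i$ lie in $F$, and consequently $|F| = 2$ and $k = 3$. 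Moreover the third $G'$-edge at $s_i$, the external edge $s_i u$, cannot lie in $F$ (else $|F| \geq 3$), so $u \in A$.

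The final step is to reduce Case~2 to Case~1. Shifting $s_i$ from $A$ to $B$ yields the partition $(A^*,B^*) := (A \setminus \{s_i\}, B \cup \{s_i\})$, which is non-trivial (since $u \in A^*$), satisfies $V(W_\ell) \subseteq B^*$, and whose crossing edge set has size $|F| - 2 + 1 = 1$. Applying the symmetric version of Case~1 then produces an edge cut of $G$ of size $1 < 3 \leq \theta(G)$, the desired contradiction. The principal obstacle is the structural claim on $W_\ell$; establishing it cleanly via contraction to the prism, rather than through brute case analysis on all small subsets of $V(W_\ell)$, is what keeps the argument short.
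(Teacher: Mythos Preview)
Your proof is correct and follows essentially the same approach as the paper's: both arguments use the $3$-edge-connectivity of the prism $C_\ell \square P_2$ to show that a small edge-cut of $G'$ cannot split $V(W_\ell)$ except by isolating a single subdivision vertex, and then transfer the cut to $G$ by contracting $W_\ell$ back to $v$. The paper's version is terser---it phrases your Case~2 reduction as the observation that a \emph{minimum} cut $S$ cannot consist of the two $W_\ell$-edges at a subdivision vertex (your shift operation exhibits a strictly smaller cut), and then jumps directly to $S \cap E(W_\ell) = \emptyset$---but the logical content is the same.
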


\begin{proof}
    If~$\theta(G') \leq 2$ (otherwise there is nothing to show), let~$S$ be an edge-cut of size~$\theta(G') \leq 2$ in~$G'$.
    As~$S$ is minimal,~$S$ does not consist of the two edges at a subdivision vertex of~$W_\ell$.
    Thus, as~$C_\ell \square P_2$ is~$3$-connected, it follows that~$S \cap E(W_\ell) = \emptyset$.
    But then,~$S$ is also an edge-cut in~$G$ and hence~$\theta(G) \leq |S| = \theta(G')$, as desired.
\end{proof}

Recall that an embedding of any $3$-connected $3$-augmentation~$H$ induces an embedding~$\calE$ of~$G$. For convenience, we call such a pair~$(H, \calE)$ a \emph{solution} for~$G$.
Let us also define a \emph{$(\leq2)$-subdivision} of a graph~$R$ to be the result of subdividing each edge in~$R$ with up to two vertices.
Note that, if~$R$ is $2$-connected, then so is every $(\leq2)$-subdivision of~$R$.

\begin{lemma}
    \label{lem:3con_fixed_augmentation}
    Let~$G$ be a graph obtained from a $(\leq2)$-subdivision~$R_2$ of a $3$-connected planar graph~$R$ by attaching a degree-$1$ vertex to each subdivision vertex.
    Then~$G$ admits a solution~$(H, \calE)$ if and only if no face of~$\calE$ has exactly one or two incident degree-$1$ vertices.
\end{lemma}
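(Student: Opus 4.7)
The plan is to establish both directions of the equivalence. The key preliminary observation is that the only vertices of~$G$ of degree less than~$3$ are the pendants: every original vertex of~$R$ has degree~$3$ (forced by $\delta(R) \geq 3$ from $3$-connectedness together with $\Delta(G) \leq 3$), and every subdivision vertex already has degree~$2+1 = 3$ in~$G$. Hence, in any candidate solution~$(H, \calE)$, all new edges and vertices must lie inside the faces of~$\calE$ and can attach to the rest of~$H$ only through the pendants incident to that face.

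For the forward direction, I would fix a solution~$(H, \calE)$ and rule out $|P_f| \in \{1, 2\}$ for every face~$f$, where $P_f$ denotes the pendants of~$G$ incident to~$f$. If $|P_f| = 1$ with pendant~$p$, then the two missing edges at~$p$ must enter at least one new vertex in the interior of~$f$ (no other $G$-vertex in~$f$ has free capacity), so $\{p\}$ is a cut vertex separating that new vertex from the rest of~$H$. If $|P_f| = 2$ with pendants~$p_1, p_2$ having subdivision parents~$s_1, s_2$, then either a new vertex~$v$ lies inside~$f$, in which case $\{p_1, p_2\}$ is a $2$-cut separating~$v$ from the rest; or no new vertices are added and the four missing half-edges must form a double edge between~$p_1$ and~$p_2$, in which case $\{s_1, s_2\}$ is a $2$-cut isolating $\{p_1, p_2\}$. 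Either case contradicts the $3$-connectedness of~$H$.

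For the backward direction, I would construct~$H$ from a given embedding~$\calE$ satisfying the face condition. Since~$R_2$ is a subdivision of the $2$-connected graph~$R$, every face boundary of~$\calE$ is a simple cycle, so the pendants incident to a face~$f$ have a well-defined cyclic order $p_1, \ldots, p_k$. In each face with $k \geq 3$ pendants, I would add the cycle $p_1 p_2 \cdots p_k p_1$ of new edges drawn inside~$f$. The result is planar, extends~$\calE$, and is $3$-regular, since every pendant gains exactly two edges while all other vertices keep their degrees.

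The main obstacle is showing that this~$H$ is actually $3$-connected. I plan to argue by contradiction: assume $\{x, y\}$ is a $2$-vertex-cut of~$H$ and derive that $H - \{x, y\}$ must in fact be connected. The central routing argument is that every non-$R$-vertex of $H - \{x, y\}$ reaches some vertex in $V(R) \setminus \{x, y\}$: either directly along its surviving subdivision path, or by traversing the added pendant cycle in its face to a pendant whose subdivision vertex lies on an edge of~$R$ other than the (at most one) edge containing both~$x$ and~$y$. The hypothesis $|P_f| \geq 3$ whenever $P_f \neq \emptyset$ is exactly what ensures such an alternative pendant always exists, since a single edge of~$R$ subdivided at most twice contributes at most two pendants to a face. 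Combined with the $3$-connectedness of~$R$, which connects all of $V(R) \setminus \{x, y\}$ in $R - \{x, y\}$, this will show $H - \{x, y\}$ is connected, contradicting that $\{x, y\}$ is a $2$-cut.
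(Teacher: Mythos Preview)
Your proposal is correct, and the forward direction matches the paper's argument (your case analysis for $|P_f|=2$ is in fact slightly more careful than the paper, which tacitly relies on simplicity of~$H$ to rule out the double-edge case).

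The backward direction is a genuinely different route. The paper does \emph{not} add a cycle through the pendants. Instead, in each face~$f$ with $\ell\geq 3$ pendants it identifies all of them into a single vertex~$v_f$ of degree~$\ell$, obtaining an intermediate graph~$H_1$, and then applies a wheel-extension at each~$v_f$ to restore $3$-regularity. The $3$-connectivity proof is then an edge-connectivity argument in the planar dual: one checks that~$H_1^*$ has no loops (using $2$-connectedness of~$R_2$) and no parallel edges (using $3$-connectedness of~$R$ together with the ``at most two subdivisions per edge'' bound), so~$\theta(H_1)\geq 3$; the wheel-extension observation then transfers this to~$H$. Your approach trades this for a direct vertex-connectivity argument on the pendant cycles. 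What your version buys is that no auxiliary construction (wheel-extension) or dual reasoning is needed; what the paper's version buys is that the dual argument is short and avoids the case analysis over the possible types of~$x,y$ (your sketch implicitly assumes $x,y$ lie on a common subdivided edge, and the remaining cases---$x$ or~$y$ a pendant, or $x,y$ on different edges---would each need a line or two, though they all go through by the same mechanism). Both proofs use the hypothesis $|P_f|\geq 3$ at the same point: to guarantee a pendant in~$f$ whose subdivision vertex sits on an $R$-edge other than the one potentially blocked by $\{x,y\}$.
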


\begin{proof}
    First, assume that~$(H, \calE)$ is a solution for~$G$.
    Assume for the sake of contradiction that~$f$ is a face of~$\calE$ incident to a set~$S$ of exactly one or two degree-$1$ vertices of~$G$.
    As~$H$ is $3$-regular, each vertex in~$S$ is incident to two new edges in~$f$.
    But then,~$S$ forms a vertex-cut of cardinality at most~$2$ in~$H$; a contradiction to~$H$ being $3$-connected.

    For the other direction, let~$\calE$ be an embedding of~$G$ in which no face has exactly one or two incident degree-$1$ vertices.
    Our task is to find a solution~$(H, \calE)$ for~$G$, i.e., to insert new vertices and new edges into the faces of~$\calE$ to obtain a $3$-connected $3$-regular planar graph~$H$.
    
    To this end, consider any face~$f$ of~$\calE$.
    If~$f$ has no incident degree-$1$ vertices of~$G$, we insert nothing in~$f$.
    Otherwise,~$f$ has at least~$\ell \geq 3$ incident degree-$1$ vertices, and we identify all these vertices into one vertex~$v_f$ of degree~$\ell$.
    Let~$H_1$ be the planar graph we obtain by doing this for all faces of~$\calE$.
    Clearly,~$H_1$ is planar, $\delta(H_1) \geq 3$, and~$R_2 \subset H_1$.
    
    We claim that~$H_1$ is $3$-edge-connected, i.e.,~$\theta(H_1) \geq 3$.
    First,~$H_1$ is connected, as~$R_2$ is connected.
    It remains to show that the plane dual~$H_1^*$ of~$H_1$ has no loops (i.e.,~$H_1$ has no bridges) and no pairs of parallel edges (i.e.,~$H_1$ has no $2$-edge-cuts).
    For this, consider any edge~$e^*$ of~$H_1^*$ and its primal edge~$e$ of~$H_1$.
    If~$e \notin E(R_2)$, then~$e$ is incident to a vertex~$v_f$ of degree~$\ell \geq 3$ in a face~$f$ of~$\calE$.
    In this case~$e^*$ is neither a loop nor has a parallel edge in~$H_1^*$.

    If~$e \in E(R_2)$, then~$e^*$ is not a loop, since~$R_2$ is $2$-connected.
    It remains to rule out that two edges~$e_1, e_2 \in E(R_2)$ form a $2$-edge-cut, i.e., their dual edges~$e_1^*,e_2^*$ in~$H_1^*$ are parallel.
    Let~$f, f'$ be the two faces of~$\calE$ incident to~$e_1$ and~$e_2$.
    As~$R$ is $3$-connected,~$e_1$ and~$e_2$ both originate from the subdivision(s) of the same edge~$e_R$ of~$R$.
    Consider a subdivision vertex of~$R_2$ between~$e_1$ and~$e_2$.
    Let~$v$ be its new neighbor in~$H_1$; say~$v = v_f$ for face~$f$.
    Then~$v_f$ has at least two further neighbors, at least one of which is not a subdivision vertex of~$e_R$, because~$e_R$ is subdivided at most twice.
    But then in the dual~$H_1^*$, the edges~$e^*_1$ and~$e^*_2$ are incident to different vertices inside~$f$; hence are not parallel.

    Finally, we apply a wheel-extension to every vertex~$v_f$, resulting in a planar $3$-regular graph~$H$.
    Further,~$H$ contains~$G$ as a subgraph and \cref{obs:wheel_extension} yields~$\theta(H) \geq \min(\theta(H_1),3) = 3$.
    In other words,~$H$ is the desired $3$-connected $3$-augmentation of~$G$.
\end{proof}

By \cref{lem:3con_fixed_augmentation}, any graph~$G$ as described in the lemma admits a $3$-connected $3$-augmentation if and only if it admits an embedding~$\calE$ with no face incident to exactly one or two degree-$1$ vertices.
Testing such graphs for such embeddings, however, turns out to be \NP-complete.

\begin{theorem}
    \label{thm:NP_complete_3conn_variable}
    Deciding whether a given graph is a subgraph of a $3$-regular $3$-connected planar graph is \NP-complete.
\end{theorem}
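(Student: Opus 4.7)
Membership in \NP is immediate, since a $3$-regular $3$-connected planar supergraph~$H$ of~$G$ together with one of its planar embeddings constitutes a polynomial-size certificate that can be verified in polynomial time. For hardness, the plan is to reduce from \textsc{Planar-Monotone-3SAT}~\cite{deBerg2010_Planar3SAT} and to produce instances that belong to the special family characterized by \cref{lem:3con_fixed_augmentation}: given a monotone planar 3-CNF formula~$\varphi$ (variables on a horizontal line, positive clauses above, negative clauses below), I would construct a $3$-connected planar graph~$R$, pass to a $(\leq 2)$-subdivision~$R_2$ with a degree-$1$ vertex attached to every subdivision vertex, and obtain the graph~$G$ so that $\varphi$ is satisfiable if and only if~$G$ admits an embedding in which no face contains exactly one or two degree-$1$ vertices. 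By \cref{lem:3con_fixed_augmentation}, the latter is equivalent to~$G$ being a subgraph of a $3$-regular $3$-connected planar graph.

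The construction would use two gadget types assembled into~$R$. Since~$R$ is $3$-connected, its embedding is unique up to flipping by Whitney's theorem; the remaining freedom consists solely of choosing, for each subdivision vertex of~$R_2$, on which side of its edge the pendant sits. For each variable~$x_i$ I would use a variable gadget where these local choices collapse to exactly two globally valid configurations, interpreted as $x_i = \textsf{true}$ and $x_i = \textsf{false}$. The gadget emits one pendant into the clause face~$f_{C_j}$ of a positive clause~$C_j$ containing~$x_i$ exactly when $x_i = \textsf{true}$, and symmetrically into~$f_{C_j}$ of a negative clause containing~$x_i$ exactly when $x_i = \textsf{false}$. For each clause~$C_j$ I would use a clause gadget that forces a baseline of exactly two pendants into~$f_{C_j}$. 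Then the total number of pendants in~$f_{C_j}$ equals $2 + (\text{number of satisfied literals of~}C_j)$, which avoids the forbidden values~$1$ and~$2$ if and only if at least one literal of~$C_j$ is satisfied. Consequently, all clause faces simultaneously satisfy the condition of \cref{lem:3con_fixed_augmentation} if and only if~$\varphi$ is satisfiable.

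The main obstacle will be to carry this out while (a) preserving $3$-connectivity of~$R$ and (b) guaranteeing that no \emph{auxiliary} face of any gadget ends up with exactly one or two pendants in either of the valid truth configurations, since otherwise a satisfying assignment could still yield an infeasible embedding. For~(a) I would surround the whole construction with a rigid triangulated scaffolding that shares boundary edges with every gadget, so that any candidate $2$-vertex separator of~$R$ can be ruled out by exhibiting three internally vertex-disjoint paths through the scaffolding. For~(b) the internal faces of each variable gadget must be engineered so that the coupled pendants always fill them to at least three in both truth configurations, and the clause gadget's baseline pendants plus the adjacent structure must similarly balance the counts on every face other than the distinguished~$f_{C_j}$. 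Each gadget has constant size and there are~$\calO(\abs{\varphi})$ of them, so~$G$ has polynomial size and the reduction runs in polynomial time, which completes the proof.
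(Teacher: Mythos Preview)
Your hardness outline matches the paper's approach closely: reduce from \textsc{Planar-Monotone-3SAT}, build a $(\leq 2)$-subdivision of a $3$-connected cubic planar graph~$R$ with pendants at the subdivision vertices, and invoke \cref{lem:3con_fixed_augmentation}. The paper's concrete realization uses, in addition to variable and clause gadgets, \emph{splitter gadgets} and alternating red/blue ``corridors'' that propagate each variable's truth value to all of its clause occurrences; the ``at most five subdivision vertices per adjacent red/blue face pair'' is what forces consistency along a corridor. Your sketch does not yet say how a single variable gadget reaches several distinct clause faces while maintaining a globally coupled two-state choice, and this is precisely where the nontrivial engineering (and obstacle~(b) you identify) lives. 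So the plan is right, but the missing ingredient is the corridor/splitter mechanism that turns a local two-way flip into a globally consistent assignment.

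There is, however, a genuine gap in your \NP argument. You assert that a $3$-regular $3$-connected planar supergraph~$H$ is a polynomial-size certificate, but nothing in the problem statement bounds~$|V(H)|$ in terms of~$|V(G)|$; a priori~$H$ could be arbitrarily large. The paper handles this explicitly: given any $3$-connected $3$-augmentation~$H$, contract each connected component of~$H - V(G)$ to a single vertex (preserving planarity and $3$-edge-connectivity), then apply wheel-extensions to restore $3$-regularity. The resulting graph~$H''$ has every vertex within distance~$3$ of some vertex of~$G$, hence $|V(H'')| = \calO(|V(G)|)$. Without this normalization step, membership in \NP is not established.
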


\begin{proof}
    First, we show that the problem is in~\NP.
    Let~$G$ be a graph that admits a $3$-connected $3$-augmentation~$H$.
    We need to show that~$G$ also admits a $3$-connected $3$-augmentation whose size is polynomial in the size of~$G$.
    To this end, consider the subgraph~$N$ of~$H$ induced by all new vertices.
    In~$H$, contract each connected component of~$N$ into a single vertex, keeping parallel edges but removing loops.
    The resulting graph~$H'$ is planar and $3$-edge-connected, since so was~$H$.
    Note in particular that the vertices we obtained by contraction have degree at least~$3$. 
    Next, we apply a wheel-extension to each vertex obtained from the contractions that has degree larger than~$3$.
    By \cref{obs:wheel_extension}, the resulting graph~$H''$ is~$3$-regular and $3$-edge-connected, hence also~$3$-connected; in particular, a~$3$-connected $3$-augmentation of~$G$.
    Moreover, each vertex in~$H''$ has distance at most three to some vertex of~$G$ and the maximum degree is bounded by~$3$, and thus~$H''$ has only~$O(|V(G)|)$ many vertices.
    Thus, our decision problem is in~\NP. 
    
    \medskip
    \noindent
    To show \NP-hardness, we reduce from \textsc{Planar-Monotone-3SAT}.
    An instance of \textsc{Planar-Monotone-3SAT} is a monotone \textsc{3SAT}-formula~$\Psi$ together with its bipartite variable-clause incidence graph~$I_\Psi$ and a planar embedding~$\calE_\Psi$ of $I_\Psi$.
    Each clause in~$\Psi$ contains either only positive literals (then the clause is called \emph{positive}) or only negative literals (the clause is called \emph{negative}).
    Moreover,~$\Psi$ and the given embedding~$\calE_\Psi$ satisfy the following:
    \begin{itemize}
        \item Each variable lies on the $x$-axis and no edge crosses the $x$-axis.
        \item Positive clauses lie above the $x$-axis, negative clauses lie below the $x$-axis.
        \item Each clause has at most three literals.
    \end{itemize}
    It is known that \textsc{Planar-Monotone-3SAT} is \NP-complete~\cite{deBerg2010_Planar3SAT}.
    Note that the problem remains \NP-hard if we assume that each clause contains exactly three literals and that each variable appears in at least one positive and at least one negative clause.
    The latter can be achieved by the following reduction rule:
    Any variable that appears only in positive (negative) form can be set to \textsc{true} (\textsc{false}).

    \begin{figure}[tb]
        \centering
        \begin{subfigure}[b]{0.25\textwidth}
            \centering
            \includegraphics[page=6]{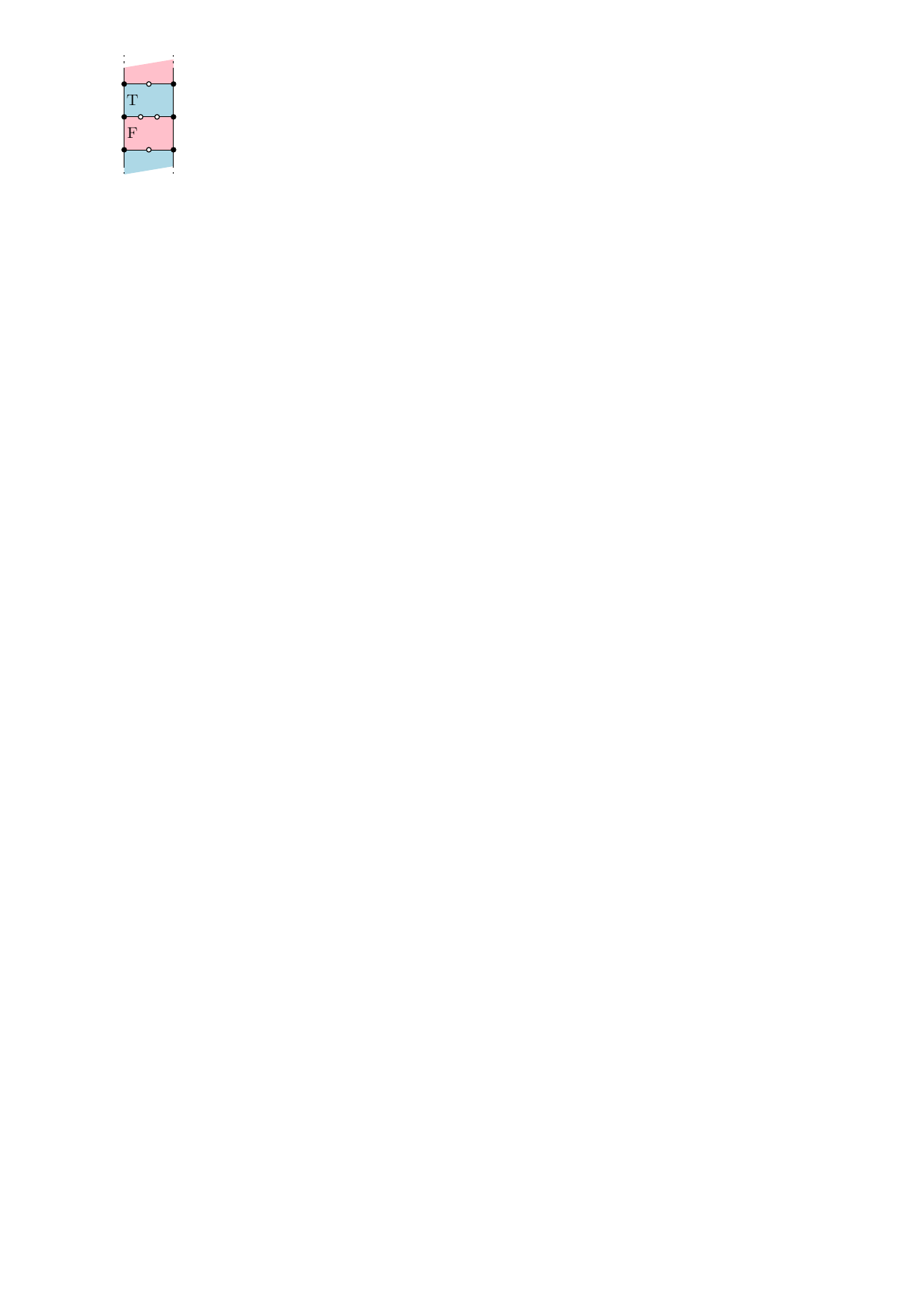}
            \caption{variable gadget}
            \label{fig:3SAT_gadgets_variable}
        \end{subfigure}
        \hfill
        \begin{subfigure}[b]{0.4\textwidth}
            \centering
            \includegraphics[page=7]{figures_journal/3SAT-gadgets}
            \caption{positive splitter gadget}
            \label{fig:3SAT_gadgets_splitter}
        \end{subfigure}
        \hfill
        \begin{subfigure}[b]{0.3\textwidth}
            \centering
            \includegraphics[page=8]{figures_journal/3SAT-gadgets}
            \caption{positive clause gadget}
            \label{fig:3SAT_gadgets_clause}
        \end{subfigure}
        \caption{Gadgets used in the \NP-hardness reduction.}
        \label{fig:3SAT_gadgets}
    \end{figure}

    Next, we construct a planar graph~$G_\Psi$ that admits a $3$-connected $3$-augmentation if and only if there exists a truth assignment of the variables in~$\Psi$ satisfying all clauses.
    The graph~$G_\Psi$ is obtained from the embedding~$\calE_\Psi$ of~$I_\Psi$ using the gadgets illustrated in \cref{fig:3SAT_gadgets}.
    In particular, we replace each variable by a copy of the variable gadget in \cref{fig:3SAT_gadgets_variable}.
    As illustrated by the arrows in \cref{fig:3SAT_gadgets}, these gadgets form the starting point of one \emph{upper} and one \emph{lower corridor} for each variable.
    Each corridor consists of alternating red (standing for false) and blue (standing for true) faces.
    Above the $x$-axis, for each variable with~$k$ occurrences in positive clauses, we use~$k-1$ positive splitter gadgets, see \cref{fig:3SAT_gadgets_splitter}, to split its upper corridor into~$k$ upper corridors.
    Then we replace each positive clause by a copy of the positive clause gadget in \cref{fig:3SAT_gadgets_clause}, which forms the end of one corridor for each variable appearing in the clause.
    These corridors are routed without overlap and entirely above the $x$-axis by following the given embedding~$\calE_\Psi$ of~$I_\Psi$.
    We proceed symmetrically below the $x$-axis, with red and blue swapped, using otherwise isomorphic negative splitter gadgets and negative clause gadgets.
    See \cref{fig:3SAT_example} for a full example.
    
    \begin{figure}[tb]
        \centering
        \includegraphics[page=1]{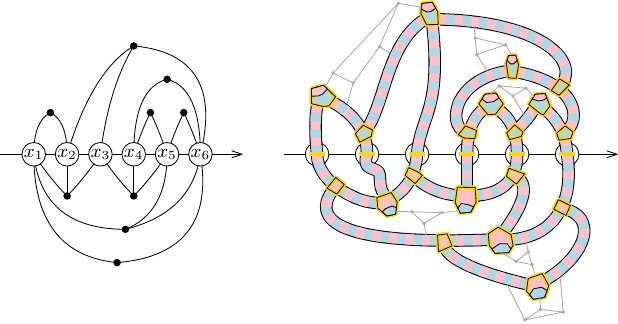}
        \caption{
            Illustration of a \textsc{Planar-Monotone-3SAT} embedding~$\calE_\Psi$ and a corresponding graph~$G_\Psi$.
            Extra vertices and edges added for the $3$-connectivity of~$R$ are shown in gray.
        }
        \label{fig:3SAT_example}
    \end{figure}
    
    The resulting graph is a $(\leq2)$-subdivision of a $3$-regular planar graph~$R$.
    We refer to the vertices of~$R$ as black vertices, and the subdivision vertices as white vertices.
    Each of the white (subdivision) vertices is incident to one red and one blue face, while each black vertex (of~$R$) is incident to an \emph{uncolored} (neither red nor blue) face.
    By adding additional vertices into uncolored faces and connecting these to incident edges, we modify~$R$ to obtain a $3$-connected $3$-regular planar graph~$R'$ which still has a $(\leq2)$-subdivision~$R_2$ including all gadgets and corridors.
    As~$R'$ is $3$-connected, the plane embedding of~$R_2$ is unique (up to the choice of the outer face).
    Finally, we attach a degree-$1$ vertex to each subdivision vertex, which completes the construction of~$G_\Psi$.
    See \cref{fig:3SAT_gadgets_detail}.

    \begin{figure}[tb]
        \centering
        \includegraphics[page=10]{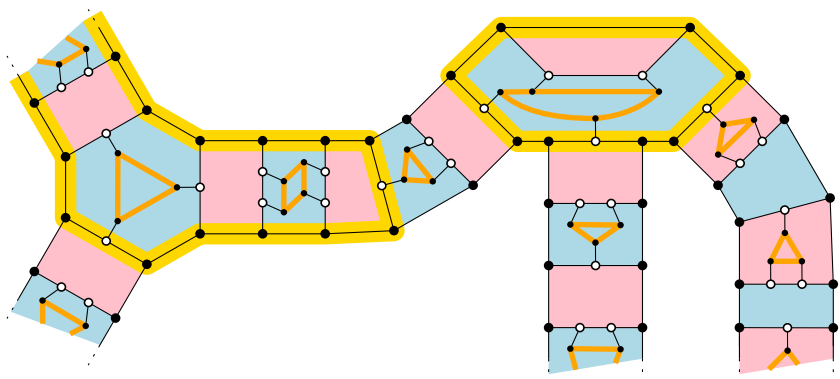}
        \caption{
            Part of the graph~$G_\Psi$ together with a $3$-connected $3$-augmentation in orange.
            This corresponds to a clause with two true variables (left with a splitter gadget, and middle) and one false variable (right).
        }
        \label{fig:3SAT_gadgets_detail}
    \end{figure}

    By \cref{lem:3con_fixed_augmentation},~$G_\Psi$ admits a $3$-connected $3$-augmentation if and only if~$G_\Psi$ admits an embedding~$\calE$ in which no face has exactly one or two incident degree-$1$ vertices.
    Since the embedding of the subgraph~$R_2$ of~$G_\Psi$ is fixed, such embedding~$\calE$ exists if and only if for each subdivision vertex we can choose either the incident red or the incident blue face in such a way that no face is chosen exactly once or twice.
    Except for the two highlighted faces in the clause gadgets, any pair of neighboring red and blue faces has in total at most five subdivision vertices.
    Thus, for each variable either all blue faces in all corridors are chosen, corresponding to the variable being set to true, or all red faces in all corridors are chosen, corresponding to the variable being set to false.
    In each positive clause gadget, the highlighted red face has only two incident subdivision vertices and hence cannot be chosen at all.
    Thus, the blue face of each positive clause gadget is chosen by two subdivision vertices, and thus must be chosen by at least one further subdivision vertex at the end of a variable corridor.
    This means that in each positive clause, at least one variable must be set to true.
    See again \cref{fig:3SAT_gadgets_detail} for an illustration.
    Symmetrically, at least one variable in each negative clause must be set to false, i.e., we have a satisfying truth assignment for~$\Psi$.
    In the same way, we obtain from a satisfying truth assignment for~$\Psi$ a valid choice for each subdivision vertex, i.e., an embedding of~$G_\Psi$ as required by \cref{lem:3con_fixed_augmentation}.

    To summarize, we obtain a planar graph~$G_\Psi$ that admits a $3$-connected $3$-augmentation if and only if the \textsc{Planar-Monotone-3SAT}-formula~$\Psi$ is satisfiable.
    The size of~$G_\Psi$ is polynomial in the size of~$\Psi$.
\end{proof}

\begin{remark}\label{rem:NPC-extension}
    The $(\leq2)$-subdivision in the above reduction behaves quite similar to~$G_\Psi$.
    The only problem is that a face~$f$ may have been chosen by exactly \emph{two} incident degree-$2$ vertices to contain their third (new) edge without creating a $2$-edge-cut; namely, with a direct edge.
    Thus, the above reduction also yields \NP-completeness of recognizing \emph{induced} subgraphs of $3$-connected $3$-regular planar graphs, even for~$2$-connected inputs with a unique embedding.
\end{remark}

\section{Discussion and Open Problems}
\label{sec:discussion}

Consulting the table in \cref{fig:overview}, our results show that for $k \leq 2$ finding $k$-connected $3$-augmentations is possible in polynomial time, both in the variable and the fixed embedding setting.
On the other hand, \cref{thm:NP_complete_3conn_variable} shows that finding $3$-connected $3$-augmentations is \NP-complete in the variable embedding setting, even if the input graph is connected.
The case of a fixed embedding and/or a $2$-connected input graph remains open.
We suspect these cases for $3$-connected $3$-augmentations to be \NP-complete as well.
For one thing, the graphs in our reduction in \cref{sec:3con_3aug_variable_embedding} are ``almost $2$-connected'' and have ``almost a unique embedding'', as discussed in \cref{rem:NPC-extension}.

\begin{figure}[ht]
    \centering
    \begin{subfigure}[b]{0.4\textwidth}
        \centering
        \includegraphics[page=2]{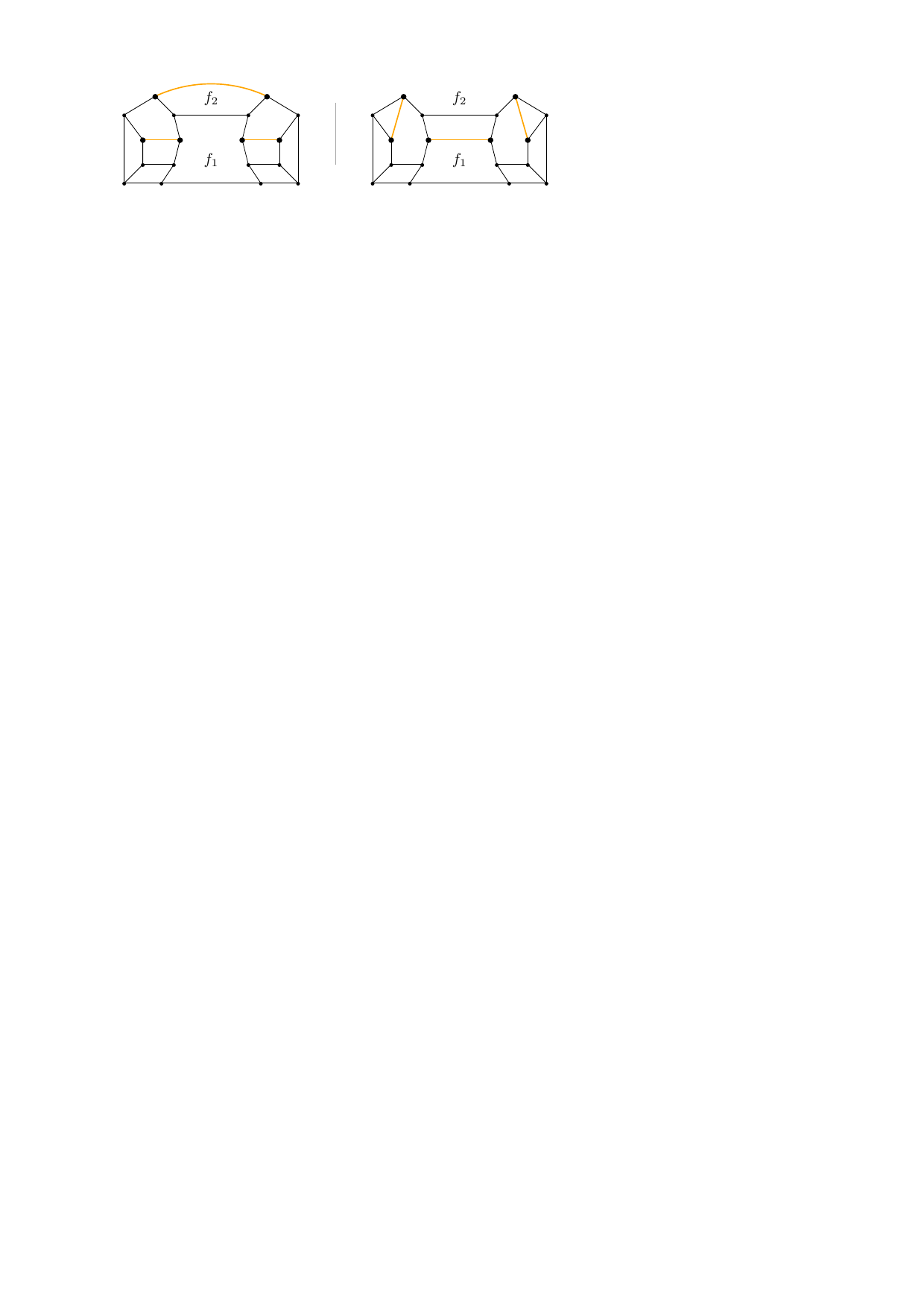}
        \caption{}
    \end{subfigure}
    \begin{subfigure}[b]{0.4\textwidth}
        \centering
        \includegraphics[page=3]{figures_journal/2-connected}
        \caption{}
    \end{subfigure}
        \caption{Two $3$-connected $3$-augmentations of the same $2$-connected graph with fixed embedding.}
        \label{fig:2-connected}
\end{figure}

Additionally, finding $3$-connected $3$-augmentations for fixed embedding seems to crucially require a coordination among the new edges, which cannot be modeled as a \GeneralizedFactor problem with gaps of length~$1$.
For example, if the input graph $G$ is $2$-connected (but not already $3$-connected) with a fixed embedding, then there is an edge-cut of size~$2$.
See \cref{fig:2-connected} for an illustration.
To establish $3$-connectivity in a $3$-augmentation $H \supseteq G$, we must connect both sides of the cut through one of the two incident faces $f_1,f_2$, requiring both sides to coordinate and agree on which of~$f_1,f_2$ to choose.

\medskip

Our work was motivated by the complexity of the \textsc{3-Edge-Colorability}-problem.
We showed how to test in polynomial time whether a planar graph~$G$ is a subgraph of some $2$-connected cubic planar graph.
This yields a polynomial-time algorithm to recognize subgraphs of bridgeless cubic planar graphs.
Recall that such subgraphs admit proper $3$-edge-colorings.
(This follows from the Four-Color-Theorem~\cite{Appel1977_4Color1,Appel1977_4Color2} and the work of Tait~\cite{Tait1880_Bridgeless}.)
However, there are $3$-edge-colorable planar graphs with no bridgeless $3$-augmentation; $K_{2,3}$ is an easy example.
For another class of examples, consider for instance any $3$-connected $3$-regular plane graph~$G$ (that is, the dual of a plane triangulation) and subdivide (with a new degree-$2$ vertex each) any set of at least two edges, where no two of these are incident to the same face of~$G$ (so their dual edges form a matching in the triangulation).
The resulting graph~$G'$ has only one embedding (up to the choice of the outer face and flipping) and clearly no bridgeless $3$-augmentation.
On the other hand, \cref{conj:Groetzsch} below predicts that~$G'$ is $3$-edge-colorable.

\medskip

The computational complexity of the \textsc{3-Edge-Colorability}-problem for planar graphs remains open, while it is known to be \NP-complete already for $3$-regular, but not necessarily planar, graphs~\cite{Holyer1981_EdgeColoring}.
One can easily show that a planar subcubic graph is $3$-edge-colorable if and only if all of its blocks (inclusion-maximal $2$-connected subgraphs) are $3$-edge-colorable, i.e., \textsc{3-Edge-Colorability} reduces to the $2$-connected case.
A simple counting argument shows that a $2$-connected subcubic graph~$G$ with exactly one degree-$2$ vertex is not $3$-edge-colorable (independent of whether~$G$ is planar or not).
The following conjecture, attributed to Gr\"{o}tzsch by Seymour~\cite{Seymour1981_TuttesExtension}, states that in the case of planar graphs, this is the only obstruction.

\begin{conjecture}[Gr\"{o}tzsch, cf.~\cite{Seymour1981_TuttesExtension}]
    \label{conj:Groetzsch}
    If~$G$ is a $2$-connected planar graph of maximum degree~$\Delta(G) \leq 3$, then~$G$ is $3$-edge-colorable, unless it has exactly one vertex of degree~$2$.
\end{conjecture}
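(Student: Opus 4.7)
Since this is Gr\"otzsch's conjecture, long open and presumably of a depth comparable to (or greater than) the Four-Color Theorem, my proposal can only be speculative; I describe a strategy that at least isolates the combinatorial core.

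The ``unless'' direction is an elementary parity argument. If a connected subcubic graph~$G$ has exactly one degree-$2$ vertex~$v$, then $|E(G)| = (3n-1)/2$, which forces $n$ to be odd. In any proper $3$-edge-coloring, the color class avoiding~$v$ would be a perfect matching on the $n-1$ degree-$3$ vertices, while each of the other two color classes is a matching on $n$ vertices of size at most $(n-1)/2$. Summing gives at most $3(n-1)/2 < (3n-1)/2$ colored edges, a contradiction, so the stated obstruction is genuine.

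For the forward direction I would induct on $|V(G)|+|E(G)|$, aiming to reduce to $2$-connected cubic planar graphs, which are $3$-edge-colorable by the Four-Color Theorem via Tait's theorem. The natural reductions are: suppressing a degree-$2$ vertex whose neighbors are also of degree~$2$, contracting chains of subdivisions, and -- whenever applicable -- replacing~$G$ by a $2$-connected cubic planar $3$-augmentation produced by Statement~\ref{itm:main_theorem_2con_variable} of \cref{thm:main_theorem}, since any $3$-edge-coloring of the augmentation restricts to one of~$G$. This already handles the entire class of inputs for which our algorithm succeeds.

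The main obstacle is the irreducible residue: $2$-connected subcubic planar graphs admitting \emph{no} $2$-connected $3$-augmentation, of which $K_{2,3}$ is the canonical small example and the subdivided-triangulation-duals described in \cref{sec:discussion} form an infinite family. For these the Four-Color Theorem is not directly available through augmentation, and previous attempts on Gr\"otzsch's conjecture have stalled precisely here. A convincing proof would presumably require either a discharging / reducible-configurations argument in the spirit of the Four-Color Theorem, or a structural theorem locating the degree-$2$ vertices well enough that their missing edges can be realized jointly as a $T$-join, or equivalently as a nowhere-zero $4$-flow, in the plane dual. Absent such an ingredient, the strategy reduces the problem but does not resolve it; my honest expectation is that this plan is a research program, not a completable proof.
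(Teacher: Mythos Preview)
The statement is a \emph{conjecture} in the paper, not a theorem; the paper offers no proof and explicitly presents it as open, noting that a resolution of \cref{quest:3-edge-colorability} would most likely also resolve it. So there is no ``paper's own proof'' to compare your proposal against, and your assessment that this is a research program rather than a completable proof is exactly right.

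Your parity argument for the ``unless'' direction is correct (and can in fact be shortened: once $n$ is odd, either color appearing at the degree-$2$ vertex would have to be a perfect matching on all $n$ vertices, which is already impossible). Your outline for the forward direction is a reasonable summary of how the paper's augmentation results interact with the conjecture --- the algorithm of Statement~\ref{itm:main_theorem_2con_variable} certifies $3$-edge-colorability whenever it succeeds --- and you correctly identify the residual class (e.g.\ $K_{2,3}$ and the subdivided-dual-triangulations) as the genuine obstacle. That is precisely where the paper leaves things as well.
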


Note that if \cref{conj:Groetzsch} is true, \textsc{3-Edge-Colorability} would be in \P, as its condition is easy to check in linear time.
Thus a full answer to our initial question, \cref{quest:3-edge-colorability}, would most likely also resolve \cref{conj:Groetzsch}.

\medskip

Finally, let us also briefly discuss planar graphs of maximum degree larger than~$3$.
Vizing conjectured in 1965 that all planar graphs of maximum degree $\Delta \geq 6$ are $\Delta$-edge-colorable, proving it only for $\Delta \geq 8$~\cite{Vizing1965_CrititalGraphs}.
As of today, it is known that all planar graphs of maximum degree $\Delta \geq 7$ are $\Delta$-edge-colorable~\cite{Zhang2000_PlanarDegree7,Grunewald2000_ChromaticIndex,Sanders2001_PlanarMaxDegree7}, and optimal edge-colorings can be computed efficiently in these cases.
The case $\Delta = 6$ is still open, while for $\Delta = 3,4,5$ there are planar graphs of maximum degree $\Delta$ that are not $\Delta$-edge-colorable~\cite{Vizing1965_CrititalGraphs}, and at least for $\Delta = 4,5$ the \textsc{$\Delta$-Edge-Colorability}-problem is suspected to be \NP-complete for planar graphs~\cite{Chrobak1990_EdgeColoringAlgorithms}.

Generalizing \cref{conj:Groetzsch}, Seymour's Exact Conjecture~\cite{Seymour1981_TuttesExtension} states that every planar graph~$G$ is $\lceil \eta'(G) \rceil$-edge-colorable, where~$\eta'(G)$ denotes the fractional chromatic index of~$G$.
It is worth noting that Seymour's Exact Conjecture implies Vizing's Conjecture, as well as the Four-Color-Theorem; see e.g., the survey~\cite{Cao2019_EdgeColoringSurvey}.

\bibliographystyle{plainurl}
\bibliography{bibliography}

\end{document}